\documentclass[tikz, a4paper, 12pt, oneside]{book}
\usepackage[left=2.5cm,right=2.5cm,top=2.5cm,bottom=2.5cm]{geometry}
\usepackage[spanish, english]{babel}

\usepackage[latin1]{inputenc}
\usepackage{graphicx}
\usepackage{amssymb}
\usepackage{amsthm}
\usepackage{amsmath}
\usepackage{makeidx}
\usepackage{color}
\usepackage{mathtools}
\usepackage{braket}
\usepackage{tikz}
\usepackage{float}
\usepackage{listings}
\usepackage{hyperref}
\usetikzlibrary{arrows.meta}

\usepackage{circuitikz}
\usetikzlibrary{shapes, arrows.meta, positioning, patterns}
\numberwithin{figure}{chapter}

\tikzset{
  sg/.style={draw, minimum width=1.2cm, minimum height=1.2cm, align=center},
  absorber/.style={draw, pattern=north east lines, minimum height=0.4cm, minimum width=1.2cm},
  measurement/.style={circle, draw, minimum size=0.4cm, inner sep=0pt},
  myarrow/.style={-{Latex}, thick}
}

\lstdefinestyle{custompython}{
    language=Python,
    backgroundcolor=\color{white},   
    commentstyle=\color{gray},
    keywordstyle=\color{blue}\bfseries,
    numberstyle=\tiny\color{gray},
    stringstyle=\color{green!60!black},
    basicstyle=\ttfamily\footnotesize,
    breaklines=true,
    captionpos=b,
    numbers=left,
    numbersep=5pt,
    showspaces=false,
    showstringspaces=false,
    frame=single,
    rulecolor=\color{black},
}

%%%%%%%%%%%%%%%%%%%%%%%%%%%%%%%%%%%%%%%%%%%%%%%%%%%%%%%%%%%%%%%%%%%%
\newtheorem{definition}{Definition}
\newtheorem{lemma}{Lemma}
\newtheorem{proposition}{Proposition}
\newtheorem{theorem}{Theorem}
\newtheorem{corollary}{Corollary}
\newtheorem{observation}{Observation}
\newtheorem{example}{Example}

\newtheorem*{problem}{Problem}
\newtheorem{postulate}{Postulate}
\newtheorem{remark}{Remark}
\newtheorem*{postulateprime}{Postulate 3\ensuremath{'}}

%añadir [chapter] al final de cada comando si se desea que la numeración vaya acorde al capítulo

%%%%%%%%%%%%%%%%%%%%%%%%%%%%%%%%%%%%%%%%%%%%%%%%%%%%%%%%%%%%%%%%%%%%%

%%%%%%%%%%%%%%%%%%%%%%%%%%%%%%%%%%%%%%%%%%%%%%%%%%%%%%%%%%%%%%%%%%%%%

\begin{document}
\pagenumbering{roman}

%-------------------------------------------------------------------------Inicio P\'{a}gina del T\'{\i}tulo
\title{A Rigorous Introduction to Hamiltonian Simulation via High-Order Product Formulas}
\author{Javier Lopez-Cerezo\\
Department of Applied Mathematics \\
University of Malaga} 
\maketitle
%-------------------------------------------------------------------------Fin P\'{a}gina del T\'{\i}tulo
%----------------P\'agina en blanco--------------------------------------------------------------
\newpage
\mbox{}
\thispagestyle{empty}
%----------------Fin p\'agina en blanco---------------------------------------------------

\tableofcontents

\pagebreak

\chapter*{Abstract}
\addcontentsline{toc}{chapter}{Abstract} 
This work provides a rigorous and self-contained introduction to numerical methods for Hamiltonian simulation in quantum computing, with a focus on high-order product formulas for efficiently approximating the time evolution of quantum systems. Aimed at students and researchers seeking a clear mathematical treatment, the study begins with the foundational principles of quantum mechanics and quantum computation before presenting the Lie-Trotter product formula and its higher-order generalizations. In particular, Suzuki's recursive method is explored to achieve improved error scaling. Through theoretical analysis and illustrative examples, the advantages and limitations of these techniques are discussed, with an emphasis on their application to $k$-local Hamiltonians and their role in overcoming classical computational bottlenecks. The work concludes with a brief overview of current advances and open challenges in Hamiltonian simulation.

\pagebreak

\chapter*{Acknowledgements}
% Your thanks section here
\addcontentsline{toc}{chapter}{Acknowledgements} % optional: include in TOC
This work stems from my Mathematics Bachelor's thesis, defended in June 2025 at the University of Malaga. I would like to express my sincere gratitude to my advisor, Carlos Pares, who welcomed such a niche and quantum-focused topic with curiosity and enthusiasm, and whose openness to exploring new topics made this project possible. His guidance, encouragement, and willingness to learn alongside me were invaluable throughout this process.

\pagebreak
 
\chapter*{Introduction}
\addcontentsline{toc}{chapter}{Introduction} 

One of the most impactful applications of computation lies in the simulation of physical systems. Such simulations are fundamental across numerous scientific and engineering disciplines, ranging from fluid dynamics and particle physics to cutting-edge fields like drug discovery and the design of novel materials. At the heart of these simulations is the solution of differential equations that mathematically encode the physical laws governing the dynamics of these systems.
\\\\
For example, Newton's second law of motion,
\begin{equation*}
m \frac{d^2 \mathbf{x}(t)}{dt^2} = \mathbf{F}\left(\mathbf{x}(t), \frac{d \mathbf{x}(t)}{dt}, t\right),
\end{equation*}
describes the trajectory of a particle of mass \(m\) subject to a force \(\mathbf{F}\). Similarly, the heat equation governs the diffusion of temperature in a medium:
\begin{equation*}
\frac{\partial u(\mathbf{x}, t)}{\partial t} = \alpha \nabla^2 u(\mathbf{x}, t) + Q(\mathbf{x}, t),
\end{equation*}
where \(u(\mathbf{x}, t)\) denotes the temperature field, \(\alpha\) the thermal diffusivity, \(Q(\mathbf{x}, t)\) internal heat sources or sinks and $\nabla^2u$ is the Laplacian of $u$.
\\\\
The primary goal in simulation is to determine the state of a system at a given time and/or position, starting from known initial conditions. To achieve this, the system's state is discretized in space and time and the governing differential equations are approximated using numerical methods. Through iterative computational procedures, the system evolves from its initial to its final state. Crucially, the numerical error introduced during this process must be carefully controlled and bounded. However, not all dynamical systems are equally amenable to classical simulation.
\\\\
When it comes to quantum systems, the situation becomes particularly challenging. The time evolution of many quantum systems is governed by the Schr\"{o}dinger equation,
\begin{equation*}
i\hbar \frac{d}{dt} |\psi(t)\rangle = H(t) |\psi(t)\rangle,
\end{equation*}
where \(|\psi(t)\rangle\) represents the quantum state at time $t$, $\hbar$ is Planck's constant  and \(H\) is the Hamiltonian operator associated with the system. In 1982, Richard Feynman \cite{feynman1982simulating} famously observed that simulating the full quantum dynamics of arbitrary systems on a classical computer quickly becomes intractable, as the dimension of the state space grows exponentially with the system size. For instance, a system of 40 spin-\(\frac{1}{2}\) particles requires storing \(2^{40} \approx 10^{12}\) complex numbers and simulating its time evolution involves manipulating matrices of size \(2^{40} \times 2^{40}\), which is beyond current classical computational capabilities.
\\\\
To overcome this exponential bottleneck, Feynman proposed the revolutionary idea of using one quantum system to simulate another directly, exploiting the fact that the simulator would evolve according to the same fundamental equations as those of the system being simulated. He conjectured the existence of universal quantum simulators capable of efficiently simulating any quantum system governed by local interactions. This conjecture was proven by Seth Lloyd in 1996 \cite{lloyd1996universal}, who demonstrated that quantum systems with local Hamiltonians can be efficiently simulated on a quantum computer using product formula techniques.
\\\\
In this work, we explore the mathematical foundations of quantum mechanics that describe quantum systems and investigate how the ideas introduced by Feynman and formalized by Lloyd form the basis for quantum computers as efficient simulators of quantum dynamics. Our goal is to provide an introduction to the understanding of how quantum computation has the potential to revolutionize the simulation of complex quantum systems, with significant implications for quantum chemistry, materials science and related fields.
\\\\
This work is organized as follows. Chapter 1 collects the essential linear algebra results required for the subsequent discussion. We assume familiarity with undergraduate-level linear algebra and thus omit proofs of standard results; references include the open-source text \cite{hefferon2021linear} and \cite{lax2013linear} for the specific results used here. Chapter 2 introduces the postulates of quantum mechanics and provides a high-level overview of quantum computation and its application to quantum simulation. Chapter 3 formalizes the topic of Hamiltonian simulation, presenting detailed treatments of the Lie-Trotter product formula, Suzuki's higher-order product formulas and includes a numerical validation of their properties implemented in Python. The corresponding code is included in the Appendix. Finally, Chapter 4 briefly discusses the state of the art and open questions in Hamiltonian simulation, highlighting current research challenges and directions.

\pagebreak
\chapter{Preliminaries}
\setcounter{page}{1}
\pagenumbering{arabic}

In this chapter, we adopt the standard notation of quantum mechanics to express linear algebraic concepts. In particular, we use Dirac's bra-ket notation: vectors are written as kets $\ket{\psi}$, their conjugate transposes as bras $\bra{\phi}$ and inner products as brackets $\braket{\phi|\psi}$.

\begin{definition}
A complex inner product space $V$ is:

\begin{enumerate}
    \item A complex vector space, that is,
    \[
    \ket{\psi},\ket{\varphi}\in V \text{ and } a,b\in\mathbb{C} \Rightarrow a\ket{\psi}+b\ket{\varphi} \in V,
    \]
    and the operations satisfy the standard vector space axioms.
    \item Equipped with an inner product
    \[
    (\cdot,\cdot):V \times V \longrightarrow \mathbb{C},
    \]
    \[
    (\ket{\psi},\ket{\varphi}) \longmapsto \langle\psi|\varphi\rangle,
    \]
    such that for all $\ket{\varphi}, \ket{\psi}, \ket{\varphi_1}, \ket{\varphi_2} \in V$ and $a, b \in \mathbb{C}$:
    \begin{align*}
    \langle\psi|\varphi\rangle &= \langle\varphi|\psi\rangle^*, \\
    \langle\psi|\psi\rangle &\geq 0, \\
    \langle\psi|\psi\rangle &= 0 \Leftrightarrow \ket{\psi} = 0, \\
    \langle\psi|a\varphi_1 + b\varphi_2\rangle &= a\langle\psi|\varphi_1\rangle + b\langle\psi|\varphi_2\rangle,
     \end{align*}
    
    and this inner product induces a norm

\[
\|\cdot\|: V \longrightarrow \mathbb{R}, \]
\[
\ket{\psi} \longmapsto \sqrt{\langle \psi | \psi \rangle}.
\]
   
\end{enumerate}
\end{definition}

\begin{proposition}
Let \((V, ( \cdot, \cdot ))\) be a finite-dimensional complex inner product space of dimension \(n\). Then \((V, ( \cdot, \cdot ))\) is isometrically isomorphic to \((\mathbb{C}^n, ( \cdot, \cdot )_0)\), where \(( \cdot, \cdot )_0\) denotes the standard inner product on \(\mathbb{C}^n\) defined by:
\[
(\ket{\psi},\ket{\varphi})_0:=\langle \psi| \varphi \rangle_0 = \sum_{i=1}^n \psi_i^* \varphi_i \quad \text{for all } \ket{\psi}, \ket{\varphi} \in \mathbb{C}^n.
\]
\end{proposition}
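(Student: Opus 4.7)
The plan is to exhibit an explicit isometric isomorphism by fixing a convenient basis of $V$ and using the associated coordinate map. First I would invoke the Gram--Schmidt procedure (a standard result from undergraduate linear algebra, which I take for granted per the chapter's conventions) to produce an orthonormal basis $\{\ket{e_1},\ldots,\ket{e_n}\}$ of $V$, that is, a basis satisfying $\langle e_i | e_j \rangle = \delta_{ij}$. Existence of such a basis is the conceptual engine behind the whole argument: once we have it, everything reduces to manipulating coordinates.

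Next I would define the candidate map $T: V \to \mathbb{C}^n$ by $T\ket{\psi} := (\psi_1,\ldots,\psi_n)^\top$, where $\psi_i := \langle e_i | \psi\rangle$ are the coefficients in the expansion $\ket{\psi} = \sum_{i=1}^n \psi_i \ket{e_i}$. Linearity of $T$ follows from linearity of the inner product in the second argument, and bijectivity is immediate since $T$ sends the basis $\{\ket{e_j}\}$ to the canonical basis of $\mathbb{C}^n$.

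The key step is to verify that $T$ is an isometry, i.e., that $\langle \psi | \varphi\rangle = \langle T\psi | T\varphi\rangle_0$ for all $\ket{\psi},\ket{\varphi}\in V$. Expanding both sides in the orthonormal basis gives
\[
\langle \psi | \varphi\rangle = \Bigl\langle \sum_{i} \psi_i e_i \,\Big|\, \sum_{j} \varphi_j e_j \Bigr\rangle = \sum_{i,j} \psi_i^* \varphi_j \langle e_i | e_j\rangle = \sum_{i} \psi_i^* \varphi_i,
\]
where the last equality uses orthonormality, and the right-hand side is exactly $\langle T\psi | T\varphi\rangle_0$ by definition.

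I do not anticipate any serious obstacle: the only non-trivial input is the existence of an orthonormal basis, which is classical. The main care required is bookkeeping with conjugate-linearity conventions in the first slot of the inner product, since a sign or conjugation slip in the computation above would invalidate the isometry property. I would therefore write out the expansion step carefully, emphasising that it is the \emph{orthonormality} of $\{\ket{e_i}\}$ (rather than mere linear independence) that collapses the double sum to the standard inner product on $\mathbb{C}^n$.
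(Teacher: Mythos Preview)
Your proposal is correct and is exactly the standard argument. Note, however, that the paper does not actually supply a proof of this proposition: it is one of the undergraduate linear-algebra results whose proofs are explicitly omitted in Chapter~1, with the reader referred to \cite{hefferon2021linear} and \cite{lax2013linear}. Your write-up would therefore serve as a perfectly adequate fill-in, and there is nothing to compare against.
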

\begin{remark}
    Since we are only concerned with finite-dimensional inner product spaces in this work, the above proposition allows us (without loss of generality) to represent any such space as \(\mathbb{C}^n\) equipped with the standard inner product. Accordingly, we will work in \(\mathbb{C}^n\) throughout the remainder of this project and fix the standard basis as needed.
\end{remark}

\begin{definition}
Let \( A : \mathbb{C}^n \to \mathbb{C}^n \) be a linear map. The Hermitian conjugate of \( A \), denoted \( A^\dagger \), is the unique linear map \( A^\dagger : \mathbb{C}^n \to \mathbb{C}^n \) satisfying
\[
\braket{ A^\dagger \psi | \varphi } = \braket{\psi| A \varphi } \quad \forall \ket{\psi}, \ket{\varphi} \in \mathbb{C}^n.
\]
If \( A^\dagger = A \), then \( A \) is called Hermitian. In the matrix representation, the Hermitian conjugate corresponds to the conjugate transpose of the matrix representing \(A\): 
\[
A^\dagger = (A^*)^T,
\]
where \(A^*\) denotes the complex conjugate of each entry of \(A\) and \(T\) denotes the transpose.

\end{definition}
\begin{definition}
A linear map \( A : \mathbb{C}^n \to \mathbb{C}^n \) is called normal if
\[
AA^\dagger = A^\dagger A.
\]
\end{definition}
\begin{remark}
    In Dirac notation, it is natural to define \( \ket{\psi}^\dagger \equiv \bra{\psi} \). Consequently, the Hermitian conjugate of the vector \( A\ket{\psi} \) satisfies
    \[
    (A\ket{\psi})^\dagger = \bra{\psi} A^\dagger.
    \]
\end{remark}
\begin{proposition}
    A normal map is Hermitian if and only if it has real eigenvalues.
\end{proposition}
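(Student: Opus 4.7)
The plan is to use the spectral theorem for normal operators, which guarantees that any normal $A$ on $\mathbb{C}^n$ admits a decomposition $A = U D U^\dagger$ with $U$ unitary and $D = \mathrm{diag}(\lambda_1,\dots,\lambda_n)$ a diagonal matrix whose entries are the eigenvalues of $A$. With this decomposition in hand, both directions of the biconditional reduce to manipulations at the level of the diagonal matrix $D$, which is the cleanest way to exchange the algebraic condition $A^\dagger = A$ for the arithmetic condition $\lambda_i \in \mathbb{R}$.

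For the forward implication, I would assume $A$ is Hermitian and pick an eigenpair $A\ket{\psi} = \lambda\ket{\psi}$ with $\ket{\psi}\neq 0$. The idea is to compute $\braket{\psi | A\psi}$ in two ways: directly it equals $\lambda \braket{\psi|\psi}$, while using the definition of the Hermitian conjugate together with $A^\dagger = A$ and the conjugate-symmetry of the inner product it equals $\lambda^* \braket{\psi|\psi}$. Dividing by $\braket{\psi|\psi} > 0$ forces $\lambda = \lambda^*$, hence $\lambda \in \mathbb{R}$. This argument does not even need the spectral theorem, so I would present it as the short, self-contained half of the proof.

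For the reverse implication, I would invoke the spectral theorem to write $A = UDU^\dagger$. Then
\[
A^\dagger = (UDU^\dagger)^\dagger = (U^\dagger)^\dagger D^\dagger U^\dagger = U D^\dagger U^\dagger,
\]
so the question reduces to whether $D^\dagger = D$. Since $D$ is diagonal, $D^\dagger$ is the diagonal matrix whose entries are $\lambda_i^*$; by hypothesis each $\lambda_i$ is real, hence $D^\dagger = D$ and consequently $A^\dagger = A$.

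The only real obstacle is that the argument for the reverse direction rests on the spectral theorem for normal operators, which has not been stated in the excerpt up to this point. I would either cite it as a standard result from \cite{hefferon2021linear} or \cite{lax2013linear} in keeping with the authors' convention of relying on prerequisites, or, if a self-contained route is preferred, give the alternative argument that uses only an orthonormal eigenbasis $\{\ket{e_i}\}$ of $A$ (whose existence for normal $A$ is the content of the spectral theorem anyway): check $A^\dagger \ket{e_i} = \lambda_i^* \ket{e_i} = \lambda_i \ket{e_i} = A\ket{e_i}$ on each basis vector and conclude $A^\dagger = A$ by linearity.
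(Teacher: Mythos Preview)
Your argument is correct in both directions. However, the paper does not actually prove this proposition: it is one of the standard linear-algebra facts the authors explicitly leave unproved, deferring to \cite{hefferon2021linear} and \cite{lax2013linear}. So there is no paper proof to compare against; your write-up would simply fill a gap the authors chose to skip. Your own observation about the forward reference to the spectral theorem is apt---in the paper's ordering, the spectral decomposition (Theorem~1 and its corollary) appears only later---so if you wanted to slot your proof into the text at this point you would either need to cite the external references as the authors do, or defer the reverse implication until after Theorem~1.
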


\begin{definition}
We denote by \( \mathcal{M}_n(\mathbb{C}) \) the space of all \( n \times n \) matrices with complex entries. More generally, \( \mathcal{M}_{m \times n}(\mathbb{C}) \) refers to the set of all \( m \times n \) complex matrices.
\end{definition}

\begin{definition}
Given a matrix \( A \in \mathcal{M}_n(\mathbb{C}) \), the spectral norm is the matrix norm subordinate to the Euclidean norm \( \| \cdot \|_2 \) (which is induced by the standard inner product on $\mathbb{C}^n$), defined by
\[
\|A\| := \sup_{\|\,\ket{\psi}\| = 1} \|A|\psi\rangle\| = \sup_{\langle \psi | \psi \rangle = 1} \sqrt{ \langle \psi | A^\dagger A | \psi \rangle }.
\]
\end{definition}

\begin{proposition}
Let \( A \in \mathcal{M}_n(\mathbb{C}) \) be a normal matrix. Then, the spectral norm of \( A \) is equal to the magnitude of its largest eigenvalue:
\[
\|A\| = \max_{\lambda \in \sigma(A)} |\lambda|,
\]
where \( \sigma(A) \) denotes the spectrum (set of eigenvalues) of \( A \).
\end{proposition}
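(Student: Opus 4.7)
The plan is to reduce the problem to a computation on a diagonal matrix via the spectral theorem for normal matrices, which guarantees that any normal $A$ admits a unitary diagonalization $A = UDU^{\dagger}$, where $U$ is unitary and $D = \operatorname{diag}(\lambda_1, \dots, \lambda_n)$ contains the eigenvalues of $A$. This standard result from linear algebra is among those I would cite from the references mentioned in the introduction rather than reprove.

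First, I would use the diagonalization to rewrite $A^{\dagger}A = U D^{\dagger} U^{\dagger} U D U^{\dagger} = U D^{\dagger} D U^{\dagger}$, observing that $D^{\dagger}D = \operatorname{diag}(|\lambda_1|^2, \dots, |\lambda_n|^2)$. Then, starting from the definition
\[
\|A\|^2 = \sup_{\langle \psi|\psi\rangle = 1} \langle \psi | A^{\dagger}A | \psi \rangle,
\]
I would perform the change of variables $\ket{\varphi} = U^{\dagger}\ket{\psi}$. Since $U$ is unitary, this map is a bijection of the unit sphere onto itself, so the supremum is unchanged and becomes
\[
\|A\|^2 = \sup_{\langle \varphi|\varphi\rangle = 1} \langle \varphi | D^{\dagger}D | \varphi \rangle = \sup_{\sum_i |\varphi_i|^2 = 1} \sum_{i=1}^n |\lambda_i|^2 |\varphi_i|^2.
\]

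The remaining step is purely a calculation on the probability simplex: since $\sum_i |\varphi_i|^2 = 1$ with $|\varphi_i|^2 \ge 0$, the weighted average $\sum_i |\lambda_i|^2 |\varphi_i|^2$ is bounded above by $\max_i |\lambda_i|^2$, and this bound is attained by choosing $\ket{\varphi}$ to be the standard basis vector $\ket{i^\ast}$ corresponding to the index $i^\ast$ at which $|\lambda_i|$ is maximal. Thus $\|A\|^2 = \max_{\lambda \in \sigma(A)} |\lambda|^2$, and taking square roots yields the claim.

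The only real subtlety is invoking the spectral theorem, which provides the unitary diagonalization exclusively valid for normal matrices; the rest is a straightforward use of unitary invariance of the Euclidean norm and a one-line optimization. There is no significant obstacle beyond being careful that the supremum is in fact a maximum, which is automatic here because the unit sphere in $\mathbb{C}^n$ is compact and the quadratic form is continuous.
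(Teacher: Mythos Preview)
Your argument is correct: unitary diagonalization of the normal matrix $A$, unitary invariance of the Euclidean norm, and the elementary optimization over the simplex together give exactly the claimed identity, and you have correctly flagged the compactness point that ensures the supremum is attained.

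There is nothing to compare against, however: the paper states this proposition as one of the standard linear-algebra facts it explicitly chooses not to prove (see the remark in the introduction to Chapter~1, which defers such results to \cite{hefferon2021linear} and \cite{lax2013linear}). Your proof is precisely the kind of short spectral-theorem argument one would expect those references to contain, so it fits the spirit of the paper perfectly.
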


\begin{definition}
A linear map \( U : \mathbb{C}^n \to \mathbb{C}^n \) is called unitary if
\[
\langle U\psi | U\varphi \rangle = \langle \psi | \varphi \rangle \quad \forall\, \ket{\psi}, \ket{\varphi} \in \mathbb{C}^n.
\]
\end{definition}

\begin{proposition}
Let \( U : \mathbb{C}^n \to \mathbb{C}^n \) be a linear map. The following statements are equivalent:
\begin{enumerate}
    \item \( U \) is unitary.
    \item $U$ is normal and all eigenvalues of \(U\) have modulus 1.
    \item \( U^\dagger U = U U^\dagger = I \), where \(I\) is the identity map on \( \mathbb{C}^n \).
    \item \( \| U \ket{\psi} \| = \| \ket{\psi} \| \quad \text{for all } \ket{\psi} \in \mathbb{C}^n \).
\end{enumerate}
\end{proposition}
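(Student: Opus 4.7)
The plan is to establish the four equivalences through a small cycle of implications, together with one detour through the spectral theorem. Concretely, I would prove $1 \Leftrightarrow 3$, then $3 \Leftrightarrow 4$, and finally $3 \Leftrightarrow 2$. The two main ingredients are the defining identity $\langle U\psi|U\varphi\rangle = \langle \psi|U^\dagger U\varphi\rangle$ (from the definition of Hermitian conjugate) and, for one direction only, the spectral theorem for normal operators.

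For $1 \Leftrightarrow 3$, I would use the identity above: condition 1 is equivalent to $\langle \psi|(U^\dagger U - I)\varphi\rangle = 0$ for all $\ket\psi,\ket\varphi$, which is equivalent to $U^\dagger U = I$. In finite dimensions, a left inverse of a square matrix is automatically a two-sided inverse, so $U U^\dagger = I$ as well; the converse direction is a direct computation. For $3 \Leftrightarrow 4$, the forward direction is immediate by setting $\ket\varphi = \ket\psi$ and taking square roots. For $4 \Rightarrow 3$, the idea is to observe that $\langle \psi|(U^\dagger U - I)\psi\rangle = 0$ for every $\ket\psi$; since $U^\dagger U - I$ is Hermitian, a polarization argument (or the fact that a Hermitian form is determined by its diagonal values) forces $U^\dagger U - I = 0$.

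For $3 \Rightarrow 2$, normality is immediate from $U^\dagger U = U U^\dagger = I$, and if $U\ket\psi = \lambda\ket\psi$ with $\ket\psi$ a unit eigenvector then $|\lambda|^2 = \langle U\psi|U\psi\rangle = \langle \psi|U^\dagger U\psi\rangle = 1$. For the reverse implication $2 \Rightarrow 3$, I would invoke the spectral theorem for normal matrices to write $U = V D V^\dagger$ with $V$ unitary and $D = \mathrm{diag}(\lambda_1,\ldots,\lambda_n)$, $|\lambda_i| = 1$. Then $U^\dagger U = V D^\dagger D V^\dagger = V I V^\dagger = I$, since $|\lambda_i|^2 = 1$ gives $D^\dagger D = I$.

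The main obstacle I anticipate is the $4 \Rightarrow 3$ step, which requires lifting a scalar identity (equality of norms) to an operator identity. The cleanest path is polarization, or alternatively noting that the Hermitian operator $U^\dagger U - I$ has a vanishing quadratic form and hence vanishes; this is the only place where care is needed beyond routine manipulation. The $2 \Rightarrow 3$ step uses the spectral theorem, which by the remark at the start of the chapter is taken as a standard linear algebra prerequisite.
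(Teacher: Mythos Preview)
Your argument is correct and complete. However, there is nothing to compare against: the paper states this proposition without proof. As explained in the introduction to Chapter~1, the preliminaries collect standard linear algebra facts whose proofs are intentionally omitted, with the reader referred to \cite{hefferon2021linear} and \cite{lax2013linear}. This proposition is one of those unproved background results; it is immediately followed in the text by the definition of the commutator, with no intervening proof environment. So your proposal stands on its own as a valid proof, but there is no ``paper's own proof'' to match it against.
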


\begin{definition}
Let \( A, B : \mathbb{C}^n \to \mathbb{C}^n \) be linear maps. The commutator of \(A\) and \(B\) is defined as
\[
[A, B] \coloneqq AB - BA.
\]
If \([A, B] = 0\), we say that \(A\) and \(B\) commute.
\end{definition}

\begin{definition}
Let \( \ket{\psi}, \ket{\varphi} \in \mathbb{C}^n \). The outer product \( \ket{\varphi}\bra{\psi} \) is the linear map \( \mathbb{C}^n \to \mathbb{C}^n \) defined by its action on any \( \ket{\psi'} \in \mathbb{C}^n \) as:
\[
\left( \ket{\varphi}\bra{\psi} \right)\ket{\psi'} \coloneqq \braket{\psi | \psi'} \ket{\varphi}.
\]
\end{definition}

\begin{lemma}[Completeness Relation]
Let \( \{ \ket{e_i} \}_{i=1}^n \) be an orthonormal basis for \( \mathbb{C}^n \). Then
\[
\sum_{i=1}^n \ket{e_i}\bra{e_i} = I.
\]
\end{lemma}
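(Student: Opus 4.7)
The plan is to prove the operator identity $\sum_{i=1}^n \ket{e_i}\bra{e_i} = I$ by showing that both sides act identically on every vector of $\mathbb{C}^n$. Since a linear map on a finite-dimensional space is determined by its action on vectors, this pointwise verification suffices.

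First, I would fix an arbitrary vector $\ket{\psi} \in \mathbb{C}^n$ and apply the left-hand side, using the definition of the outer product (the one stated just before the lemma) to pull the scalar $\braket{e_i|\psi}$ out of each term, obtaining
\[
\left(\sum_{i=1}^n \ket{e_i}\bra{e_i}\right)\ket{\psi} = \sum_{i=1}^n \braket{e_i|\psi}\,\ket{e_i}.
\]
Next, I would invoke the standard expansion of a vector in an orthonormal basis: writing $\ket{\psi} = \sum_{j=1}^n c_j \ket{e_j}$ and pairing with $\bra{e_i}$, the orthonormality condition $\braket{e_i|e_j} = \delta_{ij}$ forces $c_i = \braket{e_i|\psi}$, so that $\ket{\psi} = \sum_{i=1}^n \braket{e_i|\psi}\,\ket{e_i}$. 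Substituting this identification back into the previous display yields exactly $\ket{\psi}$, which matches $I\ket{\psi}$. Since $\ket{\psi}$ was arbitrary, the two operators agree everywhere and are therefore equal.

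Honestly, there is no serious obstacle here: the only non-trivial ingredient is the coefficient formula $c_i = \braket{e_i|\psi}$ for orthonormal expansions, and this follows in one line from $\braket{e_i|e_j} = \delta_{ij}$ together with the conjugate-linearity in the first argument and linearity in the second established in the inner product definition. The proof is therefore a direct unwinding of definitions, and the main thing to be careful about is keeping the Dirac-notation conventions consistent (in particular, that $\braket{e_i|\psi}$ is a scalar that can be freely moved past the ket $\ket{e_i}$).
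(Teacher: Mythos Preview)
Your proposal is correct and follows essentially the same argument as the paper: both apply the operator to an arbitrary $\ket{\psi}$, use the outer-product definition to obtain $\sum_i \braket{e_i|\psi}\,\ket{e_i}$, and identify this with the orthonormal-basis expansion of $\ket{\psi}$ to conclude the operator acts as the identity. The only cosmetic difference is ordering---the paper states the expansion $\ket{\psi} = \sum_i \braket{e_i|\psi}\,\ket{e_i}$ first and then applies the operator, whereas you apply the operator first and then invoke the expansion---but the content is identical.
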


\begin{proof}
Let \( \ket{\psi} \in \mathbb{C}^n \). Since \( \{ \ket{e_i} \}_{i=1}^n \) is an orthonormal basis, we can write
\[
\ket{\psi} = \sum_{i=1}^n \psi_i \ket{e_i}, \quad \text{where } \psi_i = \braket{e_i | \psi}.
\]
Consider the action of the map \( \sum_{i=1}^n \ket{e_i}\bra{e_i} \) on \( \ket{\psi} \):
\[
\left( \sum_{i=1}^n \ket{e_i}\bra{e_i} \right)\ket{\psi} = \sum_{i=1}^n \braket{e_i | \psi} \ket{e_i}  = \sum_{i=1}^n \psi_i \ket{e_i} = \ket{\psi}.
\]
Thus, the map acts as the identity on an arbitrary vector \( \ket{\psi} \in \mathbb{C}^n \). Therefore,
\[
\sum_{i=1}^n \ket{e_i}\bra{e_i} = I.
\]
\end{proof}

\begin{lemma}
Let \( \ket{v}, \ket{w} \in \mathbb{C}^n \). Then the Hermitian conjugate of their outer product satisfies:
\[
\left( \ket{w}\bra{v} \right)^\dagger = \ket{v}\bra{w}.
\]
\begin{proof}
We verify the identity by checking that \( \ket{v}\bra{w} \) satisfies the defining property of the Hermitian conjugate of \( \ket{w}\bra{v} \). Let \( \ket{\varphi}, \ket{\psi} \in \mathbb{C}^n \). Then:
\begin{align*}
\braket{(|v\rangle\!\langle w|) \varphi | \psi} &=  \langle\langle w|\varphi\rangle v|\psi\rangle  \\ &=  \langle\langle \varphi|w\rangle^* v|\psi\rangle  \\ &= \langle \varphi|w\rangle \langle v|\psi\rangle \\ &=  \langle v|\psi\rangle \langle \varphi|w\rangle \\ &= \langle \varphi|\langle v|\psi\rangle w\rangle \\
&=
 \braket{ \varphi |(|w\rangle\!\langle v|) \psi} 
.
\end{align*}
\end{proof}

\end{lemma}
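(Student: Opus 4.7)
The plan is to verify the identity directly from the defining property of the Hermitian conjugate given in the earlier definition: $A^\dagger$ is the unique linear map on $\mathbb{C}^n$ such that $\braket{A^\dagger \psi \,|\, \varphi} = \braket{\psi \,|\, A \varphi}$ for all $\ket{\psi},\ket{\varphi} \in \mathbb{C}^n$. By uniqueness, it suffices to exhibit a candidate linear map that fulfills this property with $A = \ket{w}\bra{v}$, and show that the candidate $\ket{v}\bra{w}$ works.

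Concretely, I would fix arbitrary $\ket{\psi},\ket{\varphi} \in \mathbb{C}^n$ and compute both sides of
\[
\braket{(\ket{v}\bra{w})\varphi \,|\, \psi} = \braket{\varphi \,|\, (\ket{w}\bra{v})\psi}.
\]
First, on the left, I would use the definition of the outer product to rewrite $(\ket{v}\bra{w})\ket{\varphi} = \braket{w|\varphi}\ket{v}$, then pull the scalar $\braket{w|\varphi}$ out of the first slot of the inner product, which introduces a complex conjugate by conjugate-symmetry of the inner product (equivalently, antilinearity in the first argument). Symmetrically, on the right, I would expand $(\ket{w}\bra{v})\ket{\psi} = \braket{v|\psi}\ket{w}$ and pull $\braket{v|\psi}$ out of the second slot linearly. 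Both sides then collapse to the scalar product $\braket{\varphi|w}\braket{v|\psi}$, which establishes the equality.

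The main thing to be careful about, and really the only subtle point, is the antilinearity of the inner product in the first argument: the scalar $\braket{w|\varphi}$ becomes $\braket{w|\varphi}^* = \braket{\varphi|w}$ when it is extracted from the bra side. Everything else is a routine application of the definition of the outer product and the completeness of the bra-ket formalism, so once this conjugation is handled correctly the result follows immediately, and uniqueness of the Hermitian conjugate then gives $(\ket{w}\bra{v})^\dagger = \ket{v}\bra{w}$.
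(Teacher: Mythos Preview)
Your proposal is correct and follows essentially the same approach as the paper: both verify the defining property of the Hermitian conjugate by expanding the outer product via its definition and using conjugate-symmetry (antilinearity in the first slot) to extract the scalar $\braket{w|\varphi}$ as $\braket{\varphi|w}$. The only cosmetic difference is that the paper writes a single chain of equalities from $\braket{(\ket{v}\bra{w})\varphi|\psi}$ to $\braket{\varphi|(\ket{w}\bra{v})\psi}$, whereas you describe reducing both sides to the common scalar $\braket{\varphi|w}\braket{v|\psi}$.
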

\begin{theorem}
Let \( A \in \mathcal{M}_n(\mathbb{C}) \). The following statements are equivalent:
\begin{enumerate}
    \item \( A \) is normal.
    \item There exists a unitary matrix \( U \in \mathcal{M}_n(\mathbb{C}) \) such that:
    \[
    U^{-1}AU = D \quad \text{where \( D \) is diagonal.}
    \]
    \item There exists an orthonormal basis of \( \mathbb{C}^n \) consisting of the eigenvectors of \( A \).
    \end{enumerate}
\end{theorem}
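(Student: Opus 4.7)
The plan is to establish the three-way equivalence by proving the easy directions (3) $\Leftrightarrow$ (2) and (2) $\Rightarrow$ (1) first, then tackling the substantive implication (1) $\Rightarrow$ (2), which carries all the genuine content of the spectral theorem.

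For (3) $\Rightarrow$ (2), I would take an orthonormal eigenbasis $\{\ket{v_i}\}_{i=1}^n$ with $A\ket{v_i}=\lambda_i\ket{v_i}$ and define $U$ to be the matrix whose $i$-th column is $\ket{v_i}$. Orthonormality of the columns gives $U^\dagger U = I$, so $U$ is unitary by the earlier proposition, and by construction $AU = UD$ with $D = \mathrm{diag}(\lambda_1,\ldots,\lambda_n)$, hence $U^{-1}AU = D$. The converse (2) $\Rightarrow$ (3) reads this backwards: the columns of any unitary $U$ are automatically an orthonormal basis, and the identity $AU=UD$ expresses that these columns are eigenvectors of $A$. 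For (2) $\Rightarrow$ (1), writing $A = UDU^\dagger$ gives $A^\dagger = UD^\dagger U^\dagger$; since diagonal matrices commute, $AA^\dagger = UDD^\dagger U^\dagger = UD^\dagger DU^\dagger = A^\dagger A$.

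The real work is (1) $\Rightarrow$ (2). My approach is to invoke Schur's triangularization theorem: every complex square matrix is unitarily similar to an upper triangular matrix $T = U^\dagger A U$. This is proved by induction on $n$, extracting an eigenvalue $\lambda$ via the fundamental theorem of algebra, choosing a unit eigenvector $\ket{v}$, completing to an orthonormal basis so that conjugation yields a block of the form $\bigl(\begin{smallmatrix} \lambda & * \\ 0 & A' \end{smallmatrix}\bigr)$, and applying the inductive hypothesis to $A'$. Once $T$ is upper triangular, normality is preserved by unitary conjugation, so $T$ is itself normal. It then remains to show that a normal upper triangular matrix must be diagonal. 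Comparing diagonal entries, $(TT^\dagger)_{kk} = \sum_{j\geq k}|T_{kj}|^2$ while $(T^\dagger T)_{kk} = \sum_{j\leq k}|T_{jk}|^2$; setting these equal for $k=1$ forces $T_{1j}=0$ for all $j>1$, and proceeding row by row by induction on $k$ forces every strictly upper triangular entry to vanish.

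The main obstacle is Schur's triangularization lemma itself, which is the only nontrivial linear algebra input and is somewhat long to prove in full. Since the paper explicitly cites \cite{hefferon2021linear} and \cite{lax2013linear} as the sources of the standard linear algebra results used, I would either invoke Schur's theorem directly as a standard result from those references or include a brief inductive sketch. The row-by-row argument extracting diagonality from normality of $T$ is then short and entirely mechanical, and the chain (3) $\Rightarrow$ (2) $\Rightarrow$ (1) $\Rightarrow$ (2) $\Rightarrow$ (3) closes the equivalence.
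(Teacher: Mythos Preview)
Your proof is correct, but there is nothing to compare it against: the paper states this theorem without proof. In the Introduction the author writes that Chapter~1 ``collects the essential linear algebra results required for the subsequent discussion'' and that proofs of standard results are omitted, with \cite{hefferon2021linear} and \cite{lax2013linear} given as references. The spectral theorem for normal matrices is one of those omitted results.

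Your Schur-triangularization argument is the standard textbook route and is entirely sound; the observation that a normal upper-triangular matrix must be diagonal via the row-by-row comparison of $(TT^\dagger)_{kk}$ and $(T^\dagger T)_{kk}$ is exactly the right finishing step. You correctly anticipated the paper's stance by noting that Schur's lemma could simply be cited from those same references.
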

\begin{corollary}
If \( A \in \mathcal{M}_n(\mathbb{C}) \) is normal, then there exists an orthonormal basis \( \{ \ket{e_i} \}_{i=1}^n \) of \( \mathbb{C}^n \) consisting of eigenvectors of \( A \) and corresponding eigenvalues \( \lambda_i \in \mathbb{C} \), such that:
\[
A = \sum_{i=1}^n \lambda_i \ket{e_i}\bra{e_i}.
\]
This is called the spectral decomposition of \( A \).
\end{corollary}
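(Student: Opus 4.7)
The plan is to leverage the equivalence already established in the preceding theorem, specifically the implication that a normal matrix admits an orthonormal basis of eigenvectors, and to combine it with the Completeness Relation proved above. The spectral decomposition will then fall out by writing the identity as a sum of rank-one projectors and inserting $A$.

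First, I would invoke part 3 of the theorem to obtain an orthonormal basis $\{\ket{e_i}\}_{i=1}^n$ of $\mathbb{C}^n$ consisting of eigenvectors of $A$, so that $A\ket{e_i} = \lambda_i \ket{e_i}$ for some $\lambda_i \in \mathbb{C}$. Next, I would apply the Completeness Relation to write $I = \sum_{i=1}^n \ket{e_i}\bra{e_i}$ and then multiply $A$ on the right by this resolution of the identity, obtaining
\[
A = A \cdot I = A \sum_{i=1}^n \ket{e_i}\bra{e_i} = \sum_{i=1}^n (A\ket{e_i})\bra{e_i} = \sum_{i=1}^n \lambda_i \ket{e_i}\bra{e_i},
\]
where the linearity of $A$ and the definition of the outer product justify pulling $A$ inside the sum and acting on the ket factor only.

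There is essentially no obstacle here, since the two main ingredients, namely the existence of an orthonormal eigenbasis for normal matrices and the completeness relation, have already been established. The only subtle point worth mentioning explicitly is that the outer product $\ket{e_i}\bra{e_i}$ is a linear map, so composing on the left with $A$ commutes with the finite sum and acts only on the ket part; this is immediate from the definition of the outer product given earlier. Hence the corollary follows in a short direct computation without any further machinery.
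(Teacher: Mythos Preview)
Your argument is correct and is exactly the natural way to derive the spectral decomposition from the preceding theorem together with the Completeness Relation. The paper itself does not spell out a proof of this corollary; it is stated immediately after the diagonalization theorem and left as an evident consequence, so your short computation $A = A\sum_i \ket{e_i}\bra{e_i} = \sum_i \lambda_i \ket{e_i}\bra{e_i}$ fills in precisely the step the paper omits.
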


\begin{definition}\label{function}
Let \( f : \mathbb{C} \to \mathbb{C} \) be a function and let \( A \) be a normal map with spectral decomposition
\[
A = \sum_i \lambda_i \ket{e_i}\bra{e_i}.
\]
We define \( f(A) \) by
\[
f(A) \coloneqq \sum_i f(\lambda_i) \ket{e_i}\bra{e_i}.
\]
\end{definition}
\begin{observation}
This definition agrees with the power series expansion when \( f \) is analytic. Suppose \( f(z) = \sum_{n=0}^\infty a_n z^n \) converges on an open neighborhood of the spectrum \( \sigma(A) \). Then:
\begin{align*}
f(A) &= \sum_{n=0}^\infty a_n A^n 
= \sum_{n=0}^\infty a_n \left( \sum_i \lambda_i^n \ket{e_i}\bra{e_i} \right) 
= \sum_i \left( \sum_{n=0}^\infty a_n \lambda_i^n \right) \ket{e_i}\bra{e_i}
= \sum_i f(\lambda_i) \ket{e_i}\bra{e_i},
\end{align*}
which coincides with the spectral definition. Hence, the spectral calculus extends the notion of applying functions to matrices beyond analytic functions to arbitrary functions defined on the spectrum of \( A \).
\end{observation}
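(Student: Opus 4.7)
The plan is to derive the power-series identity directly from the spectral decomposition of $A$ and then pass to the limit inside what is, in finite dimensions, a finite outer sum.

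First, since $A$ is normal, the preceding corollary provides an orthonormal eigenbasis $\{\ket{e_i}\}_{i=1}^n$ with $A = \sum_i \lambda_i \ket{e_i}\bra{e_i}$. I would then establish by induction on $k \ge 0$ that
\[
A^k = \sum_{i=1}^n \lambda_i^k \ket{e_i}\bra{e_i}.
\]
The base case $k=0$ is exactly the completeness relation proved earlier, and the inductive step follows by multiplying $A\cdot A^k$ and using $\braket{e_j|e_i} = \delta_{ij}$ to collapse the resulting double sum to a single sum over $i$.

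With that identity in hand, I would form the partial sums
\[
S_N \coloneqq \sum_{k=0}^{N} a_k A^k = \sum_{i=1}^{n} \Bigl( \sum_{k=0}^{N} a_k \lambda_i^k \Bigr) \ket{e_i}\bra{e_i},
\]
where swapping the order of summation is legitimate because the outer sum has only $n$ terms. Since each $\lambda_i \in \sigma(A)$ lies inside the disk of convergence of $f$, the scalar series $\sum_k a_k \lambda_i^k$ converges to $f(\lambda_i)$ as $N \to \infty$. Taking this limit in the finite outer sum then yields $f(A) = \sum_i f(\lambda_i) \ket{e_i}\bra{e_i}$, matching Definition~\ref{function}.

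The main subtlety is confirming that scalar convergence upgrades to convergence of $S_N$ in the operator norm. This is where I would invoke the earlier proposition identifying the spectral norm of a normal operator with its largest eigenvalue in modulus: the difference $S_N - \sum_i f(\lambda_i)\ket{e_i}\bra{e_i}$ is itself normal with eigenvalues $\sum_{k=0}^{N} a_k \lambda_i^k - f(\lambda_i)$, so its spectral norm equals $\max_i \bigl| \sum_{k=0}^{N} a_k \lambda_i^k - f(\lambda_i) \bigr|$, which tends to $0$. I do not expect any serious obstacle here; the only step that requires genuine care is the inductive computation of $A^k$, from which the rest is bookkeeping.
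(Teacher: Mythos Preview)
Your proposal is correct and follows essentially the same route as the paper: substitute the spectral form of $A^k$ into the power series, swap the two sums, and identify the inner scalar series with $f(\lambda_i)$. You have simply made rigorous the steps the paper leaves implicit---in particular the inductive computation of $A^k$ and the passage to the limit in operator norm---so there is nothing to add.
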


\begin{definition}
    Let \( A \in \mathcal{M}_{m \times n}(\mathbb{C)}\) and \( B \in \mathcal{M}_{p \times q}(\mathbb{C)}\). The Kronecker product of \( A \) and \( B \), denoted \( A \otimes B \), is the \( mp \times nq \) block matrix defined as:

\[
A \otimes B = 
\begin{bmatrix}
a_{11}B & a_{12}B & \cdots & a_{1n}B \\
a_{21}B & a_{22}B & \cdots & a_{2n}B \\
\vdots & \vdots & \ddots & \vdots \\
a_{m1}B & a_{m2}B & \cdots & a_{mn}B
\end{bmatrix},
\]
where each \( a_{ij}B \) represents the scalar multiplication of the matrix \( B \) by the entry \( a_{ij} \) of \( A \).
In particular, let \( \ket{\psi} \in \mathbb{C}^m \) and \( \ket{\varphi} \in \mathbb{C}^p \). The Kronecker product \( \ket{\psi} \otimes \ket{\varphi} \) is the column vector of size $mp$ obtained as:
\[
\ket{\psi} \otimes \ket{\varphi} = 
\begin{bmatrix}
\psi_1 \ket{\varphi} \\
\psi_2 \ket{\varphi} \\
\vdots \\
\psi_m \ket{\varphi}
\end{bmatrix},
\]
where each \( \psi_i \ket{\varphi} \) is the scalar multiplication of the vector \( \ket{\varphi} \) by the \( i \)-th component \( \psi_i \) of \( \ket{\psi} \).
\end{definition}

\begin{proposition}
Let \( \ket{\psi}, \ket{\psi'_1}, \ket{\psi'_2} \in \mathbb{C}^n \), \( \ket{\varphi}, \ket{\varphi'_1}, \ket{\varphi'_2} \in \mathbb{C}^m \) and let \( z \in \mathbb{C} \). Then the Kronecker product satisfies:
\begin{enumerate}
    \item Scalar multiplication:
    \[
    z \left( \ket{\psi} \otimes \ket{\varphi} \right) = \left( z\ket{\psi} \right) \otimes \ket{\varphi} = \ket{\psi} \otimes \left( z\ket{\varphi} \right).
    \]
    
    \item Linearity in the first argument:
    \[
    \left( \ket{\psi'_1} + \ket{\psi'_2} \right) \otimes \ket{\varphi} = \ket{\psi'_1} \otimes \ket{\varphi} + \ket{\psi'_2} \otimes \ket{\varphi}.
    \]
    
    \item Linearity in the second argument:
    \[
    \ket{\psi} \otimes \left( \ket{\varphi'_1} + \ket{\varphi'_2} \right) = \ket{\psi} \otimes \ket{\varphi'_1} + \ket{\psi} \otimes \ket{\varphi'_2}.
    \]
\end{enumerate}
\end{proposition}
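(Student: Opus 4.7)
The plan is to verify each identity directly from the explicit column-vector form of the Kronecker product given in the preceding definition. Writing $\ket{\psi}\otimes\ket{\varphi}$ as the block column vector with blocks $\psi_i\ket{\varphi}$ reduces every claim to a block-by-block computation inside $\mathbb{C}^m$, where the ordinary vector space axioms are already available.

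For part (1), I would compute all three expressions block-by-block. The leftmost expression $z(\ket{\psi}\otimes\ket{\varphi})$ has $i$-th block $z(\psi_i\ket{\varphi})$. The middle expression $(z\ket{\psi})\otimes\ket{\varphi}$, obtained by replacing $\ket{\psi}$ with the vector whose entries are $z\psi_i$, has $i$-th block $(z\psi_i)\ket{\varphi}$. The rightmost expression $\ket{\psi}\otimes(z\ket{\varphi})$ has $i$-th block $\psi_i(z\ket{\varphi})$. Associativity and commutativity of scalar multiplication in $\mathbb{C}^m$ give $z(\psi_i\ket{\varphi}) = (z\psi_i)\ket{\varphi} = \psi_i(z\ket{\varphi})$, proving the three-way equality.

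For parts (2) and (3), I would again compare blocks. In (2), the $i$-th block of $(\ket{\psi'_1}+\ket{\psi'_2})\otimes\ket{\varphi}$ is $(\psi'_{1,i}+\psi'_{2,i})\ket{\varphi}$, which by distributivity of scalar multiplication over scalar addition equals $\psi'_{1,i}\ket{\varphi}+\psi'_{2,i}\ket{\varphi}$; this is precisely the $i$-th block of $\ket{\psi'_1}\otimes\ket{\varphi}+\ket{\psi'_2}\otimes\ket{\varphi}$. In (3), the $i$-th block of $\ket{\psi}\otimes(\ket{\varphi'_1}+\ket{\varphi'_2})$ is $\psi_i(\ket{\varphi'_1}+\ket{\varphi'_2})$, which by distributivity of scalar multiplication over vector addition in $\mathbb{C}^m$ equals $\psi_i\ket{\varphi'_1}+\psi_i\ket{\varphi'_2}$, matching the $i$-th block of the right-hand side.

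I do not anticipate a genuine obstacle: the proposition is a routine verification once the column-vector formula is in hand, and the only care needed is to keep the bookkeeping of blocks clean and to cite the appropriate vector space axiom at each step. If desired, the three parts could be condensed into a single display by noting that all identities express the bilinearity of the map $(\ket{\psi},\ket{\varphi})\mapsto\ket{\psi}\otimes\ket{\varphi}$, which follows immediately from the componentwise definition.
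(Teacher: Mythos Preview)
Your proposal is correct and follows essentially the same approach as the paper: both verify each identity by writing the Kronecker product in its block column form and comparing blocks, invoking the relevant vector space axioms at each step. The only cosmetic difference is that the paper displays the full column vectors explicitly while you describe the $i$-th block in words, and the paper dismisses part (3) as ``analogous'' whereas you spell it out.
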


\begin{proof}
We prove each part in turn:
\begin{enumerate}
    \item Scalar Multiplication:
    \begin{align*}
    z \left( \ket{\psi} \otimes \ket{\varphi} \right) &= z 
    \begin{bmatrix}
    \psi_1 \ket{\varphi} \\
    \vdots \\
    \psi_n \ket{\varphi}
    \end{bmatrix} =
    \begin{bmatrix}
    z \psi_1 \ket{\varphi} \\
    \vdots \\
    z \psi_n \ket{\varphi}
    \end{bmatrix}, \\
    \left( z \ket{\psi} \right) \otimes \ket{\varphi} &= 
    \begin{bmatrix}
    (z \psi_1) \ket{\varphi} \\
    \vdots \\
    (z \psi_n) \ket{\varphi}
    \end{bmatrix} =
    \begin{bmatrix}
    z \psi_1 \ket{\varphi} \\
    \vdots \\
    z \psi_n \ket{\varphi}
    \end{bmatrix}, \\
    \ket{\psi} \otimes \left( z \ket{\varphi} \right) &= 
    \begin{bmatrix}
    \psi_1 (z \ket{\varphi}) \\
    \vdots \\
    \psi_n (z \ket{\varphi})
    \end{bmatrix} =
    \begin{bmatrix}
    z \psi_1 \ket{\varphi} \\
    \vdots \\
    z \psi_n \ket{\varphi}
    \end{bmatrix}.
    \end{align*}
    All three expressions are identical, proving the property.

    \item Linearity in the First Argument:
    Let \(\ket{\psi'_1} = [\psi'_{1,i}]_{i=1}^n\), \(\ket{\psi'_2} = [\psi'_{2,i}]_{i=1}^n\). Then:
    \begin{align*}
    \left( \ket{\psi'_1} + \ket{\psi'_2} \right) \otimes \ket{\varphi} &= 
    \begin{bmatrix}
    (\psi'_{1,1} + \psi'_{2,1}) \ket{\varphi} \\
    \vdots \\
    (\psi'_{1,n} + \psi'_{2,n}) \ket{\varphi}
    \end{bmatrix} \\ &=
    \begin{bmatrix}
    \psi'_{1,1} \ket{\varphi} + \psi'_{2,1} \ket{\varphi} \\
    \vdots \\
    \psi'_{1,n} \ket{\varphi} + \psi'_{2,n} \ket{\varphi}
    \end{bmatrix} \\
    &= 
    \begin{bmatrix}
    \psi'_{1,1} \ket{\varphi} \\
    \vdots \\
    \psi'_{1,n} \ket{\varphi}
    \end{bmatrix} +
    \begin{bmatrix}
    \psi'_{2,1} \ket{\varphi} \\
    \vdots \\
    \psi'_{2,n} \ket{\varphi}
    \end{bmatrix} \\ &= \ket{\psi'_1} \otimes \ket{\varphi} + \ket{\psi'_2} \otimes \ket{\varphi}.
    \end{align*}

    \item Linearity in the Second Argument:
    The proof is analogous.
\end{enumerate}
\end{proof}

\begin{proposition}
Let \( A \in \mathcal{M}_{m \times n}(\mathbb{C}) \), \( B \in \mathcal{M}_{p \times q}(\mathbb{C}) \),
\( C \in \mathcal{M}_{n \times k}(\mathbb{C}) \) and \( D \in \mathcal{M}_{q \times r}(\mathbb{C}) \).
Then the following identity holds:
\[
(A \otimes B)(C \otimes D) = (AC) \otimes (BD).
\]
\end{proposition}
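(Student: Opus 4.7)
The plan is to verify the identity directly by computing both sides in block form and comparing them entry by entry (or rather, block by block). By the definition of the Kronecker product, $A \otimes B$ is an $mp \times nq$ block matrix whose $(i,j)$ block is $a_{ij} B$, a $p \times q$ matrix; similarly $C \otimes D$ is an $nq \times kr$ block matrix whose $(j,l)$ block is $c_{jl} D$, a $q \times r$ matrix. Because the inner block dimensions agree ($q \times q$), block matrix multiplication applies.

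First, I would write out $(A \otimes B)(C \otimes D)$ using the block multiplication rule: the $(i,l)$ block of the product is
\[
\sum_{j=1}^n (a_{ij} B)(c_{jl} D).
\]
Next, I would pull the scalars $a_{ij}$ and $c_{jl}$ out of the matrix product (using the scalar multiplication property of matrix multiplication) to obtain
\[
\sum_{j=1}^n a_{ij} c_{jl}\, (BD) = \left(\sum_{j=1}^n a_{ij} c_{jl}\right)(BD) = (AC)_{il}\, (BD),
\]
where the last equality uses the standard definition of the entries of the matrix product $AC$.

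Finally, I would identify the right-hand side of the above chain as the $(i,l)$ block of $(AC) \otimes (BD)$, which by definition is exactly $(AC)_{il} (BD)$. Since the two block matrices agree block by block and have matching dimensions, they are equal as matrices.

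The main obstacle is purely notational: I must be careful to track four index pairs and four block sizes simultaneously, making sure the compatibility conditions for both the ordinary products $AC$ and $BD$ and for the block multiplication of $A \otimes B$ with $C \otimes D$ are clearly stated. Once the bookkeeping is set up cleanly, the proof reduces to a one-line scalar factorization inside the block sum, with no deeper ideas required.
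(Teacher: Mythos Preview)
Your proposal is correct and follows essentially the same approach as the paper: both compute the block matrix product $(A \otimes B)(C \otimes D)$ explicitly, observe that the $(i,l)$ block is $\sum_j a_{ij} c_{jl}\, BD = (AC)_{il}\, BD$, and identify this with the definition of $(AC) \otimes (BD)$. The only cosmetic difference is that the paper displays the full block matrices at each step, whereas you work with a generic block.
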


\begin{proof}
We compute the matrix product explicitly using the Kronecker product structure:
\begin{align*}
(A \otimes B)(C \otimes D) &= 
\begin{bmatrix}
a_{11}B & \cdots & a_{1n}B \\
\vdots & \ddots & \vdots \\
a_{m1}B & \cdots & a_{mn}B
\end{bmatrix}
\begin{bmatrix}
c_{11}D & \cdots & c_{1k}D \\
\vdots & \ddots & \vdots \\
c_{n1}D & \cdots & c_{nk}D
\end{bmatrix} \\
&= 
\begin{bmatrix}
\sum_{l=1}^n a_{1l}B c_{l1}D & \cdots & \sum_{l=1}^n a_{1l}B c_{lk}D \\
\vdots & \ddots & \vdots \\
\sum_{l=1}^n a_{ml}B c_{l1}D & \cdots & \sum_{l=1}^n a_{ml}B c_{lk}D
\end{bmatrix} \\
&= 
\begin{bmatrix}
\left(\sum_{l=1}^n a_{1l}c_{l1}\right)BD & \cdots & \left(\sum_{l=1}^n a_{1l}c_{lk}\right)BD \\
\vdots & \ddots & \vdots \\
\left(\sum_{l=1}^n a_{ml}c_{l1}\right)BD & \cdots & \left(\sum_{l=1}^n a_{ml}c_{lk}\right)BD
\end{bmatrix} \\
&= 
\begin{bmatrix}
(AC)_{11}BD & \cdots & (AC)_{1k}BD \\
\vdots & \ddots & \vdots \\
(AC)_{m1}BD & \cdots & (AC)_{mk}BD
\end{bmatrix} \\
&= (AC) \otimes (BD). 
\end{align*}
\end{proof}

\begin{corollary}
Let \( A \in \mathcal{M}_{m \times n}(\mathbb{C)}\), \( B \in \mathcal{M}_{p \times q}(\mathbb{C)}\), $|\psi\rangle \in \mathbb{C}^n$ and $|\varphi\rangle \in \mathbb{C}^q$. Then the following identity holds:
\[
(A \otimes B)(|\psi\rangle \otimes |\varphi\rangle) = A|\psi\rangle \otimes B|\varphi\rangle.
\]
\end{corollary}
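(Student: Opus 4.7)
The plan is to recognize that this corollary is essentially an immediate specialization of the preceding proposition, once we interpret column vectors as matrices with a single column. Concretely, I would view $\ket{\psi} \in \mathbb{C}^n$ as an element of $\mathcal{M}_{n \times 1}(\mathbb{C})$ and $\ket{\varphi} \in \mathbb{C}^q$ as an element of $\mathcal{M}_{q \times 1}(\mathbb{C})$. Under this identification, $A\ket{\psi}$ is the usual matrix-vector product (belonging to $\mathcal{M}_{m \times 1}(\mathbb{C}) \cong \mathbb{C}^m$) and similarly for $B\ket{\varphi}$.

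Next, I would apply the previous proposition with $C = \ket{\psi}$ and $D = \ket{\varphi}$, taking $k = r = 1$, so that all dimension compatibility conditions are satisfied: $A$ is $m \times n$ and $C = \ket{\psi}$ is $n \times 1$; $B$ is $p \times q$ and $D = \ket{\varphi}$ is $q \times 1$. The proposition then yields
\[
(A \otimes B)(\ket{\psi} \otimes \ket{\varphi}) = (A\ket{\psi}) \otimes (B\ket{\varphi}),
\]
which is exactly the claim.

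The only point that could be viewed as a subtle step is checking that the Kronecker product of column vectors defined in the earlier definition (the stacked block column) agrees with the Kronecker product of the same objects viewed as $n \times 1$ and $q \times 1$ matrices; this is immediate from the definitions, since treating $\ket{\psi}$ as an $n \times 1$ matrix with entries $\psi_i$ and applying the block-matrix formula of the Kronecker product reproduces exactly the stacked column $[\psi_1 \ket{\varphi}; \ldots; \psi_n \ket{\varphi}]^T$. Hence there is no genuine obstacle, and the proof will amount to a one-line invocation of the previous proposition after this identification is noted.
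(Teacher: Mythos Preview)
Your proposal is correct and matches the paper's intended approach: the corollary is stated immediately after the mixed-product proposition with no separate proof, so it is meant to follow by exactly the specialization you describe, viewing $\ket{\psi}$ and $\ket{\varphi}$ as $n\times 1$ and $q\times 1$ matrices and taking $k=r=1$.
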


\begin{definition}
Let \( V \cong \mathbb{C}^n \) and \( W \cong \mathbb{C}^m \) be finite-dimensional complex vector spaces. The tensor product space \( V \otimes W \) is the complex vector space spanned by all formal linear combinations of elementary tensors of the form \( \ket{\psi} \otimes \ket{\varphi} \), where \( \ket{\psi} \in V \) and \( \ket{\varphi} \in W \):

\[
V \otimes W := \text{Span}_{\mathbb{C}} \left\{ \ket{\psi} \otimes \ket{\varphi} \mid \ket{\psi} \in V, \ket{\varphi} \in W \right\}.
\]
\end{definition}

\begin{remark}
    In this context, we define the tensor product space \( V \otimes W \) as the span of Kronecker products of vectors from \( V \) and \( W \). This is a natural and practical construction for finite-dimensional spaces. While the general tensor product is defined abstractly via an universal property and applies to arbitrary vector spaces (including infinite-dimensional ones), the Kronecker product serves as an explicit realization in the case of finite-dimensional vector spaces over $\mathbb{C}$, matching both the vector space structure and the required properties of the tensor product.
\end{remark}

\begin{observation}
In the tensor product space \( V \otimes W \), where \( V \cong \mathbb{C}^n \) and \( W \cong \mathbb{C}^m \) are finite-dimensional complex vector spaces, we can define a natural inner product on \( V \otimes W \). Given vectors \( \ket{\psi} \otimes \ket{\varphi}, \ket{\psi'} \otimes \ket{\varphi'} \in V \otimes W \), the inner product is defined as:

\[
\braket{\psi \otimes \varphi | \psi' \otimes \varphi'}_{V \otimes W} = \braket{\psi | \psi'}_V \braket{\varphi | \varphi'}_W,
\]
where \( \braket{\psi | \psi'}_V \) is the inner product in \( V \) and \( \braket{\varphi | \varphi'}_W \) is the inner product in \( W \). 
\end{observation}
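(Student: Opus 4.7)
The Observation implicitly asserts that the stated formula extends to a genuine inner product on all of $V\otimes W$, so my plan is to verify both that the extension is well-defined on non-elementary tensors and that it satisfies the axioms introduced in the opening definition. Because the earlier proposition on finite-dimensional inner product spaces lets me identify $V\cong\mathbb{C}^n$ and $W\cong\mathbb{C}^m$, and because the tensor product has been realized concretely via the Kronecker product, I will exploit the block structure of the Kronecker construction rather than invoke any abstract universal property.

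First, I would fix orthonormal bases $\{\ket{e_i}\}_{i=1}^{n}$ of $V$ and $\{\ket{f_j}\}_{j=1}^{m}$ of $W$ and show that $\{\ket{e_i}\otimes\ket{f_j}\}_{i,j}$ is an orthonormal family of $nm$ vectors that spans $V\otimes W$. Spanning follows from the bilinearity of the Kronecker product (the earlier proposition on scalar multiplication and linearity in each argument), which lets any elementary tensor $\ket{\psi}\otimes\ket{\varphi}$ be expanded in the basis $\{\ket{e_i}\otimes\ket{f_j}\}$; linear independence follows from the block structure of the Kronecker product, which shows that these $nm$ vectors are exactly the standard basis of $\mathbb{C}^{nm}$ up to relabeling. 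Next, I would define the candidate inner product on arbitrary vectors by sesquilinear extension of the formula using these coefficients, namely
\[
\left(\sum_{i,j}\alpha_{ij}\,\ket{e_i}\otimes\ket{f_j},\ \sum_{k,l}\beta_{kl}\,\ket{e_k}\otimes\ket{f_l}\right)\;\coloneqq\;\sum_{i,j}\alpha_{ij}^{*}\beta_{ij}.
\]
This is manifestly well-defined on $V\otimes W$ and coincides with the standard inner product on $\mathbb{C}^{nm}$, so the four axioms of the opening definition are inherited automatically.

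The main obstacle is the coherence check: one must verify that the proposed formula $\braket{\psi\otimes\varphi\,|\,\psi'\otimes\varphi'}=\braket{\psi|\psi'}_{V}\braket{\varphi|\varphi'}_{W}$ agrees with the sesquilinear extension above on every elementary tensor, even though the same elementary tensor admits many representations (e.g.\ $(a\ket{\psi})\otimes\ket{\varphi}=\ket{\psi}\otimes(a\ket{\varphi})$). I would expand $\ket{\psi}=\sum_i\psi_i\ket{e_i}$, $\ket{\varphi}=\sum_j\varphi_j\ket{f_j}$ and similarly for the primed vectors, use bilinearity of $\otimes$ to write $\ket{\psi}\otimes\ket{\varphi}=\sum_{i,j}\psi_i\varphi_j\,\ket{e_i}\otimes\ket{f_j}$, and then compute
\[
\sum_{i,j}(\psi_i\varphi_j)^{*}(\psi'_i\varphi'_j)\;=\;\Bigl(\sum_i\psi_i^{*}\psi'_i\Bigr)\Bigl(\sum_j\varphi_j^{*}\varphi'_j\Bigr)\;=\;\braket{\psi|\psi'}_{V}\braket{\varphi|\varphi'}_{W},
\]
which resolves the ambiguity because the factorization on the right depends only on $\ket{\psi},\ket{\varphi},\ket{\psi'},\ket{\varphi'}$ and not on the chosen coefficient representation.

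With coherence established, the remaining verifications are routine. Conjugate symmetry and linearity in the second argument follow immediately from the coefficient formula; for positive definiteness, writing a general $\ket{\Psi}=\sum_{i,j}\alpha_{ij}\,\ket{e_i}\otimes\ket{f_j}$ gives $\braket{\Psi|\Psi}=\sum_{i,j}|\alpha_{ij}|^{2}\geq 0$ with equality iff all $\alpha_{ij}=0$, i.e.\ iff $\ket{\Psi}=0$. This completes the argument that the formula defines a bona fide inner product on $V\otimes W$.
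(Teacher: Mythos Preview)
Your argument is correct and complete: you establish well-definedness by working in a fixed orthonormal basis, identify the induced pairing with the standard inner product on $\mathbb{C}^{nm}$, and then verify coherence with the product formula on elementary tensors via the factorization $\sum_{i,j}(\psi_i\varphi_j)^{*}(\psi'_i\varphi'_j)=\bigl(\sum_i\psi_i^{*}\psi'_i\bigr)\bigl(\sum_j\varphi_j^{*}\varphi'_j\bigr)$. Each step is sound.

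The paper, however, does not prove this Observation at all; it is stated purely as a definition, with no verification of well-definedness or of the inner-product axioms. So there is nothing to compare against: you have supplied a genuine proof where the paper simply asserts the result. Your approach is more rigorous than the paper's treatment, and in particular your explicit handling of the ambiguity in representing elementary tensors (e.g.\ $(a\ket{\psi})\otimes\ket{\varphi}=\ket{\psi}\otimes(a\ket{\varphi})$) addresses a subtlety the paper leaves implicit.
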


\begin{proposition}
Let \( V \) and \( W \) be finite-dimensional inner product spaces over \( \mathbb{C} \), with \(\dim(V) = n\) and \(\dim(W) = m\). Then the tensor product space \( V \otimes W \) has dimension:
\[
\dim(V \otimes W) = \dim(V) \times \dim(W) = nm.
\]
\end{proposition}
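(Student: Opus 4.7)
The plan is to exhibit an explicit basis of $V \otimes W$ of size $nm$, namely the set of elementary tensors built from orthonormal bases of the two factors, and then verify the two basis properties. Concretely, I would fix orthonormal bases $\{\ket{e_i}\}_{i=1}^n$ of $V$ and $\{\ket{f_j}\}_{j=1}^m$ of $W$, and claim that
\[
\mathcal{B} := \{ \ket{e_i} \otimes \ket{f_j} \;:\; 1 \le i \le n,\; 1 \le j \le m \}
\]
is a basis of $V \otimes W$. The conclusion $\dim(V \otimes W) = nm$ follows immediately from $|\mathcal{B}| = nm$.

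For the spanning property, I would take an arbitrary elementary tensor $\ket{\psi} \otimes \ket{\varphi}$, expand $\ket{\psi} = \sum_i \psi_i \ket{e_i}$ and $\ket{\varphi} = \sum_j \varphi_j \ket{f_j}$ in the chosen bases, and then apply bilinearity of the Kronecker product (points 1--3 of the earlier proposition on scalar multiplication and linearity in each argument) to conclude
\[
\ket{\psi} \otimes \ket{\varphi} = \sum_{i,j} \psi_i \varphi_j \, \ket{e_i} \otimes \ket{f_j}.
\]
Since $V \otimes W$ is defined as the $\mathbb{C}$-span of elementary tensors, extending this by linearity shows $V \otimes W = \operatorname{Span}_{\mathbb{C}} \mathcal{B}$.

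For linear independence, I would invoke the inner product on $V \otimes W$ introduced in the previous observation, which on elementary tensors satisfies $\braket{e_i \otimes f_j | e_k \otimes f_l} = \braket{e_i | e_k}_V \braket{f_j | f_l}_W = \delta_{ik}\delta_{jl}$. Thus $\mathcal{B}$ is an orthonormal set, and orthonormal sets in an inner product space are linearly independent: assuming $\sum_{i,j} c_{ij} \ket{e_i} \otimes \ket{f_j} = 0$ and taking the inner product with $\ket{e_k} \otimes \ket{f_l}$ yields $c_{kl}=0$ for every pair $(k,l)$.

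The only subtle point, and the one I would be most careful about, is the linear independence step: the definition of $V \otimes W$ only presents it as a span of elementary tensors and does not a priori exclude nontrivial relations among them, so it is essential to use a structure external to that spanning description, here the inner product from the preceding observation, to rule out hidden dependencies among the $\ket{e_i} \otimes \ket{f_j}$.
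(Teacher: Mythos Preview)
Your proposal is correct and follows essentially the same approach as the paper: fix orthonormal bases of the factors, show the elementary tensors span via bilinearity, and establish linear independence by pairing with basis tensors using the inner product $\braket{e_i \otimes f_j | e_k \otimes f_l} = \delta_{ik}\delta_{jl}$. Your closing remark about needing the inner product as an external structure to rule out hidden relations is a nice observation that the paper leaves implicit.
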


\begin{proof}
Let \(\{\ket{\psi_1}, \dots, \ket{\psi_n}\}\) be an orthonormal basis for \( V \) and \(\{\ket{\varphi_1}, \dots, \ket{\varphi_m}\}\) be an orthonormal basis for \( W \). We prove that:
\[
\{\ket{\psi_i} \otimes \ket{\varphi_j} | 1 \leq i \leq n, \, 1 \leq j \leq m\},
\]
is a basis for \( V \otimes W \).

\begin{enumerate}

\item Spanning Property:
Any vector in \( V \otimes W \) is a linear combination of simple tensors \(\ket{\psi} \otimes \ket{\varphi}\) where \(\ket{\psi} \in V\) and \(\ket{\varphi} \in W\). Express \(\ket{\psi}\) and \(\ket{\varphi}\) in terms of their bases:
\[
\ket{\psi} = \sum_{i=1}^n a_i \ket{\psi_i}, \quad \ket{\varphi} = \sum_{j=1}^m b_j \ket{\varphi_j}.
\]
By the bilinearity of the tensor product, we have:
\[
\ket{\psi} \otimes \ket{\varphi} = \left(\sum_{i=1}^n a_i \ket{\psi_i}\right) \otimes \left(\sum_{j=1}^m b_j \ket{\varphi_j}\right) = \sum_{i=1}^n \sum_{j=1}^m a_i b_j (\ket{\psi_i}\otimes \ket{\varphi_j}).
\]
Thus, \(\{\ket{\psi_i} \otimes \ket{\varphi_j}\}\) spans \( V \otimes W \).

\item Linear Independence:
Suppose there exists a linear combination:
\[
\sum_{i=1}^n \sum_{j=1}^m c_{ij} (\ket{\psi_i} \otimes \ket{\varphi_j}) = 0.
\]
For any fixed $(k,l)$ where $1 \leq k \leq n$ and $1 \leq l \leq m$, take the inner product of both sides with $\ket{\psi_k} \otimes \ket{\varphi_l}$:

\[0 =
\left\langle \ket{\psi_k} \otimes \ket{\varphi_l} , \sum_{i,j} c_{ij} (\ket{\psi_i} \otimes \ket{\varphi_j}) \right\rangle = \sum_{i,j} c_{ij} \langle  \psi_k | \psi_i \rangle \langle  \varphi_l | \varphi_j \rangle.
\]

Since the bases are orthonormal, $ \langle  \psi_k | \psi_i \rangle = \delta_{ki}$ and $\langle  \varphi_l | \varphi_j \rangle = \delta_{lj}$, so:

\[
0 = \sum_{i,j} c_{ij} \delta_{ki} \delta_{lj} = c_{kl}.
\]

This holds for all $1 \leq k \leq n$ and $1 \leq l \leq m$, proving all coefficients $c_{ij} = 0$. Therefore, the set $\{\ket{\psi_i} \otimes \ket{\varphi_j}\}$ is linearly independent.
\\\\
Since the set \(\{\ket{\psi_i} \otimes \ket{\varphi_j}\}\) is a basis for \( V \otimes W \) and its size is \( n \times m \),
\[
\dim(V \otimes W) = \dim(V) \times \dim(W) = nm.
\]
 
\end{enumerate}
\end{proof}
\begin{observation}
        Since $V \cong C^n$ and $W \cong C^m$, we have $V \otimes W \cong C^{nm}$.
\end{observation}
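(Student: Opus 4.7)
The plan is to deduce the isomorphism $V \otimes W \cong \mathbb{C}^{nm}$ directly from the dimension count established in the preceding proposition, combined with the structure theorem for finite-dimensional complex inner product spaces (Proposition 1 at the start of the chapter). In essence, since $V \otimes W$ is a complex inner product space of dimension $nm$, Proposition 1 delivers the desired isometric isomorphism immediately; the only work is to verify that $V\otimes W$ genuinely fits into that framework.

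First, I would confirm that $V \otimes W$ is a \emph{finite-dimensional complex inner product space}. The complex vector space structure is built into its definition as the span of elementary tensors, and the previous proposition shows that its dimension equals $nm$. The inner product was introduced earlier on elementary tensors via $\braket{\psi \otimes \varphi | \psi' \otimes \varphi'} = \braket{\psi | \psi'}_V \braket{\varphi | \varphi'}_W$, and I would extend it by sesquilinearity to all of $V \otimes W$. To check the axioms (conjugate symmetry, linearity in the second slot, positive-definiteness), I would expand arbitrary elements in the basis $\{\ket{\psi_i} \otimes \ket{\varphi_j}\}$ coming from orthonormal bases of $V$ and $W$; under this expansion, the tensor inner product reduces to the standard inner product on the coordinate vectors in $\mathbb{C}^{nm}$.

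Having verified that $V\otimes W$ is an $nm$-dimensional complex inner product space, I would invoke Proposition 1 to conclude that it is isometrically isomorphic to $\mathbb{C}^{nm}$ with its standard inner product. The explicit isomorphism is the coordinate map that sends $\ket{\psi_i}\otimes\ket{\varphi_j}$ to the standard basis vector $\ket{e_{(i-1)m+j}}$ of $\mathbb{C}^{nm}$, extended by linearity.

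The main obstacle: there is no serious obstacle here, since the result is essentially an immediate corollary of the dimension formula and Proposition 1. The only subtle point worth flagging is positive-definiteness of the tensor inner product, which is not transparent from the formula on elementary tensors alone (since elementary tensors are not linearly independent). This is cleanly resolved by passing through the basis $\{\ket{\psi_i}\otimes\ket{\varphi_j}\}$, which, by the orthonormality of the factor bases, is itself orthonormal; the induced norm therefore vanishes only on the zero vector, completing the verification.
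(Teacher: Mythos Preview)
Your proposal is correct and matches the paper's intent: the observation is stated without proof, immediately following the dimension formula $\dim(V\otimes W)=nm$, so the paper is implicitly appealing to Proposition~1 exactly as you do. Your additional care in checking that the tensor inner product is well-defined and positive-definite is more than the paper provides, but entirely in the same spirit.
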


\begin{theorem}[Singular value decomposition]
Let \( A \in \mathcal{M}_n(\mathbb{C)} \). Then there exist unitary matrices \( U \), \( V \) and a diagonal matrix \( D \) with non-negative entries such that  
\[ A = U D V. \]  
The diagonal elements of \( D \) are called the singular values of \( A \).
\end{theorem}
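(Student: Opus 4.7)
The strategy I would follow is the classical route via the spectral theorem applied to $A^\dagger A$. First, I would observe that $A^\dagger A \in \mathcal{M}_n(\mathbb{C})$ is Hermitian, since $(A^\dagger A)^\dagger = A^\dagger (A^\dagger)^\dagger = A^\dagger A$, and moreover positive semi-definite, because for every $\ket{\psi}\in\mathbb{C}^n$ one has $\braket{\psi|A^\dagger A|\psi} = \|A\ket{\psi}\|^2 \geq 0$. Applying the theorem on the diagonalization of normal matrices (together with the proposition characterizing Hermitian maps by real eigenvalues), I would obtain an orthonormal basis $\{\ket{v_1},\dots,\ket{v_n}\}$ of $\mathbb{C}^n$ and non-negative real numbers $\lambda_1 \geq \lambda_2 \geq \dots \geq \lambda_n \geq 0$ such that $A^\dagger A \ket{v_i} = \lambda_i \ket{v_i}$.

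I would then define the singular values $\sigma_i := \sqrt{\lambda_i}$ and let $r$ be the largest index for which $\sigma_r > 0$. For $1\leq i \leq r$, set $\ket{u_i} := \frac{1}{\sigma_i} A\ket{v_i}$. A direct computation shows these vectors form an orthonormal family: using $\braket{v_i|A^\dagger A|v_j} = \lambda_j \delta_{ij}$, we obtain
\[
\braket{u_i|u_j} = \frac{1}{\sigma_i \sigma_j}\braket{v_i|A^\dagger A|v_j} = \frac{\lambda_j}{\sigma_i \sigma_j}\delta_{ij} = \delta_{ij}.
\]
The next step is to extend $\{\ket{u_1},\dots,\ket{u_r}\}$ to a complete orthonormal basis $\{\ket{u_1},\dots,\ket{u_n}\}$ of $\mathbb{C}^n$ via the Gram--Schmidt process applied to any completion.

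With these ingredients, I would assemble the matrices: let $U$ be the unitary matrix whose columns are $\ket{u_1},\dots,\ket{u_n}$, let $W$ be the unitary matrix whose columns are $\ket{v_1},\dots,\ket{v_n}$ and let $D := \mathrm{diag}(\sigma_1,\dots,\sigma_n)$. Setting $V := W^\dagger$ (which is unitary by the equivalences of the proposition characterizing unitary maps), it remains to verify $A = UDV$. Since both sides are linear, it suffices to test equality on the basis $\{\ket{v_j}\}$: for $j\leq r$, $UDV\ket{v_j} = UD\ket{e_j} = \sigma_j \ket{u_j} = A\ket{v_j}$ by definition of $\ket{u_j}$, and for $j>r$, $UDV\ket{v_j} = 0$, which matches $A\ket{v_j}$ because $\|A\ket{v_j}\|^2 = \lambda_j = 0$ forces $A\ket{v_j} = 0$.

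The main obstacle, and the step that requires the most care, is the handling of the degenerate case $\sigma_i = 0$: one cannot define $\ket{u_i} = A\ket{v_i}/\sigma_i$ directly, so the right-singular vectors associated with zero eigenvalues contribute nothing to the left-singular set, and one must artificially complete the orthonormal family of $\ket{u_i}$'s. Verifying that this completion is compatible with the decomposition, that is, that the extra $\ket{u_i}$'s are indeed annihilated by $D$ in the correct rows, is the subtle bookkeeping that makes the argument work. Apart from this, every step reduces to an application of results already established in this chapter.
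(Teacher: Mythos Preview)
Your proposal is correct and follows the standard textbook argument via the spectral decomposition of the positive semi-definite operator $A^\dagger A$. However, there is nothing to compare against: the paper states the singular value decomposition as a preliminary result without proof, in keeping with its stated policy of omitting proofs of standard linear algebra facts and deferring to references such as Lax. The SVD is invoked only as a tool in the subsequent proof of the Schmidt decomposition. Your write-up would serve perfectly well as the missing proof, and the care you take with the kernel case ($\sigma_i = 0$) is exactly the right point to flag.
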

\begin{theorem}[Schmidt decomposition] \label{schmidt}
Let \( |\Psi\rangle \in \mathbb{C}^n \otimes \mathbb{C}^n\). Then there exist orthonormal sets \( \{|\psi_i\rangle\} \in\mathbb{C}^n  \) and \(\{|\varphi_i\rangle\} \in\mathbb{C}^n  \) such that
\[
|\Psi\rangle = \sum_i \lambda_i |\psi_i\rangle \otimes|\varphi_i\rangle.
\]
\end{theorem}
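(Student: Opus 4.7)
The plan is to reduce the statement to the Singular Value Decomposition theorem stated immediately above. The key observation is that any vector in $\mathbb{C}^n \otimes \mathbb{C}^n$ can be encoded as a matrix via its coefficients in a product basis, and the SVD of this matrix will supply exactly the orthonormal sets and the non-negative scalars demanded by the theorem.

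First I would fix two orthonormal bases $\{|e_i\rangle\}_{i=1}^n$ and $\{|f_j\rangle\}_{j=1}^n$ of $\mathbb{C}^n$. By the proposition characterizing bases of tensor product spaces, $\{|e_i\rangle \otimes |f_j\rangle\}_{i,j}$ is an orthonormal basis of $\mathbb{C}^n \otimes \mathbb{C}^n$, so I can uniquely expand
\[
|\Psi\rangle = \sum_{i,j=1}^n a_{ij}\, |e_i\rangle \otimes |f_j\rangle.
\]
I would then collect the coefficients into the matrix $A = (a_{ij}) \in \mathcal{M}_n(\mathbb{C})$ and apply the Singular Value Decomposition to obtain unitaries $U, V$ and a diagonal matrix $D = \mathrm{diag}(\lambda_1,\dots,\lambda_n)$ with $\lambda_k \geq 0$ such that $A = UDV$, which in entries reads $a_{ij} = \sum_k u_{ik}\, \lambda_k\, v_{kj}$.

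Next I would substitute this factorization into the expansion of $|\Psi\rangle$ and rearrange using the bilinearity of the Kronecker product:
\[
|\Psi\rangle = \sum_{i,j,k} u_{ik}\, \lambda_k\, v_{kj}\, |e_i\rangle \otimes |f_j\rangle = \sum_{k=1}^n \lambda_k \left( \sum_i u_{ik}\, |e_i\rangle \right) \otimes \left( \sum_j v_{kj}\, |f_j\rangle \right).
\]
Defining $|\psi_k\rangle \coloneqq \sum_i u_{ik} |e_i\rangle$ and $|\varphi_k\rangle \coloneqq \sum_j v_{kj} |f_j\rangle$ yields the desired form $|\Psi\rangle = \sum_k \lambda_k\, |\psi_k\rangle \otimes |\varphi_k\rangle$.

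The only remaining task is to verify that $\{|\psi_k\rangle\}$ and $\{|\varphi_k\rangle\}$ are orthonormal. Using the orthonormality of $\{|e_i\rangle\}$ together with the identity $U^\dagger U = I$, a direct index computation gives $\langle \psi_k | \psi_l \rangle = \sum_i u_{ik}^* u_{il} = (U^\dagger U)_{kl} = \delta_{kl}$, and analogously $\langle \varphi_k | \varphi_l \rangle = \sum_j v_{kj}^* v_{lj} = (V V^\dagger)_{kl} = \delta_{kl}$ using that $V$ is unitary. I do not anticipate a genuine obstacle here; the main care needed is bookkeeping of indices, in particular remembering that the columns of $U$ and the rows of $V$ are the objects whose orthonormality encodes the unitarity conditions in the form that matches the definitions of $|\psi_k\rangle$ and $|\varphi_k\rangle$.
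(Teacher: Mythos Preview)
Your proposal is correct and follows essentially the same route as the paper: expand $|\Psi\rangle$ in a product basis, apply the SVD to the coefficient matrix, regroup using bilinearity of the tensor product, and verify orthonormality via the unitarity of $U$ and $V$. The only cosmetic difference is notation for the bases and a slight index relabeling; your added remark that columns of $U$ and rows of $V$ are what carry the relevant orthonormality is exactly the bookkeeping the paper performs.
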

\begin{proof}
Let \(\{|e_j\rangle\}\) and \(\{|\tilde{e}_k\rangle\}\) be fixed orthonormal bases for \(\mathbb{C}^n\). The state \(|\Psi\rangle \in \mathbb{C}^n \otimes \mathbb{C}^n\) can be expanded as:
\[
|\Psi\rangle = \sum_{j,k} a_{jk} |e_j\rangle \otimes |\tilde{e}_k\rangle,
\]
where \(a_{jk}\) are complex numbers forming a matrix \(A\). By the singular value decomposition theorem, \(A = U D V\), where \(D\) is a diagonal matrix with non-negative entries \(\lambda_i\) and \(U\) and \(V\) are unitary matrices. Substituting this into the expansion:
\[
|\Psi\rangle = \sum_{j,k} \left( \sum_i u_{ji} \lambda_i v_{ik} \right) |e_j\rangle \otimes |\tilde{e}_k\rangle = \sum_i \lambda_i \left( \sum_j u_{ji} |e_j\rangle \right) \otimes \left( \sum_k v_{ik} |\tilde{e}_k\rangle \right).
\]
Define the states:
\[
|\psi_i\rangle \equiv \sum_j u_{ji} |e_j\rangle, \quad |\varphi_i\rangle \equiv \sum_k v_{ik} |\tilde{e}_k\rangle.
\]
Thus:
\[
|\Psi\rangle = \sum_i \lambda_i |\psi_i\rangle \otimes |\varphi_i\rangle.
\]
To verify orthonormality of $\{|\psi_i\rangle\}$, compute:
\[
\langle \psi_i | \psi_{i'} \rangle = \sum_{j,j'} u_{ji}^* u_{j'i'} \langle e_j | e_{j'} \rangle = \sum_j u_{ji}^* u_{ji'} = (U^\dagger U)_{i'i} = \delta_{i'i},
\]
since $U^\dagger U = I$. Orthonormality of $\{|\varphi_i\rangle\}$ is shown analogously.
\end{proof}

\begin{proposition}
Let \( A, B \in \mathcal{M}_{n}(\mathbb{C}) \). Then:
\[
\|A \otimes B\| = \|A\| \cdot \|B\|.
\]
\end{proposition}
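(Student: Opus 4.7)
The plan is to reduce the computation to a diagonal (hence normal) matrix via the singular value decomposition and then exploit unitary invariance of the spectral norm. By the SVD theorem just stated, I can write $A = U_A D_A V_A$ and $B = U_B D_B V_B$, with $U_A, V_A, U_B, V_B$ unitary and $D_A, D_B$ diagonal with non-negative entries. Applying the mixed product property $(X \otimes Y)(Z \otimes W) = (XZ) \otimes (YW)$ already established (twice), one obtains
\[
A \otimes B = (U_A \otimes U_B)\,(D_A \otimes D_B)\,(V_A \otimes V_B).
\]

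Next I would establish two auxiliary facts. First, $(X \otimes Y)^\dagger = X^\dagger \otimes Y^\dagger$, which follows by a short computation from the defining inner-product property of the Hermitian conjugate together with the tensor-product inner product $\braket{\psi \otimes \varphi | \psi' \otimes \varphi'} = \braket{\psi | \psi'}\braket{\varphi | \varphi'}$. Combining this with the mixed product property yields
\[
(U_A \otimes U_B)^\dagger (U_A \otimes U_B) = (U_A^\dagger U_A) \otimes (U_B^\dagger U_B) = I \otimes I = I,
\]
so $U_A \otimes U_B$ is unitary, and analogously for $V_A \otimes V_B$. Second, unitary factors do not affect the spectral norm: $\|UM\| = \|M\|$ because $\|U M \ket{\psi}\| = \|M\ket{\psi}\|$ for every $\ket{\psi}$, and $\|MV\| = \|M\|$ because $V$ acts bijectively on the unit sphere. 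Applied to the decompositions above, these give $\|A \otimes B\| = \|D_A \otimes D_B\|$ and, in the same way, $\|A\| = \|D_A\|$, $\|B\| = \|D_B\|$.

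Finally, $D_A \otimes D_B$ is diagonal with non-negative entries of the form $\sigma_i \tau_j$, where the $\sigma_i$ and $\tau_j$ are the singular values of $A$ and $B$. Being diagonal it is Hermitian and hence normal, so by the earlier proposition its spectral norm equals the magnitude of its largest eigenvalue, namely $\max_{i,j} \sigma_i \tau_j = (\max_i \sigma_i)(\max_j \tau_j) = \|D_A\|\cdot\|D_B\|$. Chaining the equalities yields $\|A \otimes B\| = \|A\|\cdot\|B\|$.

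The only genuinely nontrivial step is verifying that the Kronecker product of two unitaries is again unitary, and this rests entirely on the conjugate-transpose identity $(X \otimes Y)^\dagger = X^\dagger \otimes Y^\dagger$; everything else is bookkeeping with the mixed product property and the unitary invariance of $\|\cdot\|$.
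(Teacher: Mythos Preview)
Your proof is correct but follows a different route from the paper's. The paper argues the two inequalities separately: for $\|A \otimes B\| \leq \|A\|\cdot\|B\|$ it applies the Schmidt decomposition (just proven) to an arbitrary $|\Psi\rangle \in \mathbb{C}^n \otimes \mathbb{C}^n$, bounds $\|(A\otimes I)|\Psi\rangle\|$ term by term to obtain $\|A\otimes I\|\le\|A\|$ (and likewise $\|I\otimes B\|\le\|B\|$), and then factors $A\otimes B=(A\otimes I)(I\otimes B)$ together with submultiplicativity; the reverse inequality is obtained by evaluating on a simple tensor $|\psi\rangle\otimes|\varphi\rangle$ with $|\psi\rangle,|\varphi\rangle$ near-optimal for $A,B$ and letting $\epsilon\to 0$. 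Your SVD reduction is more direct and yields equality in one stroke, at the cost of needing $(X\otimes Y)^\dagger=X^\dagger\otimes Y^\dagger$ and hence that the Kronecker product of unitaries is unitary --- a fact the paper only records \emph{after} this proposition (so you are right to flag it as something to be established here). The paper's route, by contrast, showcases the Schmidt decomposition it has just introduced and avoids anticipating later results, but pays with the slightly fussier two-sided argument.
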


\begin{proof}
Let $|\Psi\rangle \in \mathbb{C}^n \otimes \mathbb{C}^n$. Using Theorem \ref{schmidt} we can, without loss of generality, write  \[
|\Psi\rangle = \sum_i \lambda_i |\psi_i\rangle \otimes|\varphi_i\rangle,
\] with  the sets \( \{|\psi_i\rangle\} \) and \(\{|\varphi_i\rangle\} \) orthonormal. It follows that the set \(\{ |\psi_i\rangle \otimes |\varphi_i\rangle \}\) is also orthonormal. Then, by the generalized Pythagorean theorem:
\[
\||\Psi\rangle\|^2 = \left\| \sum_{i} \lambda_i|\psi_i\rangle \otimes |\varphi_i\rangle \right\|^2 = \sum_{i} \lambda_i^2\| |\psi_i\rangle \otimes |\varphi_i\rangle \|^2 = \sum_{i} \lambda_i^2 \| |\psi_i\rangle \|^2.
\]
Similarly,
\[
(A \otimes I)|\Psi\rangle = \sum_{i} \lambda_iA |\psi_i\rangle \otimes |\varphi_i\rangle,
\]
and the set \(\{ A|\psi_j\rangle \otimes |\varphi_j\rangle \}\) remains orthogonal, so
\[
\|(A \otimes I)|\Psi\rangle\|^2 = \sum_{i} \lambda_i^2\| A|\psi_i\rangle \|^2 \leq \sum_{i} \lambda_i^2\|A\|^2 \| |\psi_i\rangle \|^2 = \|A\|^2 \||\Psi\rangle\|^2.
\]
Thus, by the definition of the spectral norm,
\[
\|A \otimes I\| \leq \|A\|.
\]
By the same argument,
\[
\|I \otimes B\| \leq \|B\|.
\]
Since
\[
A \otimes B = (A \otimes I)(I \otimes B),
\]
it follows that
\[
\|A \otimes B\| \leq \|A \otimes I\| \cdot \|I \otimes B\| \leq \|A\| \cdot \|B\|.
\]
For the reverse inequality, let \(\epsilon > 0\). By definition of the spectral norm, there exist unit vectors \(|\psi\rangle,|\varphi\rangle \in \mathbb{C}^n\) such that
\[
\|A|\psi\rangle\| > \|A\| - \epsilon, \quad \|B|\varphi\rangle\| > \|B\| - \epsilon.
\]
Then \(|\psi\rangle \otimes |\varphi\rangle\) is a unit vector in \(\mathbb{C}^n \otimes \mathbb{C}^n\) and
\[
\|(A \otimes B)(|\psi\rangle \otimes |\varphi\rangle)\| = \|A|\psi\rangle \otimes B|\varphi\rangle\| = \|A|\psi\rangle\| \cdot \|B|\varphi\rangle\| > (\|A\| - \epsilon)(\|B\| - \epsilon).
\]
Taking the limit \(\epsilon \to 0^+\) yields
\[
\|A \otimes B\| \geq \|A\| \cdot \|B\|.
\]
Combining both inequalities completes the proof.
\end{proof}
\begin{corollary}
Let \( A_1, ..., A_n\in \mathcal{M}_m(\mathbb{C})\). Then:
\[
\left\| A_1 \otimes A_2 \otimes \cdots \otimes A_n \right\| = \prod_{i=1}^n \|A_i\|.
\]
\end{corollary}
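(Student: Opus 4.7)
The plan is to prove this by induction on $n$, using the preceding proposition as the key tool. First I would dispatch the base cases: $n=1$ is the tautology $\|A_1\| = \|A_1\|$, and $n=2$ is exactly the statement of the preceding proposition.

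For the inductive step, assume the equality holds for any $n-1$ matrices in $\mathcal{M}_m(\mathbb{C})$. I would group the Kronecker product via associativity as
\[
A_1 \otimes A_2 \otimes \cdots \otimes A_n = (A_1 \otimes \cdots \otimes A_{n-1}) \otimes A_n,
\]
apply the preceding proposition to the pair on the right to obtain
\[
\|A_1 \otimes \cdots \otimes A_n\| = \|A_1 \otimes \cdots \otimes A_{n-1}\| \cdot \|A_n\|,
\]
and then invoke the inductive hypothesis on the first factor to conclude that this equals $\prod_{i=1}^{n} \|A_i\|$.

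The main obstacle is a dimensional technicality: the preceding proposition is stated for two square matrices of the \emph{same} dimension $n$, while in the induction $A_1 \otimes \cdots \otimes A_{n-1}$ lives in $\mathcal{M}_{m^{n-1}}(\mathbb{C})$ and $A_n$ lives in $\mathcal{M}_m(\mathbb{C})$. To handle this cleanly I would either (i) observe that the proof of the proposition transfers verbatim to two factors of possibly different dimensions, since the Schmidt decomposition generalizes to $\mathbb{C}^a \otimes \mathbb{C}^b$; or (ii) sidestep the issue entirely by a direct argument. For the direct argument, the upper bound follows from the factorization $A_1 \otimes \cdots \otimes A_n = \prod_{i=1}^n (I \otimes \cdots \otimes A_i \otimes \cdots \otimes I)$ combined with submultiplicativity of the spectral norm, where each factor has norm at most $\|A_i\|$ by the same Schmidt-style argument used in the proof of the preceding proposition. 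For the lower bound, given $\epsilon > 0$ I would select unit vectors $|\psi_i\rangle \in \mathbb{C}^m$ with $\|A_i |\psi_i\rangle\| > \|A_i\| - \epsilon$, note that $|\psi_1\rangle \otimes \cdots \otimes |\psi_n\rangle$ is a unit vector in $(\mathbb{C}^m)^{\otimes n}$, compute
\[
\|(A_1 \otimes \cdots \otimes A_n)(|\psi_1\rangle \otimes \cdots \otimes |\psi_n\rangle)\| = \prod_{i=1}^n \|A_i|\psi_i\rangle\| > \prod_{i=1}^n (\|A_i\| - \epsilon),
\]
and take $\epsilon \to 0^+$. Either route closes the gap and yields the claimed equality.
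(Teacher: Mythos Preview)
The paper states this corollary without proof, evidently intending it to follow from the preceding proposition by a routine induction on $n$. Your induction is exactly that intended argument, so in spirit you match the paper.

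You go further than the paper by correctly identifying a genuine gap in the setup: the preceding proposition is stated only for two square matrices of the \emph{same} size, whereas the inductive step requires $\|M \otimes A_n\| = \|M\| \cdot \|A_n\|$ with $M \in \mathcal{M}_{m^{n-1}}(\mathbb{C})$ and $A_n \in \mathcal{M}_m(\mathbb{C})$. The paper silently ignores this. Both of your proposed fixes are valid: the Schmidt decomposition indeed holds on $\mathbb{C}^a \otimes \mathbb{C}^b$ (the singular value decomposition works for rectangular matrices), so the proposition's proof carries over; and your direct $\epsilon$-argument for the lower bound combined with the factorization into commuting one-slot operators for the upper bound is a clean self-contained alternative. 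Either route completes the argument rigorously, which the paper itself does not do.
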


\begin{proposition}
Let \( A \in \mathcal{M}_{n}(\mathbb{C)}\) and \( B \in \mathcal{M}_{m}(\mathbb{C)}\). Then:
\begin{enumerate}
    \item  If $A$ and $B$ are Hermitian, then $A \otimes B$ is Hermitian.
    \item  If $A$ and $B$ are unitary, then $A \otimes B$ is unitary.
\end{enumerate}
\end{proposition}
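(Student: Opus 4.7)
The plan is to first establish the auxiliary identity $(A \otimes B)^\dagger = A^\dagger \otimes B^\dagger$, from which both claims follow almost immediately by combining with the mixed-product property $(A \otimes B)(C \otimes D) = (AC) \otimes (BD)$ proved earlier.

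For the auxiliary identity, I would work directly from the block definition: the $(i,j)$-block of $A \otimes B$ is $a_{ij} B$, so after taking the conjugate transpose the $(i,j)$-block becomes $\overline{a_{ji}}\, B^\dagger = (A^\dagger)_{ij} B^\dagger$, which is exactly the $(i,j)$-block of $A^\dagger \otimes B^\dagger$. A cleaner alternative, more in the spirit of the paper, is to verify the defining property of the Hermitian conjugate on simple tensors using the inner product on the tensor product space introduced earlier: $\langle (A \otimes B)(\psi \otimes \varphi)\,|\,\psi' \otimes \varphi'\rangle = \langle A\psi|\psi'\rangle\langle B\varphi|\varphi'\rangle = \langle \psi|A^\dagger \psi'\rangle \langle \varphi|B^\dagger\varphi'\rangle = \langle \psi \otimes \varphi\,|\,(A^\dagger \otimes B^\dagger)(\psi' \otimes \varphi')\rangle$, and then extend by sesquilinearity to all of $\mathbb{C}^n \otimes \mathbb{C}^m$.

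With this identity in hand, part (1) is a one-liner: if $A^\dagger = A$ and $B^\dagger = B$, then $(A \otimes B)^\dagger = A^\dagger \otimes B^\dagger = A \otimes B$. For part (2), I would combine the identity with the mixed-product property and the easily checked fact that $I_n \otimes I_m = I_{nm}$, computing $(A \otimes B)(A \otimes B)^\dagger = (A \otimes B)(A^\dagger \otimes B^\dagger) = (AA^\dagger) \otimes (BB^\dagger) = I_n \otimes I_m = I_{nm}$, and symmetrically for the product in the opposite order. By the earlier characterization of unitarity (condition 3 in Proposition 5), this suffices.

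The only nontrivial step is the auxiliary identity $(A \otimes B)^\dagger = A^\dagger \otimes B^\dagger$, which I expect to be the main obstacle, though a mild one: either route requires carefully tracking how the conjugate transpose interacts with the block structure or with the factored inner product. Once that is settled, both conclusions reduce to mechanical substitutions into tools already developed in the chapter.
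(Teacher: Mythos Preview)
Your proposal is correct and matches the paper's approach almost exactly: the paper also first establishes $(A \otimes B)^\dagger = A^\dagger \otimes B^\dagger$ via the block structure (though it splits this into separate computations of $(A \otimes B)^* = A^* \otimes B^*$ and $(A \otimes B)^T = A^T \otimes B^T$ before combining), and then derives both parts just as you do. Your inner-product alternative is not used in the paper, but your primary block-level argument is the same route.
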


\begin{proof}
\begin{enumerate}
\item First, we show that if \( A \) and \( B \) are Hermitian, then \( A \otimes B \) is also Hermitian. The conjugate and transpose of \( A \otimes B \) are computed as follows:
    \begin{align*}
(A\otimes B)^{*} &= 
\begin{bmatrix}
a_{11}B & a_{12}B & \dots & a_{1n}B \\ 
a_{21}B & a_{22}B & \dots & a_{2n}B \\ 
\vdots & \vdots & \vdots & \vdots \\ 
a_{m1}B & a_{m2}B & \dots & a_{mn}B
\end{bmatrix}^{*} \\
&= 
\begin{bmatrix}
a^{*}_{11}B^{*} & a^{*}_{12}B^{*} & \dots & a^{*}_{1n}B^{*} \\ 
a^{*}_{21}B^{*} & a^{*}_{22}B^{*} & \dots & a^{*}_{2n}B^{*} \\ 
\vdots & \vdots & \vdots & \vdots \\ 
a^{*}_{m1}B^{*} & a^{*}_{m2}B^{*} & \dots & a^{*}_{mn}B^{*}
\end{bmatrix} \\
&= A^{*}\otimes B^{*}.
\end{align*}

\begin{align*}
(A\otimes B)^{T} &= 
\begin{bmatrix}
a_{11}B & a_{12}B & \dots & a_{1n}B \\ 
a_{21}B & a_{22}B & \dots & a_{2n}B \\ 
\vdots & \vdots & \vdots & \vdots \\ 
a_{m1}B & a_{m2}B & \dots & a_{mn}B
\end{bmatrix}^{T} \\
&= 
\begin{bmatrix}
a_{11}B^{T} & a_{21}B^{T} & \dots & a_{n1}B^{T} \\ 
a_{12}B^{T} & a_{22}B^{T} & \dots & a_{n2}B^{T} \\ 
\vdots & \vdots & \vdots & \vdots \\ 
a_{1m}B^{T} & a_{2m}B^{T} & \dots & a_{nm}B^{T}
\end{bmatrix} \\
&= A^{T}\otimes B^{T}.
\end{align*}
The first statement follows from the former two relations. If $A$ and $B$ are Hermitian then:
\begin{align*}
(A\otimes B)^{\dagger} &= \left((A\otimes B)^{T}\right)^{*} \\
&= (A^{T}\otimes B^{T})^{*} \\
&= (A^{T})^{*}\otimes(B^{T})^{*} \\
&= A^{\dagger}\otimes B^{\dagger} \\
&= A \otimes B.
\end{align*}
\item Next, we show that if \( A \) and \( B \) are unitary, then \( A \otimes B\) is unitary
\begin{align*}
(A \otimes B)^\dagger (A \otimes B) &= (A^\dagger \otimes B^\dagger)(A \otimes B) \\
&= (A^\dagger A \otimes B^\dagger B) \\
&= I \otimes I.
\end{align*}
\end{enumerate}
\end{proof}

\begin{observation}
    The notation $\ket{\psi}^{\otimes k}$ will be used to represent $\ket{\psi}$ tensored with itself $k$ times. For example $\ket{\psi}^{\otimes 2} = \ket{\psi} \otimes \ket{\psi}$. An analogous notation is also used for linear maps and vector spaces.
\end{observation}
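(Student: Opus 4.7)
The final statement is an \emph{observation} that serves purely as a notational convention: it declares that $\ket{\psi}^{\otimes k}$ abbreviates the $k$-fold Kronecker product $\ket{\psi}\otimes\ket{\psi}\otimes\cdots\otimes\ket{\psi}$, with analogous meaning for linear maps and vector spaces. As such, there is no mathematical claim requiring proof; the plan is therefore to verify that the notation is \emph{well-defined}, since any implicit content in the observation concerns the unambiguity of iterated Kronecker products.

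The main point to address is associativity of the Kronecker product, because $\ket{\psi}^{\otimes k}$ is written without parentheses. I would first observe that from the explicit block-matrix definition of $A\otimes B$ given earlier, a direct computation shows $(A\otimes B)\otimes C = A\otimes(B\otimes C)$ entry by entry: both sides are the $mps\times nqr$ block matrix whose $(i,j)$ block in the outer partitioning is $a_{ij}(B\otimes C)$, which equals $(a_{ij}B)\otimes C$ by scalar homogeneity. Specializing to column vectors, this justifies writing $\ket{\psi}^{\otimes k}$ unambiguously as an element of $(\mathbb{C}^n)^{\otimes k}\cong\mathbb{C}^{n^k}$, and analogously $A^{\otimes k}\in\mathcal{M}_{n^k}(\mathbb{C})$ for a linear map $A\in\mathcal{M}_n(\mathbb{C})$.

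For the vector-space extension, I would note that $V^{\otimes k}$ is defined recursively as $V\otimes V^{\otimes(k-1)}$ with $V^{\otimes 1}=V$, and that the dimension formula from the preceding proposition, iterated $k-1$ times, yields $\dim(V^{\otimes k})=(\dim V)^k$. The inner product on $V^{\otimes k}$ inherited from the natural inner product on tensor products satisfies $\braket{\psi_1\otimes\cdots\otimes\psi_k|\varphi_1\otimes\cdots\otimes\varphi_k}=\prod_{i=1}^k\braket{\psi_i|\varphi_i}$, again by a straightforward induction on $k$.

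The main (and only) obstacle is purely bookkeeping: making sure the associativity verification and the inductive arguments are spelled out just enough to convince the reader that the shorthand $\ket{\psi}^{\otimes k}$, $A^{\otimes k}$, and $V^{\otimes k}$ are each independent of the order in which the binary tensor products are grouped. Since the observation is not intended as a theorem, I would keep this very brief, essentially a single remark citing associativity, and defer any heavier use of the notation to later chapters where it appears in the context of $k$-local Hamiltonians.
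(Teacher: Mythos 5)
You correctly recognize that this observation is a notational convention with no mathematical content to prove, which matches the paper: the paper offers no proof here at all. Your extra remarks on associativity of the Kronecker product and well-definedness of $V^{\otimes k}$ are sound and would be a reasonable justification if one wanted it, but the paper simply states the notation and moves on.
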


\begin{definition}
The following matrices are referred to as the Pauli matrices:
\[
 X \equiv \begin{bmatrix} 0 & 1 \\ 1 & 0 \end{bmatrix}, \quad
 Y \equiv \begin{bmatrix} 0 & -i \\ i & 0 \end{bmatrix}, \quad
 Z \equiv \begin{bmatrix} 1 & 0 \\ 0 & -1 \end{bmatrix}.
\]
\end{definition}
These matrices are Hermitian, unitary and they play a fundamental role in quantum computation and quantum information.
\pagebreak

\chapter{Quantum Computation}
\section{The Postulates of Quantum Mechanics}
Quantum mechanics arose as a mathematical framework to describe and predict the outcome of various experiments carried out between the late 19th century and the early 20th century, which could not be explained by classical physics. Examples of such experiments include the photoelectric effect and the Stern-Gerlach experiment. The postulates of quantum mechanics did not arise from intuitive reasoning, as is often the case in physics, but were instead deduced retrospectively from experimental observations that resisted classical interpretation. It is therefore not always trivial to see how they provide a connection between the physical world and mathematical formalism. In this section, we will provide a brief but intuitive introduction to these postulates, aiming to convey why they are formulated in the way they are, without delving into the full details, as this is not the central focus of this work.
 To help us, we will see the role they play in describing the spin of electrons. 
\\\\
While we will not delve deeply into the full physical meaning of electron spin, for the purposes of this work, it can be understood operationally: spin refers to the direction in which a particle is deflected when passed through a magnetic field. Specifically, a magnetic field oriented along a given axis (e.g., the \( z \)-axis) enables measurement of the spin component along that direction. In the Stern-Gerlach experiment (see Figure~\ref{fig:sterngerlach}), a beam of electrons was passed through a spatially varying (inhomogeneous) magnetic field aligned along a particular axis. \footnote{In the original experiment, neutral silver atoms (not electrons) were used. However, the observed quantization of angular momentum was later interpreted as evidence of the intrinsic spin of the unpaired electron in the silver atom. For pedagogical clarity, we refer to electrons directly in this discussion.} This inhomogeneous field exerts a force on the magnetic moment of the electrons, causing them to deflect. Contrary to classical expectations of a continuous distribution of deflections, the beam split into exactly two distinct paths. The electrons were deflected either upward or downward, corresponding to two discrete spin states: ``spin-up'' and ``spin-down.'' This demonstrated that spin is quantized and, when measured along any axis, can take only one of two values.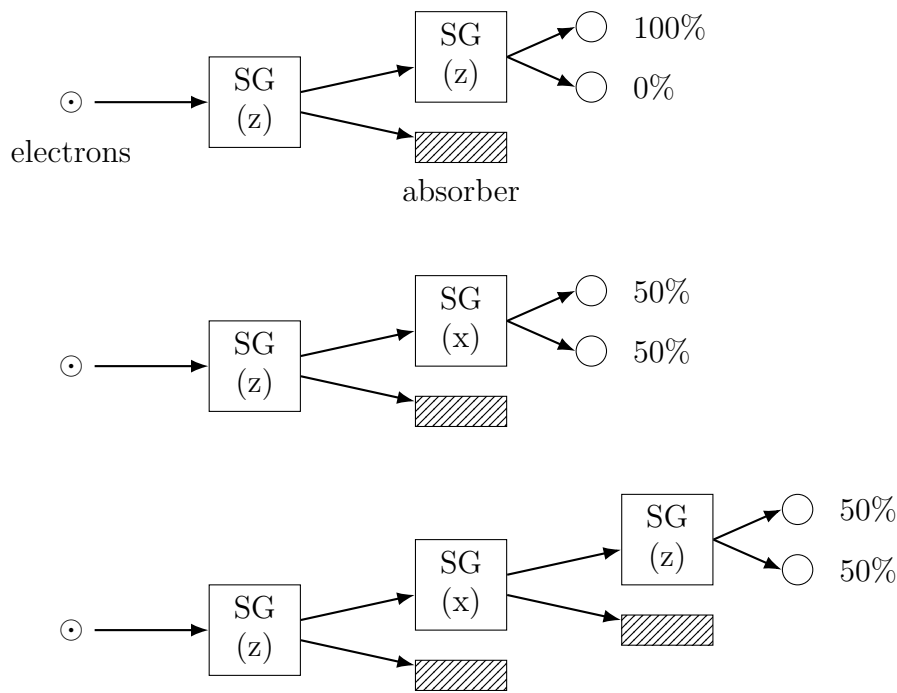
\begin{figure}[H]
    \centering
\begin{tikzpicture}[node distance=1cm and 1.5cm]

% First row
\begin{scope}[yshift=-3.5cm]
\node (source1) at (0,0) {$\odot$};
\node[sg, right=of source1] (sgz1) {SG\\(z)};
\node[absorber, right=of sgz1, yshift=-0.6cm] (abs1) {};
\node[sg, right=of sgz1, yshift=0.6cm] (sgx1) {SG\\(x)};
\node[measurement, right=0.9cm of sgx1, yshift=0.4cm] (m1a) {};
\node[measurement, right=0.9cm of sgx1, yshift=-0.4cm] (m1b) {};

% Arrows first row
\draw[myarrow] (source1) -- (sgz1);
\draw[myarrow] (sgz1) -- (sgx1) node[midway, above] {};
\draw[myarrow] (sgz1) -- (abs1);
\draw[myarrow] (sgx1.east) -- (m1a.west);
\draw[myarrow] (sgx1.east) -- (m1b.west);
\node[right=0.2cm of m1a] {50\%};
\node[right=0.2cm of m1b] {50\%};
\end{scope}

% Second row
\begin{scope}[yshift=-7cm]
\node (source2) at (0,0) {$\odot$};
\node[sg, right=of source2] (sgz2) {SG\\(z)};
\node[absorber, right=of sgz2, yshift=-0.6 cm] (abs2) {};
\node[sg, right=of sgz2, yshift=0.6 cm] (sgx2) {SG\\(x)};
\node[absorber, right=of sgx2, yshift=-0.6cm] (abs3) {};
\node[sg, right=of sgx2, yshift=0.6cm] (sgz3) {SG\\(z)};
\node[measurement, right=0.9cm of sgz3, yshift=0.4cm] (m2a) {};
\node[measurement, right=0.9cm of sgz3, yshift=-0.4cm] (m2b) {};

% Arrows second row
\draw[myarrow] (source2) -- (sgz2);
\draw[myarrow] (sgz2) -- (sgx2) node[midway, above] {};
\draw[myarrow] (sgz2) -- (abs2);
\draw[myarrow] (sgx2) -- (abs3);
\draw[myarrow] (sgx2) -- (sgz3);
\draw[myarrow] (sgz3.east) -- (m2a.west);
\draw[myarrow] (sgz3.east) -- (m2b.west);
\node[right=0.2cm of m2a] {50\%};
\node[right=0.2cm of m2b] {50\%};
\end{scope}

% Third row

\node (source3) at (0,0) {$\odot$};
\node[below=2pt of source3] {electrons};
\node[sg, right=of source3] (sgz4) {SG\\(z)};
\node[absorber, right= of sgz4, yshift=-0.6 cm] (abs4) {};
\node[below=2pt of abs4] {absorber};
\node[sg, right=of sgz4, yshift = 0.6 cm] (sgz5) {SG\\(z)};
\node[measurement, right=0.9cm of sgz5, yshift=0.4cm] (m3a) {};
\node[measurement, right=0.9cm of sgz5, yshift=-0.4cm] (m3b) {};

% Arrows third row
\draw[myarrow] (source3) -- (sgz4);
\draw[myarrow] (sgz4) -- (sgz5) node[midway, above] {};
\draw[myarrow] (sgz4) -- (abs4);
\draw[myarrow] (sgz5.east) -- (m3a.west);
\draw[myarrow] (sgz5.east) -- (m3b.west);
\node[right=0.2cm of m3a] {100\%};
\node[right=0.2cm of m3b] {0\%};

\end{tikzpicture}
\caption{Visual representation of the results of the Stern-Gerlach experiment}
\label{fig:sterngerlach}
\end{figure}
 Moreover, the results of spin measurements exhibit inherently probabilistic behavior and depend on the sequence in which measurements are made. For example, if a beam is first filtered to contain only spin-up electrons along the \( z \)-axis (by blocking the downward-deflected component) and then passed through another \( z \)-axis Stern-Gerlach apparatus, all particles are again deflected upward, as expected. However, if instead the filtered spin-up beam is measured along the \( x \)-axis, the result is a 50/50 distribution between spin-up and spin-down along that axis. Remarkably, if we then measure spin along the \( z \)-axis again, the distribution is no longer 100\% spin-up: half the electrons are now found to have spin-down! This sequence of measurements reveals two key features: measurement changes the quantum state, and the outcome is probabilistic, depending both on the state and the measurement ``basis''.
 This idea of different discrete measurement results (up or down), which lead to states determined by the measurement outcome, resembles the close connection between eigenvalues and eigenvectors. Therefore, linear algebra seems like a good mathematical framework for quantum mechanics.

\begin{postulate}[State Space]
Associated with any isolated quantum physical system there is a complex Hilbert space $\mathbb{H}$ known as the state space of the system. The system is completely described by its state vector, which is a unit vector in the system's state space.
\end{postulate}

   \begin{remark}
    A Hilbert space is an inner product space that is complete with respect to the norm induced by the inner product. In other words, every Cauchy-convergent sequence \( (\varphi_{n})_{n \in \mathbb{N}} \subset \mathbb{H} \) has a limit \( \lim_{n \to \infty} \varphi_{n} = \varphi \in \mathbb{H} \) that also lies in \( \mathbb{H} \). In the finite-dimensional inner product spaces that arise in quantum computation, a Hilbert space is essentially the same as an inner product space.
\end{remark}
This way, the spin-up and spin-down states along the z-axis of an electron can be defined as:
\[
\ket{\uparrow_z} = \begin{bmatrix} 1 \\ 0 \end{bmatrix}, \quad \text{and} \quad
\ket{\downarrow_z} = \begin{bmatrix} 0 \\ 1 \end{bmatrix}.
\]
If we associate spin-up with a measurement result of \( +1 \) and spin-down with \( -1 \), we can neatly encode this information about the measurement possibilities using a matrix. In this case, the matrix with eigenvalues and associated eigenvectors as above is the Pauli \( Z \) matrix:
\[
 Z \equiv \begin{bmatrix} 1 & 0 \\ 0 & -1 \end{bmatrix} = 1 \cdot \ket{\uparrow_z}\bra{\uparrow_z} + (-1) \cdot \ket{\downarrow_z}\bra{\downarrow_z}.
\]
We say that the Pauli $Z$ matrix is the observable corresponding to the measurement of the spin component along the $z$-axis.

\begin{postulate}[Measurement]
An observable, that is, a physically measurable quantity of a quantum system is represented by a Hermitian
map $A$ on the system's state space. The possible measurement values of an observable are given by the
spectrum $\sigma(A)$. The probability \( P_{\psi}(\lambda) \) that for a quantum system in the state \( |\psi\rangle \in \mathbb{H} \) a measurement of the observable yields the eigenvalue \(\lambda\) of \(A\) is given with the help of the projection \(P_\lambda\) onto the eigenspace \(\text{Eig}(A,\lambda)\) as
\[
P_{\psi}(\lambda) = \|P_\lambda|\psi\rangle\|^2,
\] where we adopt the convention that \( P_\lambda = 0 \) \( \left(\text{i.e., } \text{Eig}(A, \lambda) = \{0\}  \right) \) when \( \lambda \notin \sigma(A) \). The state of the system after the measurement will be $\frac{P_\lambda|\psi\rangle}{\|P_\lambda|\psi\rangle\|^2}$.
\end{postulate}

To represent the spin states along the \( x \)-axis, and motivated by the outcomes of the Stern-Gerlach experiment, we seek two vectors that satisfy two properties. First, they must be orthonormal, since they must be unit vectors and the probability of measuring spin-down along the \( x \)-axis when the system is in the spin-up \( x \)-axis state should be zero and vice versa. Second, each state must yield a 50/50 probability of measuring spin-up or spin-down along the \( z \)-axis, reflecting the experimental observation that a spin-up \( x \)-axis state yields a symmetric distribution when measured along \( z \). 
These requirements do not uniquely determine the basis, but a conventional and convenient choice is:
\[
 \ket{\uparrow_x} = \frac{1}{\sqrt{2}}(\ket{\uparrow_z} + \ket{\downarrow_z}) = \begin{bmatrix} \frac{1}{\sqrt{2}} \\ \frac{1}{\sqrt{2}} \end{bmatrix}, \quad
\ket{\downarrow_x} = \frac{1}{\sqrt{2}}(\ket{\uparrow_z} - \ket{\downarrow_z}) = \begin{bmatrix} \frac{1}{\sqrt{2}} \\ -\frac{1}{\sqrt{2}} \end{bmatrix},
\]
with corresponding measurement results \( +1 \) and \( -1 \) and the associated observable Pauli \( X \):
\[
 X \equiv \begin{bmatrix} 0 & 1 \\ 1 & 0 \end{bmatrix} = 1 \cdot \ket{\uparrow_x}\bra{\uparrow_x} + (-1) \cdot \ket{\downarrow_x}\bra{\downarrow_x}.
\]

Notice that the projector associated with measuring spin-up along the z-axis is \( P = \ket{\uparrow_z} \bra{\uparrow_z} \) and the probability for measuring spin-up along the z-axis from an electron initially in the spin-up state along the x-axis is:
\[
\| (\ket{\uparrow_z} \bra{\uparrow_z}) \ket{\uparrow_x } \|^2=\| \langle \uparrow_z | \uparrow_x \rangle \ket{\uparrow_z} \|^2= \|\langle \uparrow_z | \uparrow_x \rangle \|^2 = \left\| \frac{1}{\sqrt{2}} \right\|^2 = \frac{1}{2}.
\]
Similarly, for the spin-down along the x-axis:
\[
\| \langle \downarrow_z | \uparrow_x \rangle \|^2 = \frac{1}{2}.
\]

\begin{observation}
        In Postulate 1, the requirement that state vectors are unit vectors ensures that the total probability of all possible outcomes of a measurement sums to 1. This is because the squared magnitudes of the components of the state vector in a certain basis correspond to the probabilities of the measurement outcomes in that basis. Similarly, in Postulate 2, the requirement that the observable is Hermitian guarantees that all possible measurement outcomes are real numbers. This is important, as physical quantities, such as energy levels, must have real values.
\end{observation}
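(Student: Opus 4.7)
The plan is to treat the two assertions in the observation as largely independent and to leverage the spectral machinery already developed. The claim about total probability summing to one requires the spectral decomposition of the Hermitian observable together with the completeness relation, while the claim about real measurement outcomes is essentially a direct appeal to a previously stated proposition.

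For the first assertion, I would start from the Corollary giving the spectral decomposition of the Hermitian (hence normal) observable $A$: there is an orthonormal eigenbasis $\{|e_i\rangle\}$ with $A = \sum_i \lambda_i |e_i\rangle\langle e_i|$. Grouping the rank-one projectors by eigenvalue produces the eigenspace projectors $P_\lambda = \sum_{i : \lambda_i = \lambda} |e_i\rangle\langle e_i|$, each of which is self-adjoint and idempotent, so that $P_\lambda^\dagger P_\lambda = P_\lambda$. The completeness relation (Lemma) applied to the full eigenbasis yields $\sum_{\lambda \in \sigma(A)} P_\lambda = I$. Summing the probabilities given in Postulate 2 then gives
$$\sum_{\lambda \in \sigma(A)} P_\psi(\lambda) = \sum_\lambda \|P_\lambda|\psi\rangle\|^2 = \sum_\lambda \langle\psi|P_\lambda|\psi\rangle = \Bigl\langle\psi\Bigm|\sum_\lambda P_\lambda\Bigm|\psi\Bigr\rangle = \langle\psi|\psi\rangle,$$
which equals $1$ precisely when $|\psi\rangle$ is a unit vector. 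The interpretive remark that squared components in an eigenbasis are the probabilities is then immediate: writing $|\psi\rangle = \sum_i \psi_i |e_i\rangle$ in a non-degenerate eigenbasis gives $P_\psi(\lambda_i) = |\langle e_i|\psi\rangle|^2 = |\psi_i|^2$.

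For the second assertion, I would simply quote the earlier Proposition stating that a normal map is Hermitian if and only if its eigenvalues are real. Since every Hermitian map is normal, this forces $\sigma(A) \subset \mathbb{R}$, and by Postulate 2 the measurement outcomes are drawn from $\sigma(A)$, so they are necessarily real.

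The only delicate step is the identification $\sum_\lambda P_\lambda = I$, which is not merely the statement that the projectors are mutually orthogonal but also that they exhaust the whole space. This relies on the full strength of the spectral theorem (existence of an orthonormal eigenbasis for normal operators), which in turn is what makes the normality/Hermiticity hypothesis in Postulate 2 essential. Once this is in place, both parts of the observation reduce to short computations using results already proved.
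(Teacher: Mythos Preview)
Your argument is correct. Note, however, that in the paper this statement is left as a bare \emph{Observation}: no proof is given, and the text simply asserts the two facts as informal justifications for the normalisation and Hermiticity hypotheses in Postulates~1 and~2. So there is nothing in the paper's treatment to compare against beyond the one-sentence explanations already contained in the observation itself.

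What you have written is therefore strictly more than the paper provides: you supply a genuine derivation of $\sum_{\lambda} P_\psi(\lambda)=\langle\psi|\psi\rangle$ via the spectral decomposition and the completeness relation, and you correctly identify which earlier results (the Corollary on spectral decomposition, the Lemma on completeness, and the Proposition characterising Hermitian maps by real spectrum) do the work. The only minor remark is that your chain $\|P_\lambda|\psi\rangle\|^2=\langle\psi|P_\lambda|\psi\rangle$ implicitly uses $P_\lambda^\dagger P_\lambda=P_\lambda$, which you state but do not prove; it follows immediately from Lemma~2 (Hermitian conjugate of an outer product) together with orthonormality of the $|e_i\rangle$, so this is not a gap.
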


How does the state of a quantum mechanical system change with time? We owe the answer to this question to Erwin Schr\"odinger.

\begin{postulate}[Evolution]
The time evolution of a closed quantum system is described by the Schr\"odinger equation:
\[
i\hbar\frac{d\ket{\psi(t)}}{dt} = H(t)\ket{\psi(t)},
\]
where $\hbar$ is Planck's constant and $H$ is a Hermitian map called the system's Hamiltonian. Often we set $\hbar = 1$ by absorbing it into $H$.
\end{postulate}

\begin{remark}
    In this work, we will only be concerned with the case where the Hamiltonian $H$ is time-independent.
\end{remark}

However, notice that since quantum states are represented by unit vectors, any transformation must be unitary. That is, the norm must be preserved to ensure that the resulting quantum state remains valid.

\begin{postulateprime}
The evolution of a closed quantum system is described by a unitary transformation. That is, the state $\ket{\psi}$ of the system at time $t$ is related to the state $\ket{\psi'}$ at time $t_0$ by a unitary map $U$ that depends only on $t$ and $t_0$:

\[
\ket{\psi'} = U(t,t_0)\ket{\psi}.
\]
\end{postulateprime}
What is the connection between the Hamiltonian picture of dynamics (Postulate 3) and the unitary map picture (Postulate 3')? The answer is provided by writing down the solution to the Schr\"odinger equation, which in the time-independent case is easily verified to be:

\[|\psi(t)\rangle = e^{-iH(t - t_0)} |\psi(t_0)\rangle = U(t, t_0)|\psi(t_0)\rangle,\]
where we define:
\[U(t, t_0) \equiv e^{-iH(t - t_0)}.\]
\begin{remark}
    Without loss of generality, we will take \( t_0 = 0 \), so that the time-evolution map simplifies to \( U(t) = e^{-iHt} \).
\end{remark}

\begin{proposition}
    Let $H$ be a Hermitian matrix, then $e^{-iHt}$ is unitary.
\end{proposition}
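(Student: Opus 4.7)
The plan is to leverage the spectral calculus for normal operators (Definition \ref{function}) together with the characterization of unitarity given in the earlier proposition, namely that a linear map is unitary if and only if it is normal with all eigenvalues of modulus one. Since $H$ is Hermitian it is in particular normal, so by the spectral theorem it admits a decomposition $H = \sum_i \lambda_i \ket{e_i}\bra{e_i}$ with $\{\ket{e_i}\}$ an orthonormal eigenbasis and, crucially, $\lambda_i \in \mathbb{R}$ (this realness is what will make the complex exponentials land on the unit circle).

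First I would apply Definition \ref{function} to the function $f(z) = e^{-izt}$ to write
\[
e^{-iHt} = \sum_i e^{-i\lambda_i t}\,\ket{e_i}\bra{e_i}.
\]
Next, using the lemma that $(\ket{w}\bra{v})^\dagger = \ket{v}\bra{w}$ together with the fact that $(e^{-i\lambda_i t})^* = e^{i\lambda_i t}$ (here is where $\lambda_i \in \mathbb{R}$ is used), I would compute
\[
(e^{-iHt})^\dagger = \sum_i e^{i\lambda_i t}\,\ket{e_i}\bra{e_i}.
\]
Then I would multiply the two expressions, use orthonormality $\braket{e_i|e_j} = \delta_{ij}$ to collapse the double sum to a single sum over $\ket{e_i}\bra{e_i}$, and invoke the Completeness Relation to obtain $(e^{-iHt})^\dagger e^{-iHt} = I$. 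The computation for $e^{-iHt}(e^{-iHt})^\dagger = I$ is identical, so by the earlier equivalent characterization of unitarity (item 3 in the proposition: $U^\dagger U = U U^\dagger = I$) we conclude that $e^{-iHt}$ is unitary.

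There is no real obstacle here; the only thing to be careful about is to justify using the spectral definition of $e^{-iHt}$ rather than a power series, and to make sure the Hermitian conjugate of a sum of outer products is handled termwise, which follows from linearity of the dagger operation together with the lemma on conjugates of outer products. An equally clean alternative would be to note that the eigenvalues of $e^{-iHt}$ are $e^{-i\lambda_i t}$ with $|e^{-i\lambda_i t}| = 1$ since $\lambda_i t \in \mathbb{R}$, and that $e^{-iHt}$ is trivially normal (it commutes with its conjugate, being a function of $H$ alone), so the earlier proposition immediately gives unitarity, but the direct $U^\dagger U = I$ computation is probably more transparent for the reader at this stage of the text.
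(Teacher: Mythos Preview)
Your proof is correct and follows essentially the same approach as the paper: spectral decomposition of $H$, applying Definition~\ref{function} to obtain $e^{-iHt} = \sum_j e^{-i\lambda_j t}\ket{e_j}\bra{e_j}$, taking the Hermitian conjugate termwise, and multiplying using orthonormality and the completeness relation to obtain $I$. The paper carries out the product $e^{-iHt}(e^{-iHt})^\dagger$ explicitly, exactly as you describe.
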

\begin{proof}
    Since $H$ is Hermitian, it is normal and thus has a spectral decomposition:
\[
H = \sum_{j} \lambda_j \ket{e_j}\!\bra{e_j},
\]
where all $\lambda_j$ are real and $\{\ket{e_j}\}$ forms an orthonormal basis. 
Substituting the spectral decomposition of $H$ in the expression $e^{-iHt}$ and applying definition \ref{function}:
\[
e^{-iHt} = e^{-i \sum_j \lambda_j \ket{e_j}\!\bra{e_j} t}
= \sum_j e^{-i \lambda_j t }\ket{e_j}\!\bra{e_j}.
\]
The Hermitian conjugate is calculated as:
\[
(e^{-iHt})^\dagger = \left(\sum_j e^{-i \lambda_j t }\ket{e_j}\!\bra{e_j}\right)^\dagger
= \sum_j e^{i \lambda_j t }\ket{e_j}\!\bra{e_j}.
\]

To verify unitarity, we compute the product:
\begin{align*}
e^{-iHt}(e^{-iHt})^\dagger &= \sum_j e^{-i \lambda_j t }\ket{e_j}\!\bra{e_j}
                                  \sum_{j'} e^{i \lambda_{j'} t} \ket{e_{j'}}\!\bra{e_{j'}} \\
&= \sum_j \sum_{j'} e^{-i (\lambda_j - \lambda_{j'}) t}\ket{e_j}\braket{e_j|e_{j'}}\bra{e_{j'}} \\
&= \sum_j e^0 \ket{e_j}\!\bra{e_j} \\
&= \sum_j \ket{e_j}\!\bra{e_j} = I.
\end{align*}

The same calculation shows $(e^{-iHt})^\dagger e^{-iHt} = I$. Therefore, $e^{-iHt}$ is unitary.
\end{proof}

\begin{proposition}
        Every unitary matrix \( U \) can be written as \( U = e^{iK} \) for some Hermitian matrix \( K \).

\end{proposition}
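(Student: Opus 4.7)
The plan is to construct $K$ directly from the spectral decomposition of $U$ by taking a Hermitian ``logarithm'' eigenvalue-by-eigenvalue.

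First I would invoke the earlier proposition characterizing unitary maps: since $U$ is unitary, it is normal and every eigenvalue has modulus one. Applying the spectral theorem for normal matrices (and its corollary) then yields an orthonormal basis $\{\ket{e_j}\}_{j=1}^n$ of $\mathbb{C}^n$ consisting of eigenvectors of $U$ with corresponding eigenvalues $\mu_j \in \mathbb{C}$ satisfying $|\mu_j| = 1$. Hence each $\mu_j$ lies on the unit circle and can be written as $\mu_j = e^{i\theta_j}$ for some real number $\theta_j$ (any branch of the argument works; for instance, take $\theta_j \in [0, 2\pi)$).

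Next I would define
\[
K := \sum_{j=1}^n \theta_j \ket{e_j}\bra{e_j}.
\]
Because the $\theta_j$ are real and $\{\ket{e_j}\}$ is an orthonormal basis, $K$ is expressed in the form guaranteed by the spectral corollary, so it is normal with spectrum $\{\theta_j\} \subset \mathbb{R}$; by the earlier proposition asserting that a normal map with real eigenvalues is Hermitian, $K$ is Hermitian.

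Finally, I would apply Definition \ref{function} with $f(z) = e^{iz}$ to the Hermitian (hence normal) matrix $K$, obtaining
\[
e^{iK} = \sum_{j=1}^n e^{i\theta_j} \ket{e_j}\bra{e_j} = \sum_{j=1}^n \mu_j \ket{e_j}\bra{e_j} = U,
\]
where the last equality is the spectral decomposition of $U$. There is no real obstacle here: the only substantive point is the surjectivity of the complex exponential onto the unit circle, which allows us to choose the real phases $\theta_j$; all remaining work is a direct application of the spectral calculus developed earlier.
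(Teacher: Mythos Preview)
Your proof is correct and follows essentially the same route as the paper: spectrally decompose $U$, write each unit-modulus eigenvalue as $e^{i\theta_j}$ with $\theta_j \in \mathbb{R}$, define $K = \sum_j \theta_j \ket{e_j}\bra{e_j}$, verify Hermiticity, and compute $e^{iK}$ via the spectral calculus. The only cosmetic difference is that the paper checks Hermiticity of $K$ directly from $(\ket{e_j}\bra{e_j})^\dagger = \ket{e_j}\bra{e_j}$, whereas you invoke the ``normal with real spectrum'' proposition---both are equally valid.
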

\begin{proof}
    Since \( U \) is unitary, it is normal and thus has a spectral decomposition:
\[
U = \sum_j \lambda_j \ket{e_j}\!\bra{e_j},
\]
    where \( \{\ket{e_j}\} \) is an orthonormal basis and each \( \lambda_j \in \mathbb{C} \) satisfies \( |\lambda_j| = 1 \). Therefore, each eigenvalue can be written as \( \lambda_j = e^{i\theta_j} \) for some real \( \theta_j \in \mathbb{R} \). Hence,
 \[
    U = \sum_j e^{i\theta_j} \ket{e_j}\!\bra{e_j}.
    \]
Define the matrix \( K \) as:
\[
K  = \sum_j \theta_j \ket{e_j}\!\bra{e_j}.
\]
    Since all \( \theta_j \) are real and \( \ket{e_j}\!\bra{e_j} \) are Hermitian, \( K \) is Hermitian:
 \[
    K^\dagger = \sum_j \theta_j \left( \ket{e_j}\!\bra{e_j} \right)^\dagger = \sum_j \theta_j \ket{e_j}\!\bra{e_j} = K.
    \]
 Then,
    \[
    e^{iK} = \sum_j e^{i\theta_j} \ket{e_j}\!\bra{e_j} = U.
    \]
This completes the proof.
\end{proof}
Postulates 3 and 3' provide two equivalent perspectives on the time evolution of quantum states. The Schr\"odinger equation describes the continuous evolution of the system over time, while the unitary transformation framework can be interpreted as representing this continuous process through discrete unitary transformations. This distinction is particularly significant, as it closely mirrors the foundational concept behind digital quantum computing: the approximation of continuous time evolution by discretizing it into a sequence of unitary operations, or time steps.
\\\\
The last postulate explains how to describe quantum systems composed of multiple individual subsystems (e.g., the spins of $n$ different electrons).

\begin{postulate}[Composite Systems]
Let \(\mathbb{H}^A\) and \(\mathbb{H}^B\) be the Hilbert spaces associated with two quantum systems \(A\) and \(B\), respectively. The Hilbert space of the composite system consisting of \(A\) and \(B\) is the tensor product \(\mathbb{H}^A \otimes \mathbb{H}^B\). More generally, for a composite system of \(n\) subsystems with associated Hilbert spaces \(\mathbb{H}^{(1)}, \ldots, \mathbb{H}^{(n)}\), the state space of the total system is given by
\[
\mathbb{H}^{(1)} \otimes \mathbb{H}^{(2)} \otimes \cdots \otimes \mathbb{H}^{(n)}.
\]
If each subsystem is in a state \(\ket{\psi_i} \in \mathbb{H}^{(i)}\), then the global state of the composite system is the product state
\[
\ket{\psi_1} \otimes \ket{\psi_2} \otimes \cdots \otimes \ket{\psi_n}.
\]
\end{postulate}

\section{Simulating Quantum Mechanical Systems}

The primary interest of this work is to explore methods for simulating the time evolution of quantum systems, particularly in the case where the Hamiltonian is time-independent. At first glance, this might appear straightforward: Postulate 3 provides a differential equation with an analytical solution that describes the system's evolution. However, let us consider the size of the relevant Hamiltonian. If we are interested in the evolution of a single electron's spin under a magnetic field, the electron can be represented by a two-dimensional complex vector. In this case, the Hamiltonian is a \(2 \times 2\) complex matrix, simple enough. But what if we want to simulate a system of \(n\) electrons? According to Postulate 4, the quantum state of the full system is the tensor product of the \(n\) individual states. This results in a state vector of dimension \(2^n\) and a Hamiltonian represented by a \(2^n \times 2^n\) matrix.
\\\\
Even though this evolution has an analytical solution, computing \(e^{-i  Ht}\) using a Taylor expansion or other classical numerical methods becomes intractable for large \(n\), due to the exponential growth in matrix size. This quickly exceeds the capacity of classical computers. Faced with this challenge, in 1985 Richard Feynman proposed a radical idea: instead of simulating quantum systems using classical computers, why not use quantum systems themselves? That is, use binary degrees of freedom (such as the spin of an electron) and apply unitary operations (as described by Postulate 3') to approximate the continuous time evolution dictated by Postulate 3. This idea led to the introduction of the quantum bit, or qubit, a quantum analogue of the classical bit.

\begin{definition}
A qubit is a quantum system whose state space is a two-dimensional Hilbert space \(\mathbb{H} \cong \mathbb{C}^2\), called the qubit space. An orthonormal basis of \(\mathbb{H}\), referred to as the computational basis, is given by
\[
\ket{0} = \begin{bmatrix} 1 \\ 0 \end{bmatrix}, \quad 
\ket{1} = \begin{bmatrix} 0 \\ 1 \end{bmatrix}.
\]
The Pauli \(Z\) matrix is associated with the qubit as a standard observable.
\end{definition}

How do we operate on these newly defined qubits? Postulate 3 provides the answer: only unitary transformations are allowed. In contrast to classical computation, which relies on logic gates such as AND, OR and NOT, quantum computation utilizes quantum gates: unitary operations that evolve quantum states in accordance with the principles of quantum mechanics.

\begin{definition}
A quantum gate acting on $n$ qubits is a unitary map
\[
U : \mathbb{H}^{\otimes n} \cong \mathbb{C}^{2^n} \to \mathbb{H}^{\otimes n} \cong \mathbb{C}^{2^n}.
\]
\end{definition}
What does a quantum algorithm look like? It is simply a sequence of quantum gates that evolves an initial quantum state into a final state which, ideally, encodes the solution to a given problem. The goal is that upon measurement, this solution is obtained with high probability. Such quantum algorithms are commonly represented as quantum circuits.

\begin{definition}
    Let \( n, L \in \mathbb{N} \) and \( \{U_1, \ldots, U_n \}\) be a set of quantum gates. We call
\[
U = U_L \ldots U_1 ,
\]
a quantum circuit constructed from the gates \( U_1, \ldots, U_L \) and \( L \in \mathbb{N} \) the length or depth of the circuit relative to the gate set \(\{U_1, \ldots, U_n\}\).
\end{definition}

Let us illustrate this point through concrete examples. Just as classical computation relies on a finite set of universal logic gates, it is neither realistic nor physically feasible to assume that a quantum computer can implement arbitrary unitary transformations directly. Instead, quantum algorithms are built from a small set of elementary gates that are easier to implement in practice. For the purposes of this discussion, we assume access to the following fundamental gates: the Hadamard gate, the \(R_z(\theta)\) gate with fixed angle \(\theta = 2\) and the controlled-NOT (CNOT) gate.

\[
\text{Had} = \frac{1}{\sqrt{2}} \begin{bmatrix}
1 & 1 \\
1 & -1
\end{bmatrix},
\quad \text{(Hadamard gate)}
\]

\[
R_z = \begin{bmatrix}
e^{-i} & 0 \\
0 & e^{i}
\end{bmatrix},
\quad \text{(Z-rotation gate with } \theta = 2\text{)}
\]

\[
\text{CNOT} = \begin{bmatrix}
1 & 0 & 0 & 0 \\
0 & 1 & 0 & 0 \\
0 & 0 & 0 & 1 \\
0 & 0 & 1 & 0
\end{bmatrix},
\quad \text{(Controlled-NOT gate)}
\]
\begin{observation}
    The matrix representation of the CNOT gate provided earlier is expressed with respect to the computational basis for two qubits. If \(\{\ket{0}, \ket{1}\}\) is the computational basis for the single-qubit space \(\mathbb{C}^2\), then the basis for the two-qubit space \(\mathbb{C}^2 \otimes \mathbb{C}^2 \cong \mathbb{C}^4\) is given by the Kronecker products:
\[
\begin{aligned}
\ket{00} &:= \ket{0}\ket{0} := \ket{0} \otimes \ket{0} = \begin{bmatrix}1 \\ 0\end{bmatrix} \otimes \begin{bmatrix}1 \\ 0\end{bmatrix} = \begin{bmatrix}1 \\ 0 \\ 0 \\ 0\end{bmatrix}, \\
\ket{01} &:= \ket{0}\ket{1} := \ket{0} \otimes \ket{1} = \begin{bmatrix}1 \\ 0\end{bmatrix} \otimes \begin{bmatrix}0 \\ 1\end{bmatrix} = \begin{bmatrix}0 \\ 1 \\ 0 \\ 0\end{bmatrix}, \\
\ket{10} &:= \ket{1}\ket{0} := \ket{1} \otimes \ket{0} = \begin{bmatrix}0 \\ 1\end{bmatrix} \otimes \begin{bmatrix}1 \\ 0\end{bmatrix} = \begin{bmatrix}0 \\ 0 \\ 1 \\ 0\end{bmatrix}, \\
\ket{11} &:= \ket{1}\ket{1} := \ket{1} \otimes \ket{1} = \begin{bmatrix}0 \\ 1\end{bmatrix} \otimes \begin{bmatrix}0 \\ 1\end{bmatrix} = \begin{bmatrix}0 \\ 0 \\ 0 \\ 1\end{bmatrix}.
\end{aligned}
\]
Notice that this is precisely the standard basis of \(\mathbb{C}^4\); more generally, the computational basis for a system of \(n\) qubits corresponds to the standard basis of \(\mathbb{C}^{2^n}\).
\end{observation}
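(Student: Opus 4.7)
The plan is to split the observation into two pieces: the four explicit two-qubit Kronecker products displayed in the statement, which just need to be checked, and the general assertion that for every $n$ the products $\ket{b_1}\otimes\cdots\otimes\ket{b_n}$, with $b_i\in\{0,1\}$, form the standard basis of $\mathbb{C}^{2^n}$. The first part is immediate from the definition of the Kronecker product of two column vectors given earlier in the preliminaries: each $\ket{b_1}\otimes\ket{b_2}$ is a $4\times 1$ vector with a single $1$ in the position indexed by the integer $b_1b_2$ read in binary (with $b_1$ as the most significant bit), thereby reproducing $\ket{e_1},\ket{e_2},\ket{e_3},\ket{e_4}$.

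For the general claim I would argue by induction on $n$, reading a bitstring $(b_1,\ldots,b_n)\in\{0,1\}^n$ as the integer $k(b_1,\ldots,b_n):=1+\sum_{i=1}^{n}b_i\,2^{n-i}\in\{1,\ldots,2^n\}$ and proving that $\ket{b_1}\otimes\cdots\otimes\ket{b_n}$ equals the $k(b_1,\ldots,b_n)$-th standard basis vector of $\mathbb{C}^{2^n}$. The base case $n=1$ is the defining identification of the single-qubit computational basis with the standard basis of $\mathbb{C}^2$. For the inductive step I would write $\ket{b_1\cdots b_{n+1}}=\ket{b_1\cdots b_n}\otimes\ket{b_{n+1}}$ and unfold the block structure of $\otimes$: if a vector $\ket{v}\in\mathbb{C}^{2^n}$ is the $j$-th standard basis vector, then $\ket{v}\otimes\ket{0}$ has its unique nonzero entry in position $2j-1$ and $\ket{v}\otimes\ket{1}$ in position $2j$. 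Combining this with the inductive hypothesis and the elementary recursion $k(b_1,\ldots,b_{n+1})=2\,k(b_1,\ldots,b_n)-1+b_{n+1}$ closes the induction. Since the map $k$ is a bijection from $\{0,1\}^n$ onto $\{1,\ldots,2^n\}$, the $2^n$ products exhaust the standard basis.

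There is no real obstacle here beyond notational bookkeeping. The earlier proposition on the dimension of a tensor product already guarantees that a product of orthonormal bases is a basis of the tensor product, so the only genuine task is pinning down the correspondence between bitstrings and positions. The main care needed is to keep the indexing convention (starting at $1$ versus $0$, most- versus least-significant bit) consistent throughout the induction; once that is fixed, the argument is a pure unwinding of the definition of the Kronecker product together with a one-line inductive step.
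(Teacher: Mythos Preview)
Your proposal is correct. In fact, the paper treats this as a bare observation with no separate proof: the four explicit Kronecker products are part of the statement itself, and the general $n$-qubit claim is simply asserted without justification. Your inductive argument, tracking the bijection $k(b_1,\ldots,b_n)=1+\sum_i b_i 2^{n-i}$ and unfolding the block structure of the Kronecker product, supplies a genuine proof of the general assertion where the paper offers none. So rather than differing from the paper's approach, you have filled in a gap the paper leaves to the reader; the only remark is that for an ``observation'' of this kind the paper evidently regards the explicit $n=2$ computation as sufficient illustration, and your level of detail, while entirely sound, goes beyond what was expected.
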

In the examples that follow, we consider specific quantum systems evolving under physically meaningful Hamiltonians and show how these evolutions can be simulated using only the gates above for one second. For instance, the Hamiltonian of an electron subjected to a uniform magnetic field in the \(z\)-direction is simply \(H = Z\) (it is not common for a Hermitian matrix to have physical meaning both as an observable and as a Hamiltonian, but the Pauli matrices are a special case).
\begin{example}[$H=Z$]
Given an initial state \( \ket{\psi_0} \), we aim to find a quantum circuit \( U(t) \) such that \( U(t) \ket{\psi_0} \) approximates \( e^{-iZ t} \ket{\psi_0} \) for $t=1$. We can compute this time evolution map as follows:

\[
e^{-iZ t} = e^{-it\ket{0}\bra{0} + it\ket{1}\bra{1}}
= e^{-it} \ket{0}\bra{0} + e^{it} \ket{1}\bra{1}
= \begin{bmatrix}
    e^{-it} & 0 \\
    0 & e^{it}
\end{bmatrix}.
\]
Thus, for \( t = 1 \), we can use \( e^{-i Z t} = R_z \) and therefore:
\[
e^{-i Z} \ket{\psi_0} = R_z \ket{\psi_0},
\] for all \( \ket{\psi_0} \).
\end{example}
Interestingly, consider what happens when applying the previous evolution to the state \( \ket{0} \). That is, how does an electron with spin along the \( z \)-axis evolve under a magnetic field along the \( z \)-axis?

\[
R_z \ket{0} =
\begin{bmatrix}
e^{-i} & 0 \\
0 & e^{i}
\end{bmatrix}
\begin{bmatrix}
1 \\
0
\end{bmatrix}
=
\begin{bmatrix}
e^{-i} \\
0
\end{bmatrix}
= e^{-i} \ket{0}.
\]
This results in a global phase factor \( e^{-i} \), which has no physical effect on measurement outcomes. That is, if the spin is already aligned with the \( z \)-axis, under a magnetic field along \( z \), it experiences no torque, so the spin does not change. Meanwhile, consider an electron with spin aligned along the \( x \)-axis. The state aligned along \( x \) is \( \ket{+} := \frac{1}{\sqrt{2}} \left( \ket{0} + \ket{1} \right) \) and applying the \( R_z \) gate results in:

\[
R_z \ket{+} =
\begin{bmatrix}
e^{-i} & 0 \\
0 & e^{i}
\end{bmatrix}
\cdot \frac{1}{\sqrt{2}} \begin{bmatrix} 1 \\ 1 \end{bmatrix}
=
\frac{1}{\sqrt{2}} \begin{bmatrix}
e^{-i} \\
e^{i}
\end{bmatrix}
= \frac{e^{-i}}{\sqrt{2}} \begin{bmatrix}
1 \\
e^{2i}
\end{bmatrix}.
\]
Here, the relative phase between \( \ket{0} \) and \( \ket{1} \) is modified, meaning that this transformation has an observable effect. The magnetic field is rotating the electron's spin around the \( xy \)-plane. \\\\ So, to simulate the transformation according to the Hamiltonian \( H = Z \), we only needed one quantum gate. But what about for \( H = X \)? That is, a uniform (time-independent) magnetic field in the \( x \)-direction. 
\begin{example}[$H=X$]
Notice that 
\[
X = \text{Had} \circ Z \circ \text{Had}.
\]
We can therefore write:

\[
e^{-i X t} = e^{-i \text{Had} \circ Z \circ\text{Had} \, t}.
\]
Since \( \text{Had}^2 = I \), using the series expansion for the exponential, we get:
\[
e^{-i X t} = \sum_{j=0}^{\infty} \frac{(-i t)^j}{j!} \left( \text{Had} \circ Z \circ \text{Had} \right)^j
= \text{Had} \circ \left( \sum_{j=0}^{\infty} \frac{(-i t)^j}{j!} Z^j \right) \circ \text{Had}
= \text{Had} \circ e^{-i Z t} \circ \text{Had}.
\]
Thus, in this case, to simulate the time evolution of any quantum system according to $H=X$ for $t=1$, we needed three quantum gates: Had, \( R_z \) and Had again.
\end{example}
\begin{example}[$H = Z \otimes Z$]
We now consider a Hamiltonian of the form \( H = Z \otimes Z \). Notice that:
\[
e^{-iZ \otimes Z t} \neq e^{-i Z t} \otimes e^{-i Z t}.
\]
Once again, we can compute the exponential using its series expansion and the fact that \( (Z \otimes Z)^2 = I \):
\[
e^{-i Z \otimes Z t} = \sum_{j=0}^{\infty} \frac{(-i t)^j}{j!} (Z \otimes Z)^j
= \sum_{j=0}^{\infty} \frac{(-i t)^{2j}}{(2j)!} I + \sum_{j=0}^{\infty} \frac{(-i t)^{2j+1}}{(2j+1)!} (Z \otimes Z)^{2j+1}.
\]
This simplifies to:
\[
e^{-i Z \otimes Z t} = \cos(t) I - i \sin(t) Z \otimes Z.
\]

To understand the action of this map, consider its effect on the computational basis states:
\begin{enumerate}

\item For \( \ket{00} \):
\[
e^{-i Z \otimes Z t} \ket{00} = \cos(t) I \ket{00} - i \sin(t) Z \ket{0} \otimes Z \ket{0} = (\cos(t) - i \sin(t)) \ket{00}.
\]

\item For \( \ket{01} \):
\[
e^{-i Z \otimes Z t} \ket{01} = \cos(t) I \ket{01} - i \sin(t) Z \ket{0} \otimes Z \ket{1} = (\cos(t) + i \sin(t)) \ket{01}.
\]

\item For \( \ket{10} \):
\[
e^{-i Z \otimes Z t} \ket{10} = \cos(t) I \ket{10} - i \sin(t) Z \ket{1} \otimes Z \ket{0} = (\cos(t) + i \sin(t)) \ket{10}.
\]

\item For \( \ket{11} \):
\[
e^{-i Z \otimes Z t} \ket{11} = \cos(t) I \ket{11} - i \sin(t) Z \ket{1} \otimes Z \ket{1} = (\cos(t) - i \sin(t)) \ket{11}.
\]
 
\end{enumerate}

Thus, in matrix form:

\[
e^{-i Z \otimes Z t} = \begin{bmatrix}
e^{-it} & 0 & 0 & 0 \\
0 & e^{it} & 0 & 0 \\
0 & 0 & e^{it} & 0 \\
0 & 0 & 0 & e^{-it}
\end{bmatrix}.
\]
In general, for any computational basis state \( \ket{ab} \), this can be written as:

\[
e^{-i Z \otimes Z t} \ket{ab} = e^{-it (-1)^{a \oplus b}} \ket{ab},
\]
where \( \oplus \) denotes addition modulo 2. This transformation can be implemented for $t=1$ in a quantum circuit as follows. Applying a CNOT gate gives:

\[
CNOT \ket{ab} = \ket{a} \ket{a \oplus b}.
\]
Then, applying the \( R_z \) gate to the second qubit yields:

\[
(I \otimes R_z) \, \text{CNOT} \ket{ab} = e^{-i (-1)^{a \oplus b}} \ket{a} \ket{a \oplus b}.
\]
Finally, applying CNOT again:
\[
\text{CNOT} (e^{-i (-1)^{a \oplus b}} \ket{a} \ket{a \oplus b}) = e^{-i (-1)^{a \oplus b}} \ket{a} \ket{a \oplus a \oplus b}= e^{-i (-1)^{a \oplus b}} \ket{a} \ket{b}.
\]
Thus, the full circuit \( \text{CNOT} \circ (I \otimes R_z) \circ \text{CNOT} \) reproduces the evolution under \( e^{-i Z \otimes Z t} \) for one second and requires exactly three quantum gates.
\end{example}
\begin{example}[$H = X \otimes Z$]
We now consider the Hamiltonian \( H = X \otimes Z \). To simulate the corresponding unitary evolution, we begin by observing the identity:
\[
e^{-i X \otimes Z t} = e^{-i (Had \circ Z \circ Had) \otimes Z t} = e^{-i (Had \otimes I)(Z \otimes Z)(Had \otimes I)t}.
\]
Using that $(Had \otimes I)^2 = Had^2 \otimes I^2 = I \otimes I$, we obtain:
\[
e^{-i X \otimes Z t} = (Had \otimes I) \, e^{-i Z \otimes Z t} \, (Had \otimes I).
\]
We have previously shown that:
\[
e^{-i Z \otimes Z } = \text{CNOT} \circ (I \otimes R_z) \circ \text{CNOT}.
\]
Substituting this into the expression for \( e^{-i X \otimes Z t} \) for $t=1$, we find:
\[
e^{-i X \otimes Z } = (Had \otimes I) \circ \text{CNOT} \circ (I \otimes R_z) \circ \text{CNOT} \circ (Had \otimes I).
\]

Thus, the unitary evolution generated by the Hamiltonian \( X \otimes Z \) can be implemented with a quantum circuit consisting of five gates.
\end{example}
What about the Hamiltonian \( H = Z + X \), which represents a uniform magnetic field pointing diagonally in the \( xz \)-plane? Since \( Z \) and \( X \) do not commute, we cannot simply write
\[
e^{-i(Z+X)t} = e^{-i Z t} e^{-i X t}.
\]

Until now, we have relied on case-specific constructions to build quantum circuits that simulate the time evolution generated by particular Hamiltonians. This naturally raises the question: given an arbitrary Hamiltonian, is there a general method to efficiently construct a quantum circuit (composed of elementary gates from a fixed, universal set) that approximates the time evolution it generates? More precisely, how many such gates are required to approximate the unitary \( e^{-iHt} \) up to a given precision and how does this number scale with the size of the system?
\\\\
This question becomes especially important when dealing with Hamiltonians that arise in models of physical interest. One such example is the Ising model, which describes a system of \( n \) spins, each subject to: \begin{enumerate}
\item a uniform magnetic field along the \( x \)-axis (corresponding to the \( X \)-term) and
\item an effective local field along the \( z \)-axis arising from interactions with neighboring spins (captured by the \( Z_i Z_j \)-terms). \end{enumerate}

The Hamiltonian for the transverse-field Ising model on \(n\) qubits is given by
\[
H = -J \sum_{\langle i, j \rangle} Z_i Z_j - h \sum_{i} X_i,
\]
where \( J \in \mathbb{R} \) is the interaction strength and \( h \in \mathbb{R} \) controls the strength of the transverse magnetic field. In this context, we consider a system of \(n\) particles arranged linearly. The notation \( \langle i, j \rangle \) refers to nearest-neighbor pairs, meaning \( j = i + 1 \). That is, the sum over \( \langle i, j \rangle \) runs over all adjacent sites in the 1D chain: \( (1,2), (2,3), \dots, (n-1, n) \). Each term \( Z_i Z_j \) denotes the tensor product of \(n\) matrices acting on the total space \( (\mathbb{C}^2)^{\otimes n} \), where identity operators \(I\) are applied on all qubits except positions \(i\) and \(j\), where \(Z\) acts. Similarly, \(X_i\) denotes the tensor product with an \(X\) acting only on the \(i\)-th qubit and identities elsewhere. This defines a family of Hamiltonians whose size and complexity grow with the number of spins \( n \). For example, for $n=3$, the Hamiltonian is: \[
H = -J \left( Z \otimes Z \otimes I + I \otimes Z \otimes Z \right) - h \left( X \otimes I \otimes I + I \otimes X \otimes I + I \otimes I \otimes X \right),
\] and for $n=4$:\begin{align*}
H = & -J \left( Z \otimes Z \otimes I \otimes I + I \otimes Z \otimes Z \otimes I + I \otimes I \otimes Z \otimes Z \right) \\
    & - h \left( X \otimes I \otimes I \otimes I + I \otimes X \otimes I \otimes I + I \otimes I \otimes X \otimes I + I \otimes I \otimes I \otimes X \right).
\end{align*}
Another important example is the Heisenberg model, described by the Hamiltonian:
\[
H = J \sum_{\langle i, j \rangle} \left( X_i X_j + Y_i Y_j + Z_i Z_j \right),
\]
in which neighboring spins interact via all three components of spin, modeling isotropic spin-spin coupling.
\\\\
More generally, one often encounters Hamiltonians that describe physical systems other than spin chains. For example, the Bose-Hubbard model
\[
H = -t \sum_{\langle i,j \rangle} \left( b_i^\dagger b_j + \text{h.c.} \right) + \frac{U}{2} \sum_i n_i(n_i - 1),
\]
describes bosons hopping on a lattice with on-site repulsion. Here, \( b_i^\dagger \) and \( b_j \) are bosonic creation and annihilation operators at sites \( i \) and \( j \), respectively, and \( n_i = b_i^\dagger b_i \) is the number operator counting bosons at site \( i \). The parameter \( t \) is the hopping amplitude between neighboring sites \( \langle i,j \rangle \), while \( U \) quantifies the strength of the repulsive interaction when two bosons occupy the same site. The abbreviation \text{h.c.} stands for the Hermitian conjugate of the preceding term, which is added to ensure that the Hamiltonian is Hermitian.
\\\\
In all these cases, the central computational challenge is the same: given a Hamiltonian \( H \), how large (in terms of gate count and depth) is the most efficient quantum circuit that simulates the time evolution \( e^{-iHt} \)? How does this complexity scale with the simulation time, precision or number of particles involved? These are precisely the kinds of questions addressed in the field of Hamiltonian Simulation, which forms the core subject of this work.

\pagebreak
\chapter{Hamiltonian Simulation}
\section{Introduction}
Following the discussion from the previous chapter, we begin by formalizing the problem of simulating a quantum system using a quantum computer.
\begin{problem}[Hamiltonian Simulation]
    Given a Hamiltonian \( H \in M_{2^n}(\mathbb{C}) \) and an evolution time \( t > 0 \), find a quantum circuit \( U_{\text{sim}} \) that, for any initial state \( \ket{\psi(0)} \in \mathbb{C}^{2^n} \), produces an approximation \( \ket{\psi_{\text{sim}}} \) of the final state \( \ket{\psi(t)} = e^{-iHt} \ket{\psi(0)} \) up to a prescribed error tolerance \( \epsilon > 0 \).
\end{problem}

\begin{observation}
The worst-case \( \ell_2 \) error in quantum simulation satisfies
    \[
    \max_{\ket{\psi(0)}} \|\ket{\psi(t)} - \ket{\psi_{\text{sim}}} \| = \max_{\ket{\psi(0)}} \| e^{-iHt}\ket{\psi(0)} - U_{\text{sim}}\ket{\psi(0)} \| = \| e^{-iHt} - U_{\text{sim}} \|.
    \]
    This follows because the matrix norm \( \| \cdot \| \) represents the largest possible deviation over all input states, which corresponds to the worst-case state error.

\end{observation}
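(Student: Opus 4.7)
The plan is to prove the two equalities sequentially, since each is essentially a direct unpacking of definitions once the relevant ingredients are identified. The underlying ingredients are: (i) Postulate 1, which forces $\ket{\psi(0)}$ to be a unit vector, so that the maximization in question is automatically a supremum over the unit sphere of $\mathbb{C}^{2^n}$; (ii) the definition of the spectral norm as the subordinate norm to the Euclidean norm, recalled earlier in the preliminaries; and (iii) linearity of the maps $e^{-iHt}$ and $U_{\text{sim}}$.

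First I would establish the left-hand equality. This is purely a substitution: by hypothesis $\ket{\psi(t)} = e^{-iHt}\ket{\psi(0)}$ and $\ket{\psi_{\text{sim}}} = U_{\text{sim}}\ket{\psi(0)}$, so
\[
\ket{\psi(t)} - \ket{\psi_{\text{sim}}} = e^{-iHt}\ket{\psi(0)} - U_{\text{sim}}\ket{\psi(0)},
\]
and taking norms and then the maximum over all admissible initial states yields the first equality term-by-term, with no analytic content beyond reading off definitions.

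Next I would handle the right-hand equality. By linearity of both $e^{-iHt}$ and $U_{\text{sim}}$, I can factor the initial state out of the difference,
\[
e^{-iHt}\ket{\psi(0)} - U_{\text{sim}}\ket{\psi(0)} = \bigl(e^{-iHt} - U_{\text{sim}}\bigr)\ket{\psi(0)},
\]
and since $\ket{\psi(0)}$ ranges precisely over the unit vectors of $\mathbb{C}^{2^n}$ (Postulate 1), the maximum of $\|(e^{-iHt} - U_{\text{sim}})\ket{\psi(0)}\|$ over such states is, by definition, the spectral norm $\|e^{-iHt} - U_{\text{sim}}\|$.

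The only mild point worth flagging, rather than a genuine obstacle, is that the definition of the spectral norm is stated as a supremum whereas the observation uses a maximum. Since the unit sphere in $\mathbb{C}^{2^n}$ is compact and the map $\ket{\psi(0)} \mapsto \|(e^{-iHt} - U_{\text{sim}})\ket{\psi(0)}\|$ is continuous, the supremum is attained and coincides with the maximum, so the interchange is legitimate and the argument is complete.
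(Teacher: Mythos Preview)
Your proof is correct and follows exactly the same approach as the paper, which simply remarks that the equality holds because the spectral norm is by definition the largest deviation over unit input states. Your version is in fact more thorough, since you explicitly invoke Postulate~1 to justify restricting to unit vectors and address the sup-versus-max issue via compactness of the unit sphere, points the paper leaves implicit.
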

Clearly, if one had access to arbitrary quantum gates acting on an arbitrary number of qubits, this problem would be trivial: one could simply choose \( U_{\text{sim}} = e^{-iHt} \). However, in practice, for quantum computing, one typically only has access to a finite ``universal" set of one- and two-qubit gates. Universal means that any unitary map \( U \) can be implemented with a circuit consisting of these gates (though not necessarily efficiently). From this point onward, we assume that we are working with such a universal gate set, which transforms the problem into approximating \( e^{-iHt} \) using these gates efficiently. \\\\
To define what we mean by efficient, we first need to establish a notion of computational complexity.

\begin{definition} \label{complexity}
 Let \( f, g: \mathbb{R}^n \to \mathbb{R}^+ \). We say that \( f(\mathbf{x}) = O(g(\mathbf{x})) \) if there exist positive constants \( c >0\) and \( \mathbf{x}_0 \in \mathbb{R}^n \) such that for all \( \mathbf{x} \in \mathbb{R}^n \) with \( \|\mathbf{x}\| \geq \|\mathbf{x}_0\| \), the following inequality holds:
    \[
    f(\mathbf{x}) \leq c \, g(\mathbf{x}).
    \]
    We say that \( g(\mathbf{x}) \) is an asymptotic upper bound for \( f(\mathbf{x}) \).
\end{definition}
In particular, we are interested in how the complexity of the circuit used to approximate \( e^{-iHt} \) scales with respect to the evolution time \( t \), the size of the system \( n \) and the required precision \( \epsilon \).
\begin{definition}
    Let \( f: \mathbb{R}^n \to \mathbb{R}^+ \) be a function that depends on a set of variables \( x_1, x_2, \dots, x_n \). We say that \( f(x_1, x_2, \dots, x_n) = O(\text{poly}(x_1, x_2, \dots, x_n)) \) if there exist constants \( k_1, k_2, \dots, k_n \geq 0 \) such that:
    \[
    f(x_1, x_2, \dots, x_n) = O(x_1^{k_1} x_2^{k_2} \cdots x_n^{k_n}).
    \]
    That is, \( f \) grows at most polynomially in the variables \( x_1, x_2, \dots, x_n \), with the degree of each \( x_i \) determined by \( k_i \). Similarly, we say that \( f(x_1, x_2, \dots, x_n) = O(\text{polylog}(x_1, x_2, \dots, x_n)) \) if there exist constants \( k_1, k_2, \dots, k_n \geq 0 \) such that:
    \[
    f(x_1, x_2, \dots, x_n) = O\left((\log x_1)^{k_1} (\log x_2)^{k_2} \cdots (\log x_n)^{k_n}\right),
    \] meaning that \( f \) grows at most polylogarithmically in its inputs.
\end{definition}

\begin{definition}
    A Hamiltonian \( H \in M_{2^n}(\mathbb{C}) \) can be efficiently simulated if, for any \( t > 0 \) and \( \epsilon > 0 \), there exists a quantum circuit \( U \) consisting of \( O\left(\text{poly}(n, t, 1/\epsilon)\right) \) gates such that
    \[
    \| U - e^{-iHt} \| < \epsilon.
    \]
\end{definition}
We aim to understand the conditions under which a Hamiltonian can be efficiently simulated. We will see that it is generally not possible to efficiently simulate arbitrary Hamiltonians. Instead, we will describe several classes of Hamiltonians that can be efficiently simulated. \\\\
One such class consists of Hamiltonians that act nontrivially on only a constant number of qubits.

\begin{definition}
We say that a Hamiltonian \( H  \in M_{2^n}(\mathbb{C}) \) acts non-trivially on at most \( k < n \) qubits if \( H \) is the tensor product of a matrix \( H' \) acting on \( k \) qubits and the identity matrix acting on the remaining \( n - k \) qubits. That is,
    \[
    H = I^{\otimes (n-k)} \otimes H',
    \]
    up to a reordering of the qubits.
\end{definition}
\begin{proposition}
    Let \( H \in M_{2^n}(\mathbb{C})\) be a Hamiltonian that acts non-trivially on at most \( k \) qubits. Then:
    \[
    e^{-iHt} = e^{-i\left(I^{\otimes (n-k)} \otimes H'\right)t} = I^{\otimes (n-k)} \otimes e^{-iH't}.
    \]
\end{proposition}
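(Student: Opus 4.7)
The plan is to reduce everything to the power series definition of the matrix exponential and then exploit the mixed-product property $(A \otimes B)(C \otimes D) = (AC) \otimes (BD)$ that was established earlier. First I would set $A := I^{\otimes(n-k)} \otimes H'$ and observe that iterating the mixed-product identity gives $A^m = (I^{\otimes(n-k)})^m \otimes (H')^m = I^{\otimes(n-k)} \otimes (H')^m$ for every $m \geq 0$, since tensor powers of the identity remain the identity.

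Next I would absorb the scalar $-it$ into the right-hand factor and apply the series expansion of the exponential to write
\begin{equation*}
e^{-iAt} = \sum_{m=0}^\infty \frac{1}{m!}\bigl(I^{\otimes(n-k)} \otimes (-itH')^m\bigr).
\end{equation*}
Using linearity of the Kronecker product in its second argument, together with the continuity of the map $B \mapsto I^{\otimes(n-k)} \otimes B$ in finite dimensions, I would then pull the tensor product outside the infinite sum and recognize the remaining series as $e^{-iH't}$, yielding the desired identity $e^{-iAt} = I^{\otimes(n-k)} \otimes e^{-iH't}$.

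The main obstacle is essentially notational bookkeeping, namely keeping track of which factor carries the scalar $-it$ and applying the mixed-product property cleanly at each power. The one slightly delicate point is the interchange of the infinite sum with the Kronecker product, but this is legitimate because in finite dimensions $B \mapsto I^{\otimes(n-k)} \otimes B$ is a continuous linear map between finite-dimensional normed spaces and therefore commutes with convergent partial sums. An equally valid alternative would bypass the series argument by appealing directly to the spectral calculus of Definition \ref{function}: writing $H' = \sum_j \lambda_j \ket{e_j}\bra{e_j}$, the eigenvectors of $A$ take the form $\ket{v} \otimes \ket{e_j}$ for any $\ket{v}$ in the left factor, with eigenvalue $\lambda_j$, and a direct spectral computation produces the same factorization without ever invoking a power series.
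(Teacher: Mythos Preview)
Your proposal is correct and follows essentially the same route as the paper: expand $e^{-iHt}$ as a power series, use the mixed-product property to get $(I^{\otimes(n-k)}\otimes H')^m = I^{\otimes(n-k)}\otimes (H')^m$, and then factor the identity tensor out of the sum. The paper does not comment on the interchange of the sum with the tensor product nor mention the spectral alternative, so your write-up is slightly more careful, but the core argument is identical.
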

\begin{proof}
       The matrix exponential can be computed via its power series expansion:
    \begin{align*}
        e^{-iHt} = \sum_{m=0}^\infty \frac{(-iHt)^m}{m!} &=  \sum_{m=0}^\infty \frac{(-i)^m t^m \left(I^{\otimes (n-k)} \otimes H'\right)^m}{m!} \\ &= \sum_{m=0}^\infty \frac{(-i)^m t^m \left(I^{\otimes (n-k)}\right)^m \otimes (H')^m}{m!} \\ &= \sum_{m=0}^\infty \frac{(-i)^m t^m I^{\otimes (n-k)} \otimes (H')^m}{m!} \\ &= I^{\otimes (n-k)} \otimes \left( \sum_{m=0}^\infty \frac{(-iH't)^m}{m!} \right) \\ &=  I^{\otimes (n-k)} \otimes e^{-iH't}.
    \end{align*}
\end{proof}
We can therefore interpret the corresponding unitary transformation as acting non-trivially on at most \( k \) qubits. The fact that these Hamiltonians can be efficiently simulated follows from the well-known Solovay-Kitaev theorem. This theorem will not be proven here, as it is beyond the scope of this work. A discussion of the theorem can be found in \cite{Dawson2006}.

\begin{theorem}[Solovay-Kitaev]
    Let \( U \) be a unitary map that acts non-trivially on \( k = O(1) \) qubits and let \( S \) be an arbitrary finite universal set of one- and two-qubit quantum gates. Then \( U \) can be approximated in the spectral norm to within \( \epsilon \) using \( O(\log^c(1/\epsilon)) \) gates from \( S \), for some constant \( c < 4 \).
\end{theorem}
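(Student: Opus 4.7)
The plan is to first reduce the problem to approximating arbitrary elements of $SU(d)$ with $d = 2^k$. Since $k = O(1)$, the dimension $d$ is a fixed constant throughout and all implicit constants in the $O(\cdot)$ estimates will be allowed to depend on $d$ and on the gate set $S$. Up to a physically irrelevant global phase, the target $U$ lies in this constant-dimensional Lie group, which makes the whole argument take place in a compact manifold of fixed dimension.

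Two ingredients are required before the main recursion. The first is a \emph{base case}: because $S$ is universal, the subgroup $\langle S \rangle$ is dense in $SU(d)$, and by a compactness-and-counting argument there is a finite word length $\ell_0$ such that the set of words of length at most $\ell_0$ forms an $\epsilon_0$-net of $SU(d)$ for some fixed $\epsilon_0$ small enough to seed the recursion. Call the resulting ``brute force'' routine $\mathrm{SK}_0$. The second, and the main technical obstacle of the whole proof, is a \emph{balanced commutator decomposition}: any $\Delta \in SU(d)$ with $\|\Delta - I\|$ sufficiently small can be written as $\Delta = V W V^\dagger W^\dagger$ with $\|V-I\|,\,\|W-I\| = O(\sqrt{\|\Delta - I\|})$. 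I would prove this by passing to the Lie algebra via the matrix logarithm, using the Baker-Campbell-Hausdorff expansion to match $\log \Delta$ with the commutator $[\log V, \log W]$ up to higher-order corrections, and then bounding those corrections carefully so that the square-root scaling survives.

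Granting these two ingredients, I would define the approximation scheme $\mathrm{SK}_{n+1}$ recursively. Given a target $U$, first call $\mathrm{SK}_n$ to produce $U_n$ with $\|U - U_n\| \le \epsilon_n$; set $\Delta := U U_n^\dagger$, which is $\epsilon_n$-close to $I$; apply the balanced commutator lemma to write $\Delta = V W V^\dagger W^\dagger$ with $\|V-I\|,\,\|W-I\| = O(\sqrt{\epsilon_n})$; recursively approximate $V$ and $W$ by $\tilde V$ and $\tilde W$ using $\mathrm{SK}_n$ at precision $\epsilon_n$; and return $\tilde V \tilde W \tilde V^\dagger \tilde W^\dagger U_n$. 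The key estimate is that because $V$ and $W$ are themselves within $O(\sqrt{\epsilon_n})$ of the identity, the leading-order errors from approximating them cancel inside the group commutator and one is left with a residual of order $\sqrt{\epsilon_n} \cdot \epsilon_n$. This yields the contraction $\epsilon_{n+1} \le C\, \epsilon_n^{3/2}$, i.e.\ super-exponential convergence $\epsilon_n \lesssim \epsilon_0^{(3/2)^n}$.

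Finally, I would read off the gate count. Each call to $\mathrm{SK}_{n+1}$ invokes $\mathrm{SK}_n$ five times (once for $U_n$ and twice each for $V$ and $W$, using that inverses of gates in $S$ can themselves be synthesized from $S$), so $N_{n+1} = 5 N_n + O(1)$ and hence $N_n = O(5^n)$. The convergence rate requires only $n = O(\log \log (1/\epsilon))$ levels to reach precision $\epsilon$, so the total gate count is $O\bigl(5^{\log_{3/2} \log(1/\epsilon)}\bigr) = O\bigl(\log^{\log_{3/2} 5}(1/\epsilon)\bigr)$. Since $\log_{3/2} 5 \approx 3.97 < 4$, this yields the stated bound $O(\log^c(1/\epsilon))$ for some $c < 4$. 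The hardest step in this program is, as noted above, the balanced commutator lemma; once it is in hand, the recursion and the counting are essentially bookkeeping.
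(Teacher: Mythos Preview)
Your proposal is a correct outline of the standard Solovay--Kitaev argument, essentially the Dawson--Nielsen presentation. The paper, however, does \emph{not} prove this theorem: immediately after stating it, the text says ``This theorem will not be proven here, as it is beyond the scope of this work'' and cites \cite{Dawson2006}. The only proof-related comment the paper offers is the informal remark that the exponential-in-$k$ overhead from decomposing a $k$-qubit unitary into one- and two-qubit gates is harmless when $k=O(1)$. Your sketch goes well beyond this, supplying the full recursive machinery (dense base net, balanced group commutator, the contraction $\epsilon_{n+1}\le C\epsilon_n^{3/2}$, and the $5^n$ gate growth giving exponent $\log_{3/2}5\approx 3.97$) that the cited reference contains.

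Two minor imprecisions worth flagging. First, in the usual formulation the balanced commutator decomposition is only approximate, $\|VWV^\dagger W^\dagger-\Delta\|=O(\|\Delta-I\|^{3/2})$, rather than the exact equality $\Delta=VWV^\dagger W^\dagger$ you wrote; the extra $O(\epsilon_n^{3/2})$ term is absorbed harmlessly into the same recursion. Second, strictly speaking $\mathrm{SK}_{n+1}$ makes only three recursive calls to $\mathrm{SK}_n$ (for $U_n$, $V$, $W$), since $\tilde V^\dagger$ and $\tilde W^\dagger$ are obtained by reversing the circuits for $\tilde V$ and $\tilde W$; nonetheless the output circuit has five segments of length $N_n$, so the gate count $N_n=O(5^n)$ you state is correct.
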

The key idea of the proof is that any one- and two-qubit gate can be efficiently approximated using \( O(\log^c(1/\epsilon)) \) gates from a finite universal set of one- and two-qubit gates. However, approximating an arbitrary gate acting on \( n \) qubits with only one- and two-qubit gates incurs an exponential overhead in \( n \). Nevertheless, since \( H \) acts non-trivially on at most \( k \) qubits, independent of the size \( n \) of the system, the overall complexity remains independent of \( n \). Specifically, the complexity is \( O(2^k \log^c(1/\epsilon)) = O(\log^c(1/\epsilon)) \). \\\\
It is uncommon for a Hamiltonian to act non-trivially on at most a constant number of qubits. However, it is very common for a Hamiltonian to be a sum of Hamiltonians that each act non-trivially on at most a constant number of qubits. Prominent examples include the Ising and Heisenberg models, where each interaction term involves at most two qubits. Such local Hamiltonians arise naturally in many physical systems, as interactions typically weaken with increasing spatial separation or energy difference between subsystems. 
\begin{definition}
    A Hamiltonian \( H \in M_{2^n}(\mathbb{C}) \) is said to be \textit{k-local} if it can be written as a sum
    \[
    H = \sum_{j=1}^{L} H_j,
    \]
    for some \( L \), where each \( H_j \) is a Hamiltonian that acts non-trivially on at most \( k \) qubits.
\end{definition}
The question now is: given \( H = \sum_{j=1}^{L} H_j \), how can we efficiently simulate \( e^{-iHt} \) assuming that we can efficiently simulate \( e^{-iH_j t} \) for each \( H_j \)? The next results show that if all of the \( H_j \)'s commute, then this process becomes relatively straightforward.

\begin{proposition}
Let \( H = \sum_{j=1}^{L} H_j \), where \( H_j \) are Hamiltonians and suppose that \( [H_k, H_l] = 0 \) for all \( k, l \). Then, for all \( t \), we have the following factorization:

\[
e^{-iHt} = e^{-iH_1 t} e^{-iH_2 t} \cdots e^{-iH_L t}.
\]
\end{proposition}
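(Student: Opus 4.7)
The plan is to reduce the statement to the two-operator case $e^{-i(H_1+H_2)t} = e^{-iH_1 t}e^{-iH_2 t}$ and then extend to arbitrary $L$ by induction on the number of summands. This is the natural structure because the obstruction to factoring matrix exponentials lies entirely in the noncommutativity of pairs, so once I understand two commuting pieces I can peel off one term at a time.

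For the base case $L=2$, I would start from the power series definition of the exponential, $e^{-i(H_1+H_2)t} = \sum_{n=0}^\infty \frac{(-it)^n (H_1+H_2)^n}{n!}$. Because $[H_1,H_2]=0$, the matrix binomial theorem applies and I can write $(H_1+H_2)^n = \sum_{k=0}^n \binom{n}{k} H_1^k H_2^{n-k}$. Substituting, the double sum has the form of a Cauchy product in $k$ and $m := n-k$, and I would reindex it as $\bigl(\sum_{k\geq 0} \frac{(-it)^k H_1^k}{k!}\bigr)\bigl(\sum_{m\geq 0} \frac{(-it)^m H_2^m}{m!}\bigr) = e^{-iH_1 t}e^{-iH_2 t}$.

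For the inductive step, suppose the result holds for $L-1$ commuting summands. Since the commutator is bilinear, $[H_1, H_2+\cdots+H_L] = \sum_{j=2}^L [H_1,H_j] = 0$, so $H_1$ commutes with $\tilde H := H_2+\cdots+H_L$. The base case then yields $e^{-iHt} = e^{-iH_1 t}\,e^{-i\tilde H t}$. The summands $H_2,\dots,H_L$ still pairwise commute (they are a subfamily of the original commuting family), so the inductive hypothesis factors $e^{-i\tilde H t}$ into the remaining $L-1$ exponentials, completing the argument.

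The main obstacle is rigorously justifying the rearrangement of the double series in the base case: swapping the order of summation and regrouping into a product of two single sums is only legitimate for absolutely convergent series. I would handle this using the spectral norm: the bound $\bigl\|\frac{(-it)^n H_1^k H_2^{n-k}}{k!(n-k)!}\bigr\| \leq \frac{t^n \|H_1\|^k \|H_2\|^{n-k}}{k!(n-k)!}$ shows that the double sum is dominated termwise by the scalar series for $e^{t(\|H_1\|+\|H_2\|)}$, which converges absolutely. This certifies unconditional convergence in the spectral norm and lets me apply Fubini for double series (equivalently, the Cauchy product theorem) to perform the factorization. Everything else is routine manipulation of commutators and formal series.
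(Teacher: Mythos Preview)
The paper states this proposition without proof, treating it as a standard fact. Your proposal is correct and follows the classical argument: the $L=2$ case via the matrix binomial theorem and Cauchy product of absolutely convergent series, then induction on $L$ by peeling off one summand using bilinearity of the commutator. There is nothing to compare against in the paper, but your outline would serve as a complete proof with no gaps.
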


So, for commuting \( H_j \), the quantum circuit to approximate \( e^{-iHt} \) is simply the composition of the quantum circuits that approximate each \( e^{-iH_j t} \). To analyze the complexity of the overall circuit, we need the following result. It essentially says that errors in the approximation of one quantum circuit accumulate at most linearly.

\begin{lemma}\label{linear_error}
    Let \( U_i, V_i \) be unitary matrices satisfying \( \|U_i - V_i\| \leq \epsilon \) for all \( i \in \{1, 2, \dots, L\} \). Then
    \[
        \| U_L \dots U_2 U_1 - V_L \dots V_2 V_1 \| \leq L \epsilon.
    \]
\end{lemma}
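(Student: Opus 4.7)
The plan is to prove the lemma by induction on $L$, with the key idea being a telescoping-style insertion combined with the fact that unitary matrices have spectral norm equal to 1. The base case $L=1$ is immediate from the hypothesis. For the inductive step, I would denote $\mathcal{U}_k := U_k \cdots U_1$ and $\mathcal{V}_k := V_k \cdots V_1$, and want to bound $\|\mathcal{U}_L - \mathcal{V}_L\|$.

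The central move is to insert and subtract a hybrid term $V_L \mathcal{U}_{L-1}$, so that
\[
\mathcal{U}_L - \mathcal{V}_L = U_L \mathcal{U}_{L-1} - V_L \mathcal{U}_{L-1} + V_L \mathcal{U}_{L-1} - V_L \mathcal{V}_{L-1} = (U_L - V_L)\mathcal{U}_{L-1} + V_L(\mathcal{U}_{L-1} - \mathcal{V}_{L-1}).
\]
Applying the triangle inequality to the spectral norm and then submultiplicativity $\|AB\| \leq \|A\|\,\|B\|$ (which follows directly from the definition of the operator norm), I get
\[
\|\mathcal{U}_L - \mathcal{V}_L\| \leq \|U_L - V_L\|\,\|\mathcal{U}_{L-1}\| + \|V_L\|\,\|\mathcal{U}_{L-1} - \mathcal{V}_{L-1}\|.
\]
Now I would use Proposition~8 part~4 (unitaries preserve norms, hence $\|V_L\| = 1$) and the fact that a product of unitaries is unitary (Proposition~22 part~2, iterated), which gives $\|\mathcal{U}_{L-1}\| = 1$. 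Combined with the hypothesis $\|U_L - V_L\| \leq \epsilon$ and the inductive hypothesis $\|\mathcal{U}_{L-1} - \mathcal{V}_{L-1}\| \leq (L-1)\epsilon$, this yields the desired bound $\|\mathcal{U}_L - \mathcal{V}_L\| \leq \epsilon + (L-1)\epsilon = L\epsilon$.

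There is no real obstacle here; the proof is essentially a bookkeeping exercise. The only subtlety worth flagging is the need to justify submultiplicativity and the unit-norm property of unitaries, both of which are standard and already implicit in the material developed earlier in the chapter. Alternatively, one could write everything as a single telescoping sum $\mathcal{U}_L - \mathcal{V}_L = \sum_{k=1}^{L} V_L \cdots V_{k+1}(U_k - V_k) U_{k-1}\cdots U_1$ and bound each of the $L$ terms by $\epsilon$ directly, but the inductive presentation is cleaner.
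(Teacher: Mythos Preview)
Your proof is correct and follows essentially the same approach as the paper: induction on $L$, insertion of a hybrid intermediate term, triangle inequality, and then the unit spectral norm of unitaries to peel off the factors. The only cosmetic difference is that the paper inserts $U_L V_{L-1}\cdots V_1$ rather than your $V_L U_{L-1}\cdots U_1$, which is an immaterial variation.
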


\begin{proof}
We use induction on \( L \). For \( L = 1 \), the lemma is trivial. Now, suppose the lemma holds for a particular value of \( L-1 \). By the triangle inequality, we obtain:
\begin{eqnarray*}
&& \left\|U_L U_{L-1} \dots U_1 - V_L V_{L-1} \dots V_1\right\| \\
&&\qquad = \left\|U_LU_{L-1} \dots U_1 - U_L V_{L-1} \dots V_1 + U_LV_{L-1} \dots V_1 - V_L V_{L-1} \dots V_1\right\| \\
&&\qquad \leq \left\|U_LU_{L-1} \dots U_1 - U_L V_{L-1} \dots V_1\right\| + \left\|U_LV_{L-1} \dots V_1 - V_L V_{L-1} \dots V_1\right\| \\
&&\qquad = \left\|U_L (U_{L-1} \dots U_1 - V_{L-1} \dots V_1)\right\| + \left\|(U_L - V_L) V_{L-1} \dots V_1\right\| \\
&&\qquad = \left\| U_{L-1} \dots U_1 - V_{L-1} \dots V_1 \right\| + \left\| U_L - V_L \right\| \\
&&\qquad \leq (L-1)\epsilon + \epsilon = L\epsilon.
\end{eqnarray*}

Here, we used that the norm is invariant under unitary transformations in the second-to-last equality and the inductive hypothesis in the final inequality.
\end{proof}

It follows now that, since each \( H_j \) can be simulated with \( O(\log^c(1/\epsilon)) \) gates, we can simulate each of them with precision \( \epsilon / L \) and therefore, by the lemma, simulate \( e^{-iHt} \) using \( O(L \log^c(L/\epsilon)) \) gates. Notice that the complexity doesn't depend on \( t \). 
What about the dependence on \( n \)? Well, clearly \( L \) depends on \( n \). Since \( M_{2^n}(\mathbb{C}) \) has dimension \( 2^{2n} \), \( H \) will generally be written as \( H = \sum_{j=1}^{4^n} H_j \). That is, even if each \( H_j \) could be simulated with \( O\left(\text{poly}(n, t, 1/\epsilon)\right) \) gates, a general Hamiltonian will be expressed as an exponential in \( n \) number of them and therefore not be overall efficiently simulable. 
We conclude that arbitrary Hamiltonians cannot be efficiently simulated (or more precisely, at least not with this method). Instead, we turn our attention to the class of naturally occurring \( k \)-local Hamiltonians.

\begin{remark}
To study how the complexity of \( k \)-local Hamiltonians scales with the number of qubits \( n \), we must formalize what it means for such Hamiltonians to ``grow'' with \( n \). Specifically, we consider families \( \{H^{(n)}\}_{n\in\mathbb{N}} \) of Hamiltonians, where each \( H^{(n)} \) acts on \( n \) qubits and is a sum of \( k \)-local terms drawn from a fixed finite set of interaction types.
\\\\
For example, in the Ising model, each Hamiltonian \( H^{(n)} \) includes terms of the form \( Z_i Z_j \) for neighboring \( i, j \) and \( X_i \) for each qubit \( i \). Similarly, the Heisenberg model uses fixed local terms \( X_i X_j \), \( Y_i Y_j \) and \( Z_i Z_j \) for neighboring pairs. As \( n \) increases, the number of terms grows, but the types of terms do not change.
\\\\
More generally, we assume that each \( H^{(n)} \) is constructed from a finite, \( n \)-independent set of local interaction types. This is a physically natural restriction: it ensures that the family describes a consistent interaction model, rather than introducing entirely new physics at each system size. Furthermore, it guarantees that the total number of terms remains polynomial in \( n \), assuming locality is preserved. More precisely, let $d_k=O(1)$ denote the number of types of Hamiltonians acting on \( k \) qubits that appear in the sum. Then, 
\begin{align*}
L &\leq \sum_{i=1}^k d_i\binom{n}{i} = \sum_{i=1}^k d_i\frac{n!}{(n-i)!i!} 
\leq \max_{i=1,\dots,k} d_i \sum_{i=1}^k \frac{n(n-1)\cdots(n-i+1)}{i!} \\
&= O\left(\sum_{i=1}^k n^i\right) = O(n^k).
\end{align*}
Notice that we could even allow each $d_k$ to grow polynomially with $n$.
\end{remark}

\begin{theorem}
    Let \( H \in \mathcal{M}_{2^n}(\mathbb{C})\) be a \( k \)-local Hamiltonian that can be written as the sum of \( L= O(n^k) \) commuting terms \( H_j \). Then, for any \( t >0 \), there exists a quantum circuit that approximates \( e^{-iHt} \) to within \( \epsilon \) using 
\[
O(n^k \log^c(n^k/\epsilon)),
\] gates.
\end{theorem}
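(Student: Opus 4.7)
The plan is to combine the three ingredients that have been carefully set up in the chapter: the commuting-factorization proposition, the Solovay-Kitaev theorem, and the linear error-accumulation lemma. First I would invoke the commuting-sum proposition to write
\[
e^{-iHt} = e^{-iH_1 t}\, e^{-iH_2 t} \cdots e^{-iH_L t},
\]
reducing the problem from simulating a single Hamiltonian on $n$ qubits to simulating $L$ separate evolutions, each generated by a $k$-local term.

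Next I would apply the Solovay-Kitaev theorem to each factor. Since every $H_j$ acts non-trivially on at most $k = O(1)$ qubits, the corresponding $e^{-iH_j t}$ is (up to a reordering of qubits and tensoring with an identity) a unitary on $k$ qubits, and hence can be approximated in spectral norm to any precision $\delta > 0$ by some unitary $V_j$ built from $O(\log^c(1/\delta))$ gates drawn from the universal set. The key bookkeeping step is to choose $\delta$ so that the total error after composing all $L$ approximants is at most $\epsilon$. By Lemma \ref{linear_error}, errors accumulate at most linearly, so it suffices to take $\delta = \epsilon/L$. Then
\[
\bigl\| e^{-iHt} - V_L \cdots V_1 \bigr\| \leq L \cdot \frac{\epsilon}{L} = \epsilon.
\]

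Finally I would count gates. Each $V_j$ uses $O(\log^c(L/\epsilon))$ gates and there are $L$ of them, giving a total of $O(L \log^c(L/\epsilon))$. Substituting $L = O(n^k)$ yields the claimed bound $O(n^k \log^c(n^k/\epsilon))$.

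There is no real obstacle here, since every ingredient has already been established; the only mild subtlety is to be explicit that the $O(\log^c(1/\delta))$ bound from Solovay-Kitaev depends only on the number of qubits $k$ on which each $H_j$ acts non-trivially (which is constant), so the implicit constant hidden in the big-$O$ does not depend on $n$. Once that is observed, the proof is essentially a one-line assembly of the preceding results.
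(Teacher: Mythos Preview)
Your proposal is correct and follows essentially the same approach as the paper: factor $e^{-iHt}$ via the commuting proposition, approximate each factor to precision $\epsilon/L$ using Solovay--Kitaev, invoke the linear error-accumulation lemma, and substitute $L = O(n^k)$. The paper's proof is slightly terser but uses exactly the same three ingredients in the same order.
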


\begin{proof}
    We have \( H = \sum_{j=1}^{L} H_j \), where each \( H_j \) acts non-trivially on at most \( k \) qubits. Let \( U_j \) be the circuit that simulates \( H_j \) to precision \( \epsilon/L \), requiring \( O(\log^c(L/\epsilon)) \) gates, following the Solovay-Kitaev theorem. Applying Lemma \ref{linear_error}, we find that by composing all these circuits, 
    \[
   \| e^{-iHt}-U_1 U_2 \cdots U_L\| =
    \| e^{-iH_1 t} e^{-iH_2 t} \cdots e^{-iH_L t} - U_1 U_2 \cdots U_L \| \leq \epsilon,
    \]
    using a total of \( O(L \log^c(L/\epsilon)) \) gates. Substituting our worst-case \( L \), we obtain \[ O(n^k \log^c(n^k/\epsilon)). \]
\end{proof}
\section{The Lie-Trotter Product Formula}
Unfortunately, requiring all of the \( H_j \) to commute is too strong a condition. For example, in the Ising model, all of the terms involve Pauli matrices and the Pauli matrices \( X, Y, Z \) do not commute with each other. In fact, they anti-commute pairwise. So how can we proceed when the \( H_j \) do not commute? To address this, we introduce a tool that will simplify our analysis.

\begin{definition}
    Let \( \left( V ,\|\cdot\| \right) \) be a normed vector space. Given functions \( f, g \colon \mathbb{R}^n \to V \) and \( h \colon \mathbb{R}^n \to \mathbb{R}^+ \), we write
    \[
    f(\mathbf{x}) = g(\mathbf{x}) + \mathcal{O}(h(\mathbf{x})),
    \]
    if there exist constants \( C > 0 \) and \( \delta > 0 \) such that for all \( \mathbf{x} \) with \( 0 < \|\mathbf{x}\| < \delta \), we have
    \[
    \|f(\mathbf{x}) - g(\mathbf{x})\| \leq C \cdot h(\mathbf{x}).
    \]
\end{definition}
Notice the resemblance with Definition~\ref{complexity}. We will be using both notations throughout the rest of this text. To avoid confusion, we adopt the convention that \( \mathcal{O} \) will denote error bounds in approximations, while \( O \) will refer to computational complexity. Nevertheless, the distinction should be clear from context: one takes the form \( f = g + \mathcal{O}(h) \) and the other \( f = O(g) \). \\\\ We are now ready to address the simulation of a general \( k \)-local Hamiltonian. The core of many quantum simulation algorithms lies in the following asymptotic approximation theorem.

\begin{theorem}
    Let $A, B \in M_{2^n}(\mathbb{C})$ be Hamiltonians. Then, for all $t \in \mathbb{R}$, we have the following limit:
    \begin{equation*}
        \lim_{m \to \infty} \left( e^{-i A t / m} e^{-i B t / m} \right)^m = e^{-i(A + B) t}.
     \end{equation*}
\end{theorem}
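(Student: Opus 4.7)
The plan is to compare the product $e^{-iAt/m}e^{-iBt/m}$ with $e^{-i(A+B)t/m}$ using a Taylor expansion, show that the two agree to first order in $1/m$, and then use a telescoping argument to control the error accumulated by taking the $m$-th power.

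First, I would expand each exponential to first order. Writing $W_m := e^{-iAt/m}e^{-iBt/m}$ and $V_m := e^{-i(A+B)t/m}$, the power series definition gives
\begin{align*}
e^{-iAt/m} &= I - \tfrac{iAt}{m} + \mathcal{O}\!\left(\tfrac{1}{m^2}\right), \\
e^{-iBt/m} &= I - \tfrac{iBt}{m} + \mathcal{O}\!\left(\tfrac{1}{m^2}\right),
\end{align*}
where the remainders are bounded in spectral norm by constants depending on $\|A\|$, $\|B\|$ and $t$ (via the tail of the exponential series). Multiplying and collecting terms yields $W_m = I - \tfrac{i(A+B)t}{m} + \mathcal{O}(1/m^2)$. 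The same Taylor expansion applied to $V_m$ gives $V_m = I - \tfrac{i(A+B)t}{m} + \mathcal{O}(1/m^2)$, hence
\[
\|W_m - V_m\| = \mathcal{O}\!\left(\tfrac{1}{m^2}\right).
\]

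Next, I would compare the $m$-th powers. Since both $W_m$ and $V_m$ are unitary (as products and exponentials of Hermitians, by the earlier proposition), Lemma~\ref{linear_error} applied to the sequence of identical factors $U_1 = \cdots = U_m = W_m$ and $V_1 = \cdots = V_m = V_m$ gives
\[
\|W_m^m - V_m^m\| \leq m \cdot \|W_m - V_m\| = m \cdot \mathcal{O}\!\left(\tfrac{1}{m^2}\right) = \mathcal{O}\!\left(\tfrac{1}{m}\right).
\]
Finally, because $A+B$ commutes with itself, $V_m^m = \bigl(e^{-i(A+B)t/m}\bigr)^m = e^{-i(A+B)t}$ exactly. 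Combining these observations, $\|W_m^m - e^{-i(A+B)t}\| \to 0$ as $m \to \infty$, which is the claimed limit.

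The main obstacle, and the only step requiring some care, is making the remainder estimate $\|W_m - V_m\| = \mathcal{O}(1/m^2)$ fully rigorous: one must bound the tails of three different exponential series uniformly in $m$, which is done by the standard estimate $\|e^{X} - I - X\| \leq \tfrac{\|X\|^2}{2}e^{\|X\|}$, applied with $X$ equal to $-iAt/m$, $-iBt/m$ and $-i(A+B)t/m$ respectively, using that $\|A\|, \|B\|, \|A+B\|$ are finite and $t/m$ is bounded for $m \geq 1$. Once this uniform control is in hand, the rest of the argument is purely algebraic.
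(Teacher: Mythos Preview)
Your proof is correct, but it takes a different route from the paper's. The paper expands the product $e^{-iAt/m}e^{-iBt/m}$ to first order in $1/m$, then raises this expression to the $m$-th power via a direct binomial expansion and analyzes the coefficients $\binom{m}{k}/m^k \to 1/k!$ to recover the exponential series for $e^{-i(A+B)t}$ in the limit. You instead introduce the comparison step $V_m = e^{-i(A+B)t/m}$, note that $V_m^m$ equals the target exactly, and invoke Lemma~\ref{linear_error} on the unitaries $W_m$ and $V_m$ to control the accumulated error.

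Your approach is arguably cleaner and more quantitative: it avoids the somewhat delicate $\mathcal{O}$-bookkeeping in the binomial expansion and immediately yields the $\mathcal{O}(1/m)$ convergence rate. In fact, it is precisely the argument the paper employs a few paragraphs \emph{after} this theorem, when it revisits the two-term case to extract the error bound $e^{-i(A+B)t} = (e^{-iAt/m}e^{-iBt/m})^m + \mathcal{O}(t^2/m)$. The paper's binomial proof has the minor structural advantage of not forward-referencing Lemma~\ref{linear_error}, which is stated later in the text, but there is no logical dependence either way.
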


\begin{proof}
    We begin by expanding $e^{-i A t / m}$ and $e^{-i B t / m}$:
    \begin{align*}
        e^{-i A t / m} e^{-i B t / m} &= \left( I + \frac{-i A t}{m} + \mathcal{O}\left(\frac{1}{m^2}\right) \right) \left( I + \frac{-i B t}{m} + \mathcal{O}\left(\frac{1}{m^2}\right) \right) \\
        &= I + \frac{-i (A + B) t}{m} + \mathcal{O}\left(\frac{1}{m^2}\right).
    \end{align*}

    Now, we raise this expression to the power $m$ and expand using the binomial series:
    \begin{align*}
        \left( e^{-i A t / m} e^{-i B t / m} \right)^m &= \left( I + \frac{-i (A + B) t}{m} + \mathcal{O}\left( \frac{1}{m^2} \right) \right)^m \\
        &= \sum_{k=0}^m \binom{m}{k} \left( \frac{-i (A + B) t}{m} + \mathcal{O}\left(\frac{1}{m^2}\right)\right)^k \\
        &= \sum_{k=0}^{m} \binom{m}{k} \left(\left( \frac{-i (A + B) t}{m} \right)^k + \mathcal{O}\left( \frac{1}{m^{k+1}} \right)\right) \\
        &= \sum_{k=0}^{m} \binom{m}{k} \frac{1}{m^k}\left( -i \left(A + B\right) t \right)^k+ \mathcal{O}\left(\frac{1}{m} \sum_{k=0}^{m} \binom{m}{k}\frac{1}{m^{k}} \right) \\ 
        &= \sum_{k=0}^{m} \binom{m}{k} \frac{1}{m^k}\left( -i \left(A + B\right) t \right)^k+ \mathcal{O}\left(\frac{1}{m}\left( 1 + \frac{1}{m}\right)^m\right) \\
        &= \sum_{k=0}^{m} \binom{m}{k} \frac{1}{m^k}\left( -i \left(A + B\right) t \right)^k+ \mathcal{O}\left( \frac{1}{m} \right) . \tag{*} \label{trottereq}
    \end{align*}

    Next, we analyze the binomial coefficient:
    \begin{align*}
        \binom{m}{k} \frac{1}{m^k} = \frac{m!}{k! (m-k)!} \frac{1}{m^k} &= \frac{m(m-1)\cdots (m-k+1)}{k!} \frac{1}{m^k} \\
        &= \frac{1}{k!}\frac{m-1}{m}\cdots\frac{m-k+1}{m} \\
        &= \frac{1}{k!} \left( 1 - \frac{1}{m} \right) \left( 1 - \frac{2}{m} \right) \cdots \left( 1 - \frac{k-1}{m} \right) \\ 
        &= \frac{1}{k!} \left( 1 + \mathcal{O}\left(\frac{1}{m}\right) \right).
    \end{align*}
    Substituting this back in \ref{trottereq}, we get:
    \begin{align*}
     \left( e^{-i A t / m} e^{-i B t / m} \right)^m &= \sum_{k=0}^{m} \frac{\left( -i (A + B) t \right)^k}{k!}  \left( 1 + \mathcal{O}\left( \frac{1}{m} \right) \right) + \mathcal{O}\left( \frac{1}{m} \right) \\
    &= \sum_{k=0}^{m} \frac{(-i (A + B) t)^k}{k!} + \mathcal{O}\left( \frac{1}{m} \right) \sum_{k=0}^{m} \frac{(-i (A + B) t)^k}{k!} +  \mathcal{O}\left( \frac{1}{m} \right).
    \end{align*}

     Taking the limit as $m \to \infty$, the error term vanishes and we are left with:

    \begin{align*}
        \lim_{m \to \infty} \left( e^{-i A t / m} e^{-i B t / m} \right)^m &= \sum_{k=0}^{\infty} \frac{(-i (A + B) t)^k}{k!} = e^{-i (A + B) t}.
    \end{align*}
\end{proof}
Notice that what we are doing is simulating each \( H_j \) for progressively smaller time intervals and repeating this process multiple times. \\\\Let's generalize this for arbitrary Hamiltonians.
\begin{corollary}[Lie-Trotter Formula]
    Let $H_1,H_2 ...,H_L \in M_{2^n}(\mathbb{C})$ be Hamiltonians. Then, for all $t \in \mathbb{R}$, we have the following limit:
    \begin{equation*}
        \lim_{m \to \infty} \left( e^{-i H_1 t / m} e^{-i H_2 t / m} \cdots  e^{-i H_L t / m} \right)^m = e^{-i(H_1 + H_2 + \cdots + H_L) t}.
     \end{equation*}
\end{corollary}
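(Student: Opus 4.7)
The plan is to mimic the proof of the preceding theorem essentially verbatim, with $A+B$ replaced throughout by $H_1+\cdots+H_L$. The heart of that argument was the first-order expansion
$$e^{-iAt/m}e^{-iBt/m} = I + \frac{-i(A+B)t}{m} + \mathcal{O}(1/m^2),$$
followed by raising to the $m$-th power and analysing the binomial coefficients. The same structure should carry through in the multi-operator setting without any new ideas.

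Concretely, I would first expand each factor as $e^{-iH_j t/m} = I + \frac{-iH_j t}{m} + \mathcal{O}(1/m^2)$ and multiply the $L$ expansions together. Every term in the resulting product that contains two or more of the factors $\frac{-iH_j t}{m}$ contributes at order at least $1/m^2$, so aggregating them yields
$$\prod_{j=1}^L e^{-iH_j t/m} = I + \frac{-it}{m}\sum_{j=1}^L H_j + \mathcal{O}(1/m^2).$$
From here the argument is identical to the two-operator case: raise both sides to the $m$-th power, expand via the binomial theorem, use the identity $\binom{m}{k}m^{-k} = \frac{1}{k!}\bigl(1+\mathcal{O}(1/m)\bigr)$ and bound the accumulated error through $(1+1/m)^m \leq e$. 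Letting $m\to\infty$ collapses the partial sums to $\sum_{k=0}^\infty \frac{(-iHt)^k}{k!} = e^{-iHt}$ with $H := H_1+\cdots+H_L$, giving the claim.

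The only genuine point to verify is that the aggregated cross terms really do form an $\mathcal{O}(1/m^2)$ remainder with a constant independent of $m$. This is where one must be slightly careful, but it is not deep: $L$ and the norms $\|H_j\|$ are fixed as $m\to\infty$, so only finitely many cross terms occur and each is uniformly of order $1/m^2$ or smaller. A clean alternative that sidesteps this bookkeeping is induction on $L$: the base case $L=2$ is the theorem just proved, and the inductive step merely appends one additional factor $e^{-iH_L t/m}$ to the product, yielding the same first-order expansion by one more application of the same reasoning. Either route reduces the $L$-operator statement to the two-operator one, and I expect no real obstacle beyond this bookkeeping of the remainder.
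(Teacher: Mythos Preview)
Your proposal is correct and matches the paper's proof essentially verbatim: the paper also expands each factor as $I + \frac{-iH_j t}{m} + \mathcal{O}(1/m^2)$, multiplies them to get $I + \frac{-i(H_1+\cdots+H_L)t}{m} + \mathcal{O}(1/m^2)$, and then says the rest follows analogously from the two-matrix case. Your remark about the remainder constant being independent of $m$ and the optional induction on $L$ are slightly more careful than the paper, but the core argument is identical.
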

\begin{proof}
    It follows from the fact that \begin{align*}
        e^{-i H_1 t / m} e^{-i H_2 t / m} \cdots e^{-i H_L t / m} \\  &\hspace{-4.2cm}= \left( I + \frac{-i H_1 t}{m} + \mathcal{O}\left(\frac{1}{m^2}\right) \right) \left( I + \frac{-i H_2 t}{m} + \mathcal{O}\left(\frac{1}{m^2}\right) \right) \cdots \left( I + \frac{-i H_L t}{m} + \mathcal{O}\left(\frac{1}{m^2}\right) \right)\\
        &\hspace{-4.2cm}= I + \frac{-i (H_1 + H_2 + \cdots + H_L) t}{m} + \mathcal{O}\left(\frac{1}{m^2}\right).
    \end{align*}
    The rest of the proof follows analogously from the case with only two matrices.
\end{proof}
A simulation using a finite number of steps can be achieved by choosing a sufficiently large value of \( m \). This truncation introduces some error, which must be kept small. Notice that since \( -i \) is a constant, we can omit it for the purpose of studying the error and reintroduce it at the end. Let's examine this error in more detail.
First, for the case $L=2$ expand \( e^{(A+B)t} \) and \( e^{At} e^{Bt} \) as follows:

\begin{align*} 
e^{(A+B)t} &= I + (A+B)t + \frac{1}{2} (A+B)^2t^2 + \mathcal{O}(t^3) \\
&= I + (A+B)t + \frac{1}{2} \left(A^2 + AB + BA + B^2\right)t^2 + \mathcal{O}(t^3).
\end{align*}
\begin{align*}
e^{At} e^{Bt} &=\left( I + At + \frac{1}{2} A^2t^2 + \mathcal{O}(t^3) \right)
\left( I + Bt + \frac{1}{2} B^2t^2 + \mathcal{O}(t^3) \right) \\
&= I + (A+B)t + \frac{1}{2} \left(A^2 + 2AB + B^2\right)t^2 + \mathcal{O}(t^3).
\end{align*}
Therefore, we conclude that:
\begin{align*}
e^{-i(A+B)t} = e^{-iAt} e^{-iBt} + \mathcal{O}(t^2).
\end{align*}
Similarly, we can write:
\begin{align*}
e^{-i(A+B)t/m} = e^{-iAt/m} e^{-iBt/m} + \mathcal{O}((t/m)^2).
\end{align*}
Finally, applying Lemma \ref{linear_error}, we obtain:
\begin{align*}
e^{-i(A+B)t} = \left(e^{-iAt/m} e^{-iBt/m}\right)^m + \mathcal{O}(t^2/m).
\end{align*}
Now let's prove it for the general case.
\begin{proposition}
    Let $H = \sum_{j=1}^LH_j \in M_{2^n}(\mathbb{C})$ then
    \begin{equation*}
        e^{-iHt}=e^{-iH_1t}\cdots e^{-iH_Lt}  + \mathcal{O}(L^2t^2).
    \end{equation*}
\end{proposition}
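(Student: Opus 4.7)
The plan is to prove the proposition by induction on $L$, using the two-term estimate $e^{-i(A+B)t} = e^{-iAt}e^{-iBt} + \mathcal{O}(t^2)$ derived in the paragraph preceding the statement as the base case. Before starting, I would inspect that Taylor computation more carefully to observe that the hidden constant in its $\mathcal{O}(t^2)$ is of order $\|[A,B]\|$; tracking this constant is essential for getting the correct dependence on $L$.

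For the inductive step, set $\tilde H := H_1 + \cdots + H_{L-1}$ and write $H = \tilde H + H_L$. Applying the base case to the pair $(\tilde H, H_L)$ gives
$$e^{-iHt} = e^{-i\tilde H t}\, e^{-iH_L t} + \mathcal{O}(L t^2),$$
where the factor of $L$ appears because
$$\|[\tilde H, H_L]\| \leq \sum_{j=1}^{L-1}\|[H_j,H_L]\| = O(L)$$
once $\max_j \|H_j\|$ is treated as a fixed constant. By the inductive hypothesis applied to $\tilde H$,
$$e^{-i\tilde H t} = e^{-iH_1 t}\cdots e^{-iH_{L-1} t} + \mathcal{O}((L-1)^2 t^2).$$
Since multiplication on the right by the unitary $e^{-iH_L t}$ preserves the spectral norm (as already used in the proof of Lemma~\ref{linear_error}), combining the two displays yields
$$e^{-iHt} = e^{-iH_1 t}\cdots e^{-iH_L t} + \mathcal{O}((L-1)^2 t^2) + \mathcal{O}(L t^2) = e^{-iH_1 t}\cdots e^{-iH_L t} + \mathcal{O}(L^2 t^2),$$
which closes the induction.

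The main obstacle is the bookkeeping of how the implicit constants in the $\mathcal{O}$-notation depend on $L$: a naive application of the base case would fold everything into the hidden constant of a single $\mathcal{O}(t^2)$, losing the quadratic growth in $L$ that the statement asks us to expose. An alternative and perhaps more transparent route is a direct expansion: write $e^{-iH_j t} = I - iH_j t + \mathcal{O}(t^2)$, multiply out the $L$ factors, and observe that the zeroth-order contribution is $I$, the first-order contribution is $-it\sum_j H_j = -itH$, and the number of monomials of order exactly $t^2$ is $\binom{L}{2} + L = O(L^2)$; comparing with the Taylor expansion of $e^{-iHt}$, whose quadratic term has norm $\tfrac{1}{2}\|H\|^2 = O(L^2)$, yields the same bound without an inductive step.
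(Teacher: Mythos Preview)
Your primary inductive argument is correct, but the paper takes the direct-expansion route that you offer as an alternative at the end. Concretely, the paper expands both $e^{t\sum_j H_j}$ and $\prod_j e^{tH_j}$ to second order in $t$, subtracts, and identifies the leading discrepancy as exactly $\frac{t^2}{2}\sum_{j<k}[H_k,H_j]$; the $\mathcal{O}(L^2t^2)$ bound then comes from counting the $\binom{L}{2}$ commutator terms. This is essentially your alternative approach, but sharper: rather than merely counting $\binom{L}{2}+L$ quadratic monomials in the product and $O(L^2)$ in $\tfrac{1}{2}H^2$ separately, the paper observes that the diagonal $H_j^2$ terms and half of the cross terms cancel, leaving only commutators. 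The payoff is that one sees the error vanishes when the $H_j$ commute, which your monomial count does not make visible.

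Your inductive proof is a genuinely different organization and works fine once $\max_j\|H_j\|$ is treated as fixed; its virtue is that it recycles the $L=2$ computation already done in the text and avoids the somewhat messy multinomial expansion of $\prod_j(I + tH_j + \tfrac{t^2}{2}H_j^2 + \cdots)$. The cost is the extra bookkeeping you flag: one must keep the commutator norm explicit rather than absorbed into $\mathcal{O}(t^2)$, and one must check that the recursion $(L-1)^2 + O(L)$ really stays $O(L^2)$ with a uniform constant. The paper's direct computation sidesteps both issues and yields the explicit commutator structure, which is the more informative result.
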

\begin{proof}
First, expand to second order:
\begin{align*}
e^{t\sum_j H_j} &= I +t\sum_j H_j + \frac{t^2}{2}\left(\sum_j H_j\right)^2 + \mathcal{O}(t^3), \\
\prod_{j=1}^L e^{t H_j} &= \prod_{j=1}^L \left(I +t H_j + \frac{t^2}{2} H_j^2 + \mathcal{O}(t^3)\right) \\
&= I +t\sum_j H_j + \frac{t^2}{2}\sum_j H_j^2 + t^2\sum_{j<k} H_j H_k + \mathcal{O}(t^3).
\end{align*}
Now, subtract the expansions:
\begin{align*}
e^{t\sum_j H_j} - \prod_{j=1}^L e^{t H_j} &= \frac{t^2}{2}\left( \left(\sum_j H_j\right)^2-\sum_j H_j^2 - 2\sum_{j<k} H_j H_k \right) + \mathcal{O}(t^3) \\ &=\frac{t^2}{2}\left( \sum_{j\neq k} H_jH_k-2\sum_{j<k} H_j H_k  \right) + \mathcal{O}(t^3) \\
&= \frac{t^2}{2}\sum_{j<k} [H_k, H_j] + \mathcal{O}(t^3).
\end{align*}

Since the number of commutator terms is \( \binom{L}{2} = \frac{L(L-1)}{2} \), we obtain:

\[
e^{-it\sum_j H_j} - \prod_{j=1}^L e^{-it H_j} = \mathcal{O}(L^2t^2).\]

\end{proof}
\begin{theorem}
    Let \( H = \sum_{j=1}^L H_j \) be a \( k \)-local Hamiltonian acting on \( n \) qubits. Then, for any \( t > 0 \) and error \( \epsilon > 0 \), there exists a quantum circuit that approximates the unitary \( e^{-iHt} \) to within error \( \epsilon \), using a number of gates
    \[
        O\left( \frac{n^{3k} t^2}{\epsilon} \log^c\left( \frac{n^{3k} t^2}{\epsilon^2} \right) \right),
    \]
    where \( c \) is the constant from the Solovay-Kitaev theorem.
\end{theorem}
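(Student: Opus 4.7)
The plan is to combine three ingredients already established in the text: (i) the Lie-Trotter approximation bound from the previous proposition, (ii) the linear error accumulation lemma (Lemma \ref{linear_error}), and (iii) the Solovay-Kitaev theorem for gate-level synthesis of each local factor. The overall strategy is to split the target error budget $\epsilon$ into two halves, one for the Trotter discretization and one for the synthesis of the individual local exponentials, and then to pick the Trotter step count $m$ and the Solovay-Kitaev precision $\epsilon'$ so that each half is respected.

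First, I would apply the previous proposition with time $t/m$ to a single Trotter step, obtaining
\[
e^{-iHt/m} = e^{-iH_1 t/m}\cdots e^{-iH_L t/m} + \mathcal{O}(L^2 t^2/m^2).
\]
Concatenating $m$ such steps and invoking Lemma \ref{linear_error} (each step is unitary up to the stated error) yields the global Trotter bound
\[
\bigl\| e^{-iHt} - \bigl(e^{-iH_1 t/m}\cdots e^{-iH_L t/m}\bigr)^m \bigr\| = \mathcal{O}\!\left(\tfrac{L^2 t^2}{m}\right).
\]
Choosing $m = \Theta(L^2 t^2/\epsilon)$ forces this quantity below $\epsilon/2$. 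With $L = O(n^k)$ from the $k$-local remark, this gives $m = O(n^{2k} t^2/\epsilon)$, and therefore a total of $mL = O(n^{3k} t^2/\epsilon)$ local factors $e^{-iH_j t/m}$ appearing in the Trotterized product.

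Next, each factor $e^{-iH_j t/m}$ acts non-trivially on at most $k = O(1)$ qubits, so by Solovay-Kitaev it can be approximated to spectral-norm precision $\epsilon'$ using $O(\log^c(1/\epsilon'))$ gates from the universal set. To ensure that the total error contributed by these approximations is at most $\epsilon/2$, I would again use Lemma \ref{linear_error}: replacing each of the $mL$ exact factors by its Solovay-Kitaev surrogate produces a total error $\leq mL\epsilon'$, so setting $\epsilon' = \Theta(\epsilon/(mL)) = \Theta(\epsilon^2/(n^{3k} t^2))$ suffices. Each factor then costs $O(\log^c(n^{3k} t^2/\epsilon^2))$ gates.

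Finally, summing the gate counts over all $mL = O(n^{3k}t^2/\epsilon)$ factors and combining the two error contributions via the triangle inequality, the overall circuit approximates $e^{-iHt}$ within $\epsilon$ using
\[
O\!\left(\tfrac{n^{3k} t^2}{\epsilon}\,\log^c\!\bigl(\tfrac{n^{3k} t^2}{\epsilon^2}\bigr)\right)
\]
gates, as claimed. The main conceptual obstacle, and the step that requires the most care, is the bookkeeping of the two error budgets: one must verify that reducing the per-factor Solovay-Kitaev precision to $\Theta(\epsilon^2/(n^{3k} t^2))$ only costs a polylogarithmic factor and does not inflate the asymptotic complexity, and that the linear accumulation of errors in Lemma \ref{linear_error} indeed applies across both the Trotter layer and the gate-synthesis layer (which it does, because every factor involved is unitary).
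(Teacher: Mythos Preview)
Your proposal is correct and follows essentially the same route as the paper: apply the $\mathcal{O}(L^2t^2)$ Trotter bound at step size $t/m$, use Lemma~\ref{linear_error} to accumulate over $m$ steps, choose $m=\Theta(L^2t^2/\epsilon)$, then invoke Solovay--Kitaev on each of the $mL$ local exponentials with per-factor precision $\Theta(\epsilon/(mL))$, and substitute $L=O(n^k)$. The only cosmetic difference is that you explicitly split the error budget into two halves, whereas the paper handles the two sources of error more implicitly; the structure and all estimates are otherwise identical.
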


\begin{proof}
    From the previous proposition, we have:
    \[
        e^{-it\sum_j H_j} - \prod_{j=1}^L e^{-it H_j} = \mathcal{O}(L^2 t^2).
    \]
    Therefore,
    \[
        e^{-i\frac{t}{m}\sum_j H_j} - \prod_{j=1}^L e^{-i\frac{t}{m} H_j} = \mathcal{O}\left(\frac{L^2 t^2}{m^2}\right).
    \]
    Applying Lemma~\ref{linear_error}, which bounds the error of repeating a small-step approximation \( m \) times:
    \[
        e^{-it\sum_j H_j} - \left(\prod_{j=1}^L e^{-i\frac{t}{m} H_j}\right)^m = \mathcal{O}\left( \frac{L^2 t^2}{m} \right).
    \]
    To ensure the overall simulation error is at most \( \epsilon \), it suffices to choose
    \[
        m = O\left( \frac{L^2 t^2}{\epsilon} \right).
    \]
    The total number of exponentials (i.e., gate blocks of the form \( e^{-i \frac{t}{m} H_j} \)) is then:
    \[
        Lm = O\left( \frac{L^3 t^2}{\epsilon} \right).
    \]
    Since quantum gates come from a finite universal set, we must approximate each \( e^{-i \frac{t}{m} H_j} \) with error at most \( \epsilon / (Lm) \) to guarantee total error at most \( \epsilon \). By the Solovay-Kitaev theorem, each such gate can be implemented using
    \[
        O\left( \log^c\left( \frac{Lm}{\epsilon} \right) \right),
    \]
    elementary gates, for some constant \( c < 4 \).\\
    Therefore, the total gate complexity becomes:
    \[
        O\left( Lm \log^c\left( \frac{Lm}{\epsilon} \right) \right) = O\left( \frac{L^3 t^2}{\epsilon} \log^c\left( \frac{L^3 t^2}{\epsilon^2} \right) \right).
    \]
    Finally, using the assumption that \( H \) is \( k \)-local, the number of terms in the sum satisfies \( L = O(n^k) \), since each \( H_j \) acts on at most \( k \) qubits. Thus,
    \[
        \frac{L^3 t^2}{\epsilon} = O\left( \frac{n^{3k} t^2}{\epsilon} \right),
    \]
    and the total gate count is:
    \[
        O\left( \frac{n^{3k} t^2}{\epsilon} \log^c\left( \frac{n^{3k} t^2}{\epsilon^2} \right) \right).
    \]
\end{proof}

Before continuing, let us make two observations regarding complexity estimates. Every time we apply the Solovay--Kitaev theorem, we obtain a final gate complexity of the form
\[
O\left(G \log^c\left(\frac{G}{\epsilon}\right)\right),
\]
where \( G \) is the number of arbitrary gates needed to approximate \( e^{-iHt} \). To simplify these expressions and improve clarity, we will ignore polylogarithmic factors (terms of the form \(O(\text{polylog}(n, t, \epsilon) \)) from now on. This reflects the true power and usefulness of the Solovay--Kitaev theorem: we can analyze the asymptotic gate complexity \( G \) and when moving to a finite universal gate set, the overhead will only be polylogarithmic in \( G/\epsilon \).
For example, in the theorem above, we obtained a gate complexity of 
\[
O\left(\frac{n^{3k}t^2}{\epsilon} \log^c\left(\frac{n^{3k}t^2}{\epsilon^2}\right)\right).
\]
From now on, we will simplify such expressions to just
\[
O\left(\frac{n^{3k}t^2}{\epsilon}\right),
\]
understanding that the hidden factors include polylogarithmic terms.
Furthermore, for clarity, we will write expressions in terms of \( L \), the number of local terms in the Hamiltonian decomposition and simply assume that \( L = O(n^k) \) for a \( k \)-local Hamiltonian. That is, we will use bounds like \( O(L^3 t^2 / \epsilon) \), knowing that \( L \) depends polynomially on the number of qubits \( n \). \section{Higher-Order Product Formulas}
It seems somewhat undesirable that, in order to simulate a Hamiltonian for time \( t \), the algorithm we have so far exhibits a complexity with quadratic dependence on \( t \), i.e., \( O(t^2) \). This behavior arises from the second-order approximation 
\[
e^{-iH_1t} e^{-iH_2t} \cdots e^{-iH_Lt} \approx e^{-iHt},
\]
whose error is \( \mathcal{O}(L^2t^2) \). In the following sections, we will develop higher-order approximations that reduce this dependence on \( t \), resulting in more efficient simulations for large evolution times.
Notice that, in this initial approximation, the number of local terms \( L \) has the same order as \( t \) in the error bound. This will generally continue to be the case. However, in the arguments that follow, explicitly keeping track of \( L \) in each step (especially when dealing with combinatorial expansions and term-by-term error estimates) would significantly lengthen the proofs and detract from the core ideas.
Therefore, to streamline the presentation and maintain focus, we will temporarily suppress the dependence on \( L \) and concentrate on the scaling with respect to \( t \). Once we derive the optimal approximation scheme, we will return to a rigorous bound on the number of quantum gates required to simulate \( H \), at which point the dependence on \( L \) will reappear with the same order as the time parameter \( t \) (as expected). \\\\ We begin by defining our goal. As before, we omit the factor $-i$ for convenience.\begin{definition}
    Let \( H = \sum_{j=1}^L H_j \) be a Hamiltonian. We say that a product formula
    \[
    S(t) = \prod_{i=1}^{q} e^{p_{i,1} H_1 t} \cdots e^{p_{i,L} H_L t},
    \]
    is an approximation of order k of the time evolution map $e^{Ht}$ if
    \[
    e^{Ht} = S(t) + \mathcal{O}(t^{k+1}).
    \]
\end{definition}

Observe that the expression $\prod_{j=1}^L e^{t H_j}$ constitutes a first-order product formula. To build intuition for constructing higher-order approximations, let us first consider the simple case where $L = 2$. It is straightforward to construct a second-order product formula in this case:

\begin{align*} 
e^{At/2}e^{Bt}e^{At/2} &=\left( I + \frac{At}{2} +\frac{A^2t^2}{8} + \mathcal{O}(t^3) \right)
\left( I + Bt + \frac{B^2t^2}{2} + \mathcal{O}(t^3) \right) \left( I + \frac{At}{2} +\frac{A^2t^2}{8} + \mathcal{O}(t^3) \right) \\
&= \left(I + \left(\frac{A}{2} + B\right)t  + \left(\frac{A^2}{8} + \frac{AB}{2}  + \frac{B^2}{2}\right)t^2+ \mathcal{O}(t^3)\right)\left( I + \frac{At}{2} +\frac{A^2t^2}{8} + \mathcal{O}(t^3) \right) \\
&= I + (A + B)t  + \left( \frac{A^2}{8} + \frac{AB}{2} + \frac{B^2}{2} + \frac{A^2}{4} + \frac{BA}{2} + \frac{A^2}{8} \right)t^2 + \mathcal{O}(t^3) \\
&= I + (A + B)t + \frac{1}{2} \left( A^2 + AB + BA + B^2 \right)t^2 + \mathcal{O}(t^3).
\end{align*}
On the other hand, we have
\[
e^{(A+B)t} = I + (A + B)t + \frac{1}{2} \left( A^2 + AB + BA + B^2 \right)t^2 + \mathcal{O}(t^3).
\]
Thus, we conclude that
\[
e^{(A+B)t} = e^{At/2} e^{Bt} e^{At/2} + \mathcal{O}(t^3),
\]
which confirms that this is a second-order product formula.
Determining the coefficients \( p_{i,j} \) by directly expanding the product formula and matching each term to the corresponding term in the Taylor expansion of the exponential map \( e^{(A+B)t} \) quickly becomes intractable. Fortunately, in 1990, Masuo Suzuki \cite{suzukimonte, suzukigeneral} introduced a clever recursive method for constructing product formulas of arbitrary order.
We denote our second-order product formula as
\[
S_2(t) = e^{At/2} e^{Bt} e^{At/2}.
\]
The idea behind constructing higher-order formulas is to introduce a free parameter \( s \in \mathbb{R} \) that creates redundancy in the expression \( e^{(A+B)t} \), allowing for cancellation of higher-order error terms. Observe the identity:
\[
e^{(A+B)t} = e^{s(A+B)t} e^{(1 - 2s)(A+B)t} e^{s(A+B)t}.
\]
We now approximate each exponential using the second-order formula \( S_2 \), leading to a new composition with a free parameter:
\[
S_3(t) = S_2(st) S_2((1 - 2s)t) S_2(st).
\]
Explicitly, this gives:
\[
S_3(t) = e^{\frac{s}{2} t A} e^{s t B} e^{\frac{1 - s}{2} t A} e^{(1 - 2s) t B} e^{\frac{1 - s}{2} t A} e^{s t B} e^{\frac{s}{2} t A}.
\]

To determine whether \( S_3(t) \) is a third-order approximation, we expand each component using the error expression for \( S_2 \):
\begin{align*}
S_3(t) &= S_2(st) S_2((1 - 2s)t) S_2(st) \\
&= \left(e^{s(A+B)t} + R_3(\{A,B\}) s^3 t^3 + \mathcal{O}(t^4)\right)
\left(e^{(1 - 2s)(A+B)t} + R_3(\{A,B\}) (1 - 2s)^3 t^3 + \mathcal{O}(t^4)\right) \cdot \\& \quad \cdot 
\left(e^{s(A+B)t} + R_3(\{A,B\}) s^3 t^3 + \mathcal{O}(t^4)\right) \\
&= e^{(A+B)t} + \left(2s^3 + (1 - 2s)^3\right) R_3(\{A,B\}) t^3 + \mathcal{O}(t^4),
\end{align*}
where \( R_3(\{A,B\}) \) is a matrix that depends on \( A \), \( B \) and their products, capturing the leading third-order error term in the expansion.
Thus, to eliminate the third-order error term and obtain a formula of order 3, we must choose \( s \) such that
\[
2s^3 + (1 - 2s)^3 = 0.
\]
Solving this gives \( s = \frac{1}{2 - \sqrt[3]{2}} \). With this choice of \( s \), we conclude:
\[
S_3(t) = e^{(A+B)t} + \mathcal{O}(t^4),
\]
so \( S_3(t) \) is indeed a third-order product formula.
Similarly, we can construct a fourth-order product formula by applying the same recursive structure to \( S_3(t) \). Define:
\[
S_4(t) = S_3(st) S_3((1 - 2s)t) S_3(st).
\]
Expanding each component using the known error of \( S_3 \), we have:
\begin{align*}
S_4(t) &= \left(e^{s(A+B)t} + R_4(\{A,B\}) s^4 t^4 + \mathcal{O}(t^5)\right)
         \left(e^{(1 - 2s)(A+B)t} + R_4(\{A,B\}) (1 - 2s)^4 t^4 + \mathcal{O}(t^5)\right) \cdot \\ & \quad \cdot
         \left(e^{s(A+B)t} + R_4(\{A,B\}) s^4 t^4 + \mathcal{O}(t^5)\right) \\
&= e^{(A+B)t} + \left(2s^4 + (1 - 2s)^4\right) R_4(\{A,B\}) t^4 + \mathcal{O}(t^5).
\end{align*}
To eliminate the fourth-order error term and achieve an approximation of order 4, we must choose \( s \) such that:
\[
2s^4 + (1 - 2s)^4 = 0.
\]
Solving this equation yields the appropriate value of \( s \), ensuring that:
\[
S_4(t) = e^{(A+B)t} + \mathcal{O}(t^5).
\]
Hence, \( S_4(t) \) is a fourth-order product formula.
We begin to observe a recurring pattern in the construction of higher-order product formulas. First, consider the identity:
\[
e^{(A+B)t} = e^{s(A+B)t} e^{(1 - 2s)(A+B)t} e^{s(A+B)t},
\]
which we used to build a third-order formula. This decomposition is, in fact, arbitrary. Any expression of the form
\[
e^{s_1(A+B)t} e^{s_2(A+B)t} \cdots e^{s_r(A+B)t},
\]
is valid as long as the coefficients satisfy the constraint
\[
\sum_{i=1}^r s_i = 1.
\]
This guarantees that the product of exponentials still approximates \( e^{(A+B)t} \), at least to first order.
The recursive structure observed in Suzuki's method suggests a general principle: to construct an order-\( k \) approximation from an order-\( (k-1) \) product formula, we define
\[
S_k(t) = S_{k-1}(s_1 t) S_{k-1}(s_2 t) \cdots S_{k-1}(s_r t),
\]
where \( \sum_{i=1}^r s_i = 1 \) to preserve the first-order behavior and in order to cancel the leading error term of order \( t^k \), it suffices to choose the coefficients \( s_i \) such that
\[
\sum_{i=1}^r s_i^k = 0.
\]
This recursive construction yields a product formula \( S_k(t) \) that approximates \( e^{(A+B)t} \) with an error of \( \mathcal{O}(t^{k+1}) \).
Let us generalize this result to the case of a sum of \( L \) non-commuting Hamiltonians, showing that Suzuki's recursive formula provides high-order approximations for arbitrary \( e^{t\sum_{j=1}^L H_j} \).

\begin{proposition}
    Let $H = \sum_{j=1}^L H_j \in M_{2^n}(\mathbb{C})$. Then
    \begin{equation*}
        e^{Ht} = e^{H_1 t/2} \cdots e^{H_{L-1} t/2} e^{H_L t} e^{H_{L-1} t/2} \cdots e^{H_1 t/2} + \mathcal{O}(t^3).
    \end{equation*}
\end{proposition}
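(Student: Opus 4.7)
The plan is to proceed by induction on the number of summands $L$, reducing the general symmetric product formula to the two-term Strang splitting $e^{(A+B)t} = e^{At/2} e^{Bt} e^{At/2} + \mathcal{O}(t^3)$ that was verified by direct Taylor expansion immediately preceding this statement. The base case $L = 2$ is precisely that Strang splitting, so nothing further is required, and the case $L = 1$ is trivial since both sides equal $e^{H_1 t}$.

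For the inductive step, assume the proposition holds for any sum of $L-1$ Hamiltonians. Given $H = \sum_{j=1}^L H_j$, I would group the last $L-1$ terms as $G := H_2 + H_3 + \cdots + H_L$, so $H = H_1 + G$. Applying the two-term Strang splitting with $A = H_1$ and $B = G$ gives
\[
e^{Ht} = e^{H_1 t/2}\, e^{Gt}\, e^{H_1 t/2} + \mathcal{O}(t^3).
\]
The inductive hypothesis applied to $G$ then yields $e^{Gt} = T(t) + \mathcal{O}(t^3)$, where $T(t) := e^{H_2 t/2} \cdots e^{H_{L-1} t/2}\, e^{H_L t}\, e^{H_{L-1} t/2} \cdots e^{H_2 t/2}$. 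Substituting and flanking with $e^{H_1 t/2}$ reproduces the claimed symmetric product, provided the composite error remains $\mathcal{O}(t^3)$.

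The main obstacle is exactly this last point: the sandwiched error $e^{H_1 t/2} \cdot \mathcal{O}(t^3) \cdot e^{H_1 t/2}$ must itself be shown to be $\mathcal{O}(t^3)$. I would handle this with the crude bound $\|e^{H_1 t/2}\| \leq e^{\|H_1\|\,|t|/2}$, which is uniformly bounded as $t \to 0$. Submultiplicativity of the spectral norm then gives $\|e^{H_1 t/2}\, E(t)\, e^{H_1 t/2}\| \leq \|e^{H_1 t/2}\|^2 \cdot \|E(t)\| = O(1) \cdot \mathcal{O}(t^3) = \mathcal{O}(t^3)$, and combining with the outer $\mathcal{O}(t^3)$ from the initial Strang splitting closes the induction. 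A purely computational alternative would be to expand all $2L-1$ factors on the right-hand side to second order in $t$, collect the $t^2$ contributions, and verify that $\sum_i b_i + \sum_{i<j} a_i a_j = \tfrac{1}{2}H^2$ using the palindromic structure of the coefficients; this is instructive but more tedious, which is why the inductive route is preferable.
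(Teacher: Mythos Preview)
Your proposal is correct and follows essentially the same inductive route as the paper: split off $H_1$ via the two-term Strang formula and apply the inductive hypothesis to $e^{(H_2+\cdots+H_L)t}$. In fact you are a bit more careful than the paper, which simply writes the sandwiched $\mathcal{O}(t^3)$ without justifying it, whereas you explicitly bound $\|e^{H_1 t/2}\|$ near $t=0$ to control that term.
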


\begin{proof}
    Let's use induction on the number of terms in the sum. For $L = 2$, we have previously shown that
    \[
    e^{(H_1 + H_2)t} = e^{H_1 t/2} e^{H_2 t} e^{H_1 t/2} + \mathcal{O}(t^3).
    \]
    Now assume that the statement holds for $L - 1$. Then,
    \begin{align*}
        e^{Ht} = e^{\left(H_1 + \sum_{j=2}^{L} H_j\right)t}
        &= e^{H_1 t/2} e^{\left(\sum_{j=2}^{L} H_j\right)t} e^{H_1 t/2} + \mathcal{O}(t^3) \\
        &= e^{H_1 t/2} \left( e^{H_2 t/2} \cdots e^{H_{L-1} t/2} e^{H_L t} e^{H_{L-1} t/2} \cdots e^{H_2 t/2} + \mathcal{O}(t^3) \right) e^{H_1 t/2} + \mathcal{O}(t^3) \\
        &= e^{H_1 t/2} \cdots e^{H_{L-1} t/2} e^{H_L t} e^{H_{L-1} t/2} \cdots e^{H_1 t/2} + \mathcal{O}(t^3).
    \end{align*}
\end{proof}

\begin{theorem} \label{suzuki}
    Let $H_j \in M_{2^n}(\mathbb{C})$, for $j \in \{1, \dots, L\}$ and let $S_{m-1}(t)$ be a product formula of order $(k-1)$ for $e^{t\sum_{j=1}^L H_j}$:
    \begin{equation*}
        e^{t\sum_{j=1}^LH_j} = S_{k-1}(t) + \mathcal{O}(t^k).
    \end{equation*}
    Then, the formula
    \begin{equation*}
        S_k(t)=\prod_{i=1}^r S_{k-1}(s_{i}t),
    \end{equation*}
    where $r \geq 3$, $\sum_{i=1}^r s_{i} = 1$ and $\sum_{i=1}^r s_{i}^k = 0$, is a product formula of order $k$:
    \begin{equation*}
        e^{t\sum_{j=1}^L H_j} = S_k(t) + \mathcal{O}(t^{k+1}).
    \end{equation*}
\end{theorem}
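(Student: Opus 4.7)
The plan is to run an induction-style argument that leverages the analyticity of $S_{k-1}(t)$ to pin down the leading error term, then uses the constraints $\sum_i s_i = 1$ and $\sum_i s_i^k = 0$ to kill both the ``zeroth-order'' error and the $t^k$ correction simultaneously. First I would observe that, since $S_{k-1}(t)$ is a finite product of matrix exponentials, it is entire in $t$, so its assumed error bound $e^{tH}-S_{k-1}(t) = \mathcal{O}(t^k)$ (with $H=\sum_j H_j$) refines to an exact expansion
\[
S_{k-1}(t) = e^{tH} + R_k\, t^k + \mathcal{O}(t^{k+1}),
\]
for some fixed matrix $R_k$ depending only on $H_1,\dots,H_L$ (the ``leading error coefficient''). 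Substituting $t\mapsto s_i t$ then gives
\[
S_{k-1}(s_i t) = e^{s_i t H} + R_k\, s_i^{k}\, t^k + \mathcal{O}(t^{k+1}).
\]

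Next I would multiply these $r$ factors together and collect powers of $t$, being careful with non-commutativity. The main term is $\prod_{i=1}^r e^{s_i t H} = e^{(\sum_i s_i)\, tH} = e^{tH}$, where the first constraint $\sum_i s_i = 1$ enters. The next-order contribution arises by replacing exactly one factor by its $R_k\, s_i^k t^k$ correction while keeping the remaining factors as exponentials; since these extra factors are $I+\mathcal{O}(t)$ and the correction already carries $t^k$, their non-identity pieces only generate $\mathcal{O}(t^{k+1})$. Terms that replace two or more factors by corrections contribute at least $t^{2k}$, which is also absorbed into $\mathcal{O}(t^{k+1})$ since $k\geq 1$. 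Putting this together,
\begin{align*}
S_k(t) &= \prod_{i=1}^r S_{k-1}(s_i t) \\
       &= e^{tH} + \left(\sum_{i=1}^r s_i^k\right) R_k\, t^k + \mathcal{O}(t^{k+1}).
\end{align*}
The second constraint $\sum_i s_i^k = 0$ then annihilates the $t^k$ term, leaving $S_k(t) = e^{tH} + \mathcal{O}(t^{k+1})$, which is the claim.

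The step I expect to require the most care is the bookkeeping of the $\mathcal{O}(t^{k+1})$ remainders when expanding the product of $r$ factors, specifically justifying that the ``mixed'' terms (one $R_k$-correction times exponentials, or several corrections together) do not sneak in at order $t^k$. The key fact to invoke there is submultiplicativity of the spectral norm together with the uniform bounds $\|e^{s_i t H}\|\leq e^{|s_i| t \|H\|}$, so that in a neighborhood of $t=0$ the norms of all intervening factors stay bounded and the asymptotic symbols combine cleanly. Once that is in hand, the cancellation forced by $\sum_i s_i^k = 0$ is automatic and the induction step closes, completing the proof.
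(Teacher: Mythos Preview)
Your proposal is correct and follows essentially the same route as the paper: identify the leading error coefficient $R_k$ in $S_{k-1}(t)=e^{tH}+R_k t^k+\mathcal{O}(t^{k+1})$, multiply the rescaled factors, use $\sum_i s_i=1$ to recover $e^{tH}$ as the main term, and use $\sum_i s_i^k=0$ to cancel the $t^k$ contribution while noting that all cross-terms land at order $t^{k+1}$ or higher. Your added justification via analyticity of $S_{k-1}$ and the explicit bookkeeping of mixed terms is somewhat more careful than the paper's version, but the argument is the same.
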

\begin{proof}
    We begin by splitting the full exponential into smaller segments, each scaled by a factor \( s_{i} \). This holds because the sum of all the \( s_{i} \) is 1.

    \begin{align*}
        e^{t\sum_{j=1}^L H_j} = \prod_{i=1}^r e^{s_{i}t\sum_{j=1}^L H_j}.
    \end{align*}
    
        Each \( e^{s_{i}t H} \) is approximated by the \((k-1)\)-order product formula up to error \( \mathcal{O}(t^k) \) and since the formula is of order \( k-1 \), the leading error term scales like \( s_{i}^k t^k \).

    \begin{align*}
        = \prod_{i=1}^r \left(S_{k-1}(s_{i}t) + s_{i}^k t^k R_k(\{H_j\}_j) + \mathcal{O}(t^{k+1})\right).
    \end{align*}

    We now multiply all these approximate formulas. The leading error at order \( t^k \) comes from summing the individual error terms, since cross-terms contribute only at higher order:
    \begin{align*}
        = \prod_{i=1}^r S_{k-1}(s_{i}t) + t^k R_k(\{H_j\}_j) \sum_{i=1}^r s_{i}^k + \mathcal{O}(t^{k+1}) = S_k(t) + \mathcal{O}(t^{k+1}).
    \end{align*}
\end{proof}
It should be noted that the system
\[
\sum_{i=1}^r s_i^k = 0 \quad \text{and} \quad \sum_{i=1}^r s_i = 1
\]
always admits a real solution when \( k \) is an odd integer and \( r \geq 3 \). To see this, consider setting
\[
s_1 = s_2 = \dots = s_{r-1} = a \in \mathbb{R}, \quad s_r = 1 - (r - 1)a.
\]
This ensures that \( \sum_{i=1}^r s_i = 1 \). We now solve
\[
(r - 1)a^k + (1 - (r - 1)a)^k = 0 \quad \Rightarrow \quad (r - 1)a^k = -\left(1 - (r - 1)a\right)^k.
\]
Taking the \( k \)-th root (notice that this step would not be valid if $k$ was even), we get:
\[
\sqrt[k]{r - 1}\, a = -1 + (r - 1)a \quad \Rightarrow \quad 1 = a((r - 1) - \sqrt[k]{r - 1}),
\]
which yields:
\[
a = \frac{1}{(r - 1) - \sqrt[k]{r - 1}}.
\]
In contrast, when \( k \) is even, such a solution cannot exist over the reals. Indeed, if \( k \) is even and all \( s_i \in \mathbb{R} \), then \( s_i^k \geq 0 \) and the sum \( \sum_{i=1}^r s_i^k = 0 \) implies that \( s_i = 0 \) for all \( i \). But then \( \sum_{i=1}^r s_i = 0 \), contradicting the normalization condition \( \sum_{i=1}^r s_i = 1 \). Therefore, complex coefficients are necessary in the even-\( k \) case.
\\\\
This is problematic if we want to avoid introducing complex time steps in the simulation. Fortunately, it turns out that by requiring our product formulas to satisfy one additional symmetry condition, we can systematically avoid the need for odd-order approximations altogether, thereby working exclusively with even-order formulas that admit real coefficients.

\begin{definition}
    We will say that a product formula $S(t) = \prod_{i=1}^{q} e^{p_{i,1} H_1 t} \cdots e^{p_{i,L} H_L t}$ is symmetric if $S(t)S(-t)=I$.
\end{definition}
\begin{theorem} \label{symmetric}
   Every symmetric product  formula  $S_{2k-1}(t)$  of odd order $2k-1$  is also of order $2k$.
\end{theorem}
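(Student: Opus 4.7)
The plan is to exploit the symmetry condition $S_{2k-1}(t)\,S_{2k-1}(-t)=I$ together with the parity of the first nonvanishing error term.

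Since $S_{2k-1}(t)$ has order $2k-1$, its Taylor expansion in $t$ agrees with that of $e^{tH}$ (writing $H = \sum_{j=1}^L H_j$) through order $t^{2k-1}$, so the first discrepancy is at order $t^{2k}$ and we may write
\[
S_{2k-1}(t) = e^{tH} + t^{2k} E + \mathcal{O}(t^{2k+1}),
\]
for some fixed operator $E$ independent of $t$. The goal reduces to showing that $E = 0$.

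Substituting $t \mapsto -t$ and using that $2k$ is even, so $(-t)^{2k} = t^{2k}$, yields
\[
S_{2k-1}(-t) = e^{-tH} + t^{2k} E + \mathcal{O}(t^{2k+1}),
\]
with the same $E$ appearing on both sides. Multiplying the two expansions, using $e^{tH}e^{-tH} = I$ and $e^{\pm tH} = I + \mathcal{O}(t)$, and observing that any product of two error terms is of order at least $t^{4k}$ (hence absorbed into $\mathcal{O}(t^{2k+1})$ for $k \geq 1$), one finds
\[
S_{2k-1}(t)\, S_{2k-1}(-t) = I + 2 t^{2k} E + \mathcal{O}(t^{2k+1}).
\]
The symmetry hypothesis then forces the $t^{2k}$ coefficient to vanish, giving $E = 0$ and therefore $S_{2k-1}(t) = e^{tH} + \mathcal{O}(t^{2k+1})$, which by definition means $S_{2k-1}$ is a product formula of order $2k$.

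The key conceptual point (and the only place where care is needed) is the parity argument: the step works precisely because $2k$ is even, so $E$ enters $S(t)$ and $S(-t)$ with the same sign and the symmetry constraint nontrivially forces $2E = 0$. Had the leading error been at an odd power of $t$, the sign flip would render the symmetry condition automatic at that order, and no improvement could be extracted. This is exactly why the theorem promotes odd-order symmetric formulas by one order and not further.
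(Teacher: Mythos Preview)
Your proof is correct and follows essentially the same approach as the paper: expand $S_{2k-1}(\pm t)$ around $e^{\pm tH}$ with a common leading error term $t^{2k}E$ (using that $2k$ is even), multiply, and use the symmetry identity $S_{2k-1}(t)S_{2k-1}(-t)=I$ to force $E=0$. Your presentation is arguably cleaner in making the parity point explicit; the paper instead divides by $t^{2k}$ and evaluates at $t=0$, but the content is the same.
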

\begin{proof}
    Since \( S_{2k-1}(t)S_{2k-1}(-t)=I \) and
    \[
    e^{t\sum_{j=1}^L H_j} = S_{2k-1}(t) + t^{2k}R_{2k}(\{H_j\}_j) + \mathcal{O}(t^{2k+1}),
    \]
    we have:
    \begin{align*}
        &\left( e^{t\sum_{j=1}^L H_j} + t^{2k}R_{2k}(\{H_j\}_j) + \mathcal{O}(t^{2k+1}) \right) 
          \left( e^{-t\sum_{j=1}^L H_j} + t^{2k}R_{2k}(\{H_j\}_j) + \mathcal{O}(t^{2k+1}) \right) = I.
    \end{align*}
    This can be rewritten as:
    \[
    \left( e^{t\sum_{j=1}^L H_j} + t^{2k}R_{2k}(\{H_j\}_j) \right) \left( e^{-t\sum_{j=1}^L H_j} + t^{2k}R_{2k}(\{H_j\}_j) \right) = I + \mathcal{O}(t^{2k+1}).
    \]
    Expanding the product, we obtain:
    \[
    t^{2k} R_{2k}(\{H_j\}_j) e^{t\sum_{j=1}^L H_j} + t^{2k} R_{2k}(\{H_j\}_j) e^{-t\sum_{j=1}^L H_j} = \mathcal{O}(t^{2k+1}).
    \]
    Dividing both sides by \( t^{2k} \), we find:
    \[
    R_{2k}(\{H_j\}_j) \left( e^{t\sum_{j=1}^L H_j} + e^{-t\sum_{j=1}^L H_j} \right) = \mathcal{O}(t).
    \]
    Substituting \( t = 0 \) yields:
    \[
    R_{2k}(\{H_j\}_j) \cdot 2I = 0 \quad \Rightarrow \quad R_{2k}(\{H_j\}_j) = 0.
    \]
\end{proof}
Thus, from the last two theorems, we obtain an infinite number of real symmetric decompositions of $e^{t\sum_{j=1}^LH_j}$ up to any order in $t$. \\\\ For example, recall that \( S_2(t) = e^{H_1t/2} \cdots e^{H_{L-1}t/2} e^{H_Lt} e^{H_{L-1}t/2} \cdots e^{H_1t/2} \). It is a simple exercise to verify that \( S_2(t) \) is symmetric. Now consider
\[
S_3(t) = S_2(st) S_2((1 - 2s)t) S_2(st),
\]
and apply Theorem~\ref{suzuki}. Since \( s + (1 - 2s) + s = 1 \) and \( 2s^3 + (1 - 2s)^3 = 0 \) for \( s = \frac{1}{2 - \sqrt[3]{2}} \), we obtain that \( S_3(t) \) is at least of order 3. However, notice that
\begin{align*}
S_3(t) S_3(-t) &= S_2(st) S_2((1 - 2s)t) S_2(st) S_2(-st) S_2(-(1 - 2s)t) S_2(-st) \\
&= S_2(st) S_2((1 - 2s)t) S_2(-(1 - 2s)t) S_2(-st) \\
&= S_2(st) S_2(-st) \\
&= I,
\end{align*}
so \( S_3(t) \) is symmetric. By Theorem~\ref{symmetric}, it is therefore of order at least 4. In other words, we can define \( S_4(t) = S_3(t) \). Repeating this process with
\[
S_5(t) = S_4(st) S_4((1 - 2s)t) S_4(st),
\]
for \( s = \frac{1}{2 - \sqrt[5]{2}} \), we find that it is of at least order 6 and we can write \( S_6(t) = S_5(t) \). In general, we have found a recursive even-order product formula given by
\begin{equation*}
    S_{2k}(t) = S_{2k-2}(s_k t) S_{2k-2}((1 - 2s_k)t) S_{2k-2}(s_k t),
\end{equation*}
where \( s_k = \frac{1}{2 - \sqrt[2k-1]{2}} \) and the base case is
\begin{equation*}
    S_2(t) = e^{H_1t/2} \cdots e^{H_{L-1}t/2} e^{H_Lt} e^{H_{L-1}t/2} \cdots e^{H_1t/2}.
\end{equation*}
This construction yields a symmetric product formula of order \( 2k \), meaning
\begin{equation*}
    e^{Ht} = S_{2k}(t) + \mathcal{O}(t^{2k+1}).
\end{equation*}
Unfortunately, we cannot use this specific recursion formula in practice. Notice that \( s_k > 1 \) for all \( k \). Consequently, this series of approximants is not convergent in the limit \( k \to \infty \). Thus, this scheme of decomposition is not practical for large \( k \). 
Let's try to illustrate this (a more detailed discussion can be found in \cite{suzuki1994convergence}). We can see the exponential map \( e^{x(A+B)} \) as a time-evolution map from the time \( t = 0 \) to the time \( t = x \). In the formula for \( S_4(x) \), the term \( S_2(s_2x) \) on the right approximates the time evolution from \( t = 0 \) to \( t = s_2x \approx 1.35x \), the term \( S_2((1 - 2s_2)x) \) in the middle approximates the time evolution from \( t = sx \) to \( t = s_2x + (1 - 2s_2)x = (1 - s_2)x \approx -0.35x \) and the term \( S_2(sx) \) on the left approximates the time evolution from \( t = (1 - s_2)x \) to \( t = (1 - s_2)x + s_2x = x \).
Let us express this time evolution in Figure \ref{fig:time-evolution-diagram}:
 \begin{figure}[H]
    \centering
\begin{tikzpicture}[scale=0.5]
\draw[thick,dashed] (-3,3) -- (22,3);
\draw[thick,dashed] (-3,8) -- (22,8);
\draw[thick] (-3,10) -- (-3,0);
\draw[thick,-latex] (-3,3) -- (-3,11);
\draw[thick,-latex] (-2,3) -- (-2,6.5);
\draw[thick] (-2,6) -- (-2,10);
\draw[thick] (-2,10) -- (-1.5,10);
\draw[thick, -latex] (-1.5,10) -- (-1.5,5);
\draw[thick] (-1.5,6) -- (-1.5,2);
\draw[thick] (-1.5,2) -- (-1,2);
\draw[thick, -latex] (-1,2) -- (-1,5);
\draw[thick] (-1,4) -- (-1,8);

\draw[thick,-latex] (1,3) -- (1,6.5);
\draw[thick] (1,6) -- (1,12);
\draw[thick] (1,12) -- (1.5,12);
\draw[thick, -latex] (1.5,12) -- (1.5,5);
\draw[thick] (1.5,6) -- (1.5,2);
\draw[thick] (1.5,2) -- (2,2);
\draw[thick, -latex] (2,2) -- (2,5);
\draw[thick] (2,5) -- (2,10);
\draw[thick] (2,10) -- (2.5,10);

\draw[thick, -latex] (2.5,10) -- (2.5,6);
\draw[thick] (2.5,6) -- (2.5,1);
\draw[thick] (2.5,1) -- (3,1);
\draw[thick, -latex] (3,1) -- (3,5.5);
\draw[thick] (3,4) -- (3,12);
\draw[thick] (3,12) -- (3.5,12);
\draw[thick, -latex] (3.5,12) -- (3.5,4.5);
\draw[thick] (3.5,6) -- (3.5,2);

\draw[thick] (3.5,2) -- (4,2);
\draw[thick,-latex] (4,2) -- (4,6.5);
\draw[thick] (4,6) -- (4,10);
\draw[thick] (4,10) -- (4.5,10);
\draw[thick, -latex] (4.5,10) -- (4.5,5);
\draw[thick] (4.5,6) -- (4.5,1);
\draw[thick] (4.5,1) -- (5,1);
\draw[thick, -latex] (5,1) -- (5,5);
\draw[thick] (5,4) -- (5,8);

\draw[thick,-latex] (7,3) -- (7,6.5);
\draw[thick] (7,6) -- (7,14);
\draw[thick] (7,14) -- (7.5,14);
\draw[thick, -latex] (7.5,14) -- (7.5,5);
\draw[thick] (7.5,6) -- (7.5,2);
\draw[thick] (7.5,2) -- (8,2);
\draw[thick, -latex] (8,2) -- (8,5);
\draw[thick] (8,5) -- (8,12);
\draw[thick] (8,12) -- (8.5,12);

\draw[thick, -latex] (8.5,12) -- (8.5,6);
\draw[thick] (8.5,6) -- (8.5,1);
\draw[thick] (8.5,1) -- (9,1);
\draw[thick, -latex] (9,1) -- (9,5.5);
\draw[thick] (9,4) -- (9,14);
\draw[thick] (9,14) -- (9.5,14);
\draw[thick, -latex] (9.5,14) -- (9.5,4.5);
\draw[thick] (9.5,6) -- (9.5,2);

\draw[thick] (9.5,2) -- (10,2);
\draw[thick,-latex] (10,2) -- (10,6.5);
\draw[thick] (10,6) -- (10,12);
\draw[thick] (10,12) -- (10.5,12);
\draw[thick, -latex] (10.5,12) -- (10.5,5);
\draw[thick] (10.5,6) -- (10.5,1);
\draw[thick] (10.5,1) -- (11,1);
\draw[thick, -latex] (11,1) -- (11,5);
\draw[thick] (11,4) -- (11,10);

\draw[thick] (11,10) -- (11.5,10);
\draw[thick, -latex] (11.5,10) -- (11.5,6);
\draw[thick] (11.5,6) -- (11.5,0);
\draw[thick] (11.5,0) -- (12,0);
\draw[thick, -latex] (12,0) -- (12,5.5);
\draw[thick] (12,4) -- (12,12);
\draw[thick] (12,12) -- (12.5,12);
\draw[thick, -latex] (12.5,12) -- (12.5,4.5);
\draw[thick] (12.5,6) -- (12.5,1);

\draw[thick] (12.5,1) -- (13,1);
\draw[thick,-latex] (13,1) -- (13,6.5);
\draw[thick] (13,6) -- (13,14);
\draw[thick] (13,14) -- (13.5,14);
\draw[thick, -latex] (13.5,14) -- (13.5,5);
\draw[thick] (13.5,6) -- (13.5,0);
\draw[thick] (13.5,0) -- (14,0);
\draw[thick, -latex] (14,0) -- (14,5);
\draw[thick] (14,4) -- (14,12);

\draw[thick] (14,12) -- (14.5,12);
\draw[thick, -latex] (14.5,12) -- (14.5,6);
\draw[thick] (14.5,6) -- (14.5,1);
\draw[thick] (14.5,1) -- (15,1);
\draw[thick, -latex] (15,1) -- (15,5.5);
\draw[thick] (15,4) -- (15,14);
\draw[thick] (15,14) -- (15.5,14);
\draw[thick, -latex] (15.5,14) -- (15.5,4.5);
\draw[thick] (15.5,6) -- (15.5,2);

\draw[thick] (15.5,2) -- (16,2);
\draw[thick,-latex] (16,2) -- (16,6.5);
\draw[thick] (16,6) -- (16,12);
\draw[thick] (16,12) -- (16.5,12);
\draw[thick, -latex] (16.5,12) -- (16.5,5);
\draw[thick] (16.5,6) -- (16.5,1);
\draw[thick] (16.5,1) -- (17,1);
\draw[thick, -latex] (17,1) -- (17,5);
\draw[thick] (17,5) -- (17,10);
\draw[thick] (17,10) -- (17.5,10);

\draw[thick, -latex] (17.5,10) -- (17.5,6);
\draw[thick] (17.5,6) -- (17.5,0);
\draw[thick] (17.5,0) -- (18,0);
\draw[thick, -latex] (18,0) -- (18,5.5);
\draw[thick] (18,4) -- (18,12);
\draw[thick] (18,12) -- (18.5,12);
\draw[thick, -latex] (18.5,12) -- (18.5,4.5);
\draw[thick] (18.5,6) -- (18.5,1);

\draw[thick] (18.5,1) -- (19,1);
\draw[thick,-latex] (19,1) -- (19,6.5);
\draw[thick] (19,6) -- (19,10);
\draw[thick] (19,10) -- (19.5,10);
\draw[thick, -latex] (19.5,10) -- (19.5,5);
\draw[thick] (19.5,6) -- (19.5,0);
\draw[thick] (19.5,0) -- (20,0);
\draw[thick, -latex] (20,0) -- (20,5);
\draw[thick] (20,4) -- (20,8);

\node[left] at (-3,3) {$0$};
\node[left] at (-3,8) {$x$};
\node[left] at (-3,10) {$t$};
\node at (-1,1) {(a)};
\node at (3.75,1) {(b)};
\node at (15.75,1) {(c)};
\end{tikzpicture}
\caption{Diagrams that represent the time evolution of 
    (a) the fourth-order product formula \(S_4(x)\), 
    (b) the sixth-order product formula \(S_6(x)\) and 
    (c) the eighth-order product formula \(S_8(x)\) in terms of $S_2(x)$. 
    Notice that in (b) and especially (c), the evolution steps move increasingly further backward and forward in time}
    \label{fig:time-evolution-diagram}
\end{figure}
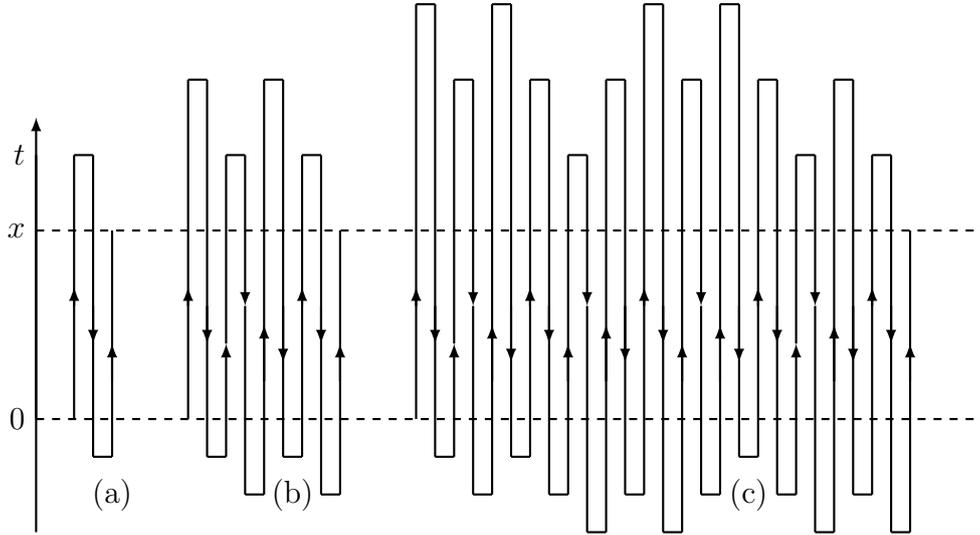
As is evident, the evolution according to $S_4(x)$ has a part that goes into the ``past''  or $t < 0$. This can be problematic in systems where negative time is either undefined or has no physical meaning. Moreover, as the recursion level increases, the approximation ``explodes'' and becomes unstable.\\\\ Let's consider an alternative recursion formula:
\[
S_{2k}(t) = \left(S_{2k-2}(s_k t)\right)^2 S_{2k-2}((1-4s_k)t) \left(S_{2k-2}(s_k t)\right)^2.
\]
Notice that starting with \( S_2(t) \), which is symmetric, the following product formulas generated by this recursion will also be symmetric. Therefore, we can write it only for even orders. By applying Theorem \ref{suzuki}, it suffices to choose \( s_k \) such that:
\[
4s_k^{2k-1} + (1 - 4s_k)^{2k-1} = 0.
\]
This leads to the solution:
\[
s_k = \frac{1}{4 - \sqrt[2k-1]{4}},
\]
and \( s_k < 1 \) for all \( k \). Now, let's visualize the time evolution using this new recursive formula in Figure \ref{fig:recursion_formula}.

\begin{figure}[H]
    \centering
\includegraphics[width=\textwidth]{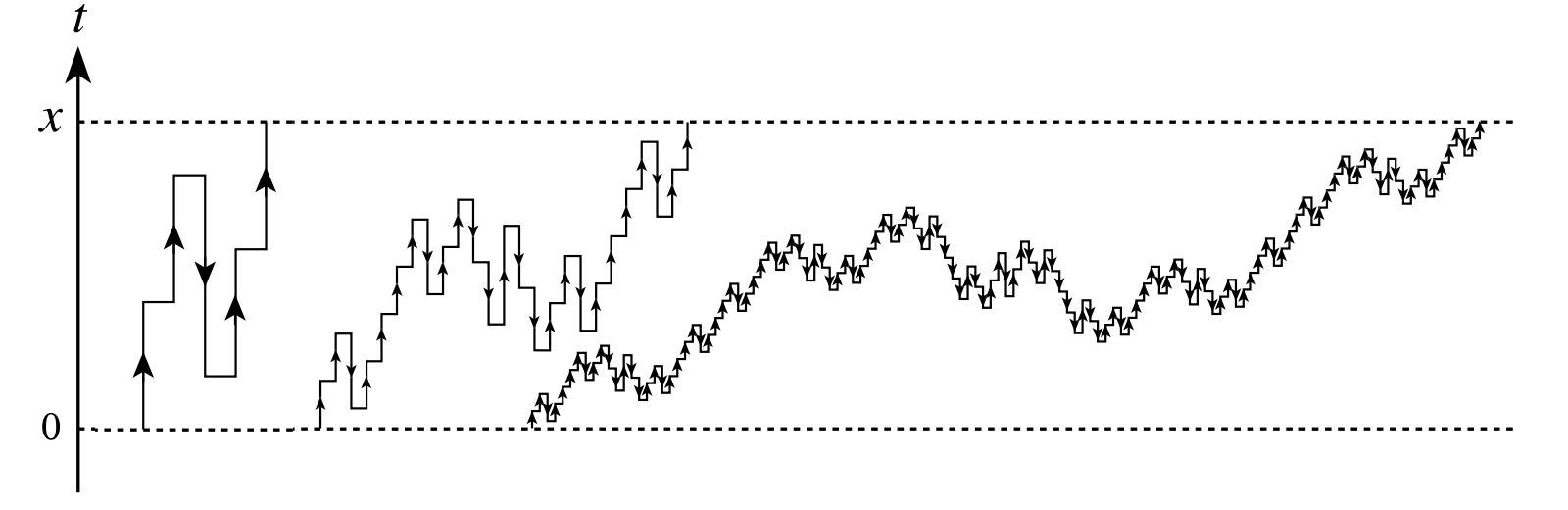}
\caption{Image from \cite{Hatano2005} showing the fourth-, sixth- and eighth-order product formulas using the new recursion scheme}
\label{fig:recursion_formula}
\end{figure}
We can continue this recursive procedure, ultimately arriving at the exact time evolution, where the diagram looks like a fractal. This approach is often referred to as fractal decomposition. Interestingly, the back-and-forth time evolution in a fractal manner accurately reproduces the exact time evolution.
 \\\\ Summing up, we have arrived at
\[
S_{2k}(t) = \left(S_{2k-2}(s_k t)\right)^2 S_{2k-2}((1-4s_k)t) \left(S_{2k-2}(s_k t)\right)^2,
\]
with
\[
s_k = \frac{1}{4 - \sqrt[2k-1]{4}},
\]
and
\[
S_2(t) = e^{H_1 t / 2} \cdots e^{H_{L-1} t / 2} e^{H_L t} e^{H_{L-1} t / 2} \cdots e^{H_1 t / 2}.
\]
Thus, we have
\[
e^{Ht} = S_{2k}(t) + \mathcal{O}(t^{2k+1}),
\]
and therefore,
\[
e^{-iHt} = S_{2k}(-it) + \mathcal{O}(t^{2k+1}).
\]
\\\\In order to calculate the number of gates required to simulate a Hamiltonian using this new product formula, we need to estimate the number of \( e^{-ip_{i,j}H_jt} \) terms that appear in it. From this, we can infer that, using the Solovay-Kitaev theorem, the total number of gates will have only a polylogarithmic overhead. Notice that, for the first time, it is not immediately clear how many \( e^{-ip_{i,j}H_jt} \) terms we are working with since the product formula is defined recursively. In the following result, we provide an upper bound based on the one from \cite{berry2007efficient}.

\begin{theorem} \label{nexp}
Let \( N_{\text{exp}} \) denote the number of exponentials in 
\[
S_{2k}(t) = \left(S_{2k-2}(s_kt)\right)^2 S_{2k-2}((1-4s_k)t) \left(S_{2k-2}(s_kt)\right)^2,
\]
with \( s_k = \frac{1}{4 - \sqrt[2k-1]{4}} \) and $S_2(t) = \prod_{j=1}^{L} e^{H_jt / 2} \prod_{j'=L}^{1} e^{H_{j'} t / 2}$. When the permissible error is bounded by \( \epsilon \), \( N_{\text{exp}} \) is bounded by
\[
N_{\text{exp}} \leq \frac{L 5^{2k} (L \tau)^{1 + 1/2k}}{\epsilon^{1/2k}},
\] for \( \epsilon \leq 1 \leq 2L 5^{k-1} \tau \), where \( \tau = \max_{j} \| H_j \|
 t \) and \( k \) is an arbitrary positive integer.
\end{theorem}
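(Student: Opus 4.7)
The overall approach is to decompose $N_{\text{exp}}$ into two factors: the number of exponentials contained in a single evaluation of $S_{2k}(t)$, and the number of Trotter steps $r$ needed so that $\bigl(S_{2k}(-it/r)\bigr)^r$ approximates $e^{-iHt}$ to within $\epsilon$. Multiplying these two quantities yields the desired bound.

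First, I would count the exponentials in one application of $S_{2k}(t)$ by induction on $k$. The recursion
$$S_{2k}(t) = \bigl(S_{2k-2}(s_k t)\bigr)^2 \, S_{2k-2}\bigl((1-4s_k)t\bigr) \, \bigl(S_{2k-2}(s_k t)\bigr)^2$$
invokes exactly $5$ copies of $S_{2k-2}$, so a trivial induction shows that one application of $S_{2k}$ contains at most $5^{k-1}$ copies of $S_2$, each contributing $2L$ exponentials. Hence a single step uses at most $2L \cdot 5^{k-1}$ exponentials.

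Next, I would bound the error of a single step. By construction and by Theorem~\ref{symmetric}, $S_{2k}$ is a symmetric product formula of order $2k$, so
$$\bigl\|e^{-iHt/r} - S_{2k}(-it/r)\bigr\| \leq C_k \left(\frac{L\tau}{r}\right)^{2k+1},$$
for a constant $C_k$ that I would estimate by unrolling the recursion. The base case $k=1$ uses the explicit $\mathcal{O}(L^2 t^2)$ bound previously proved for $S_2$; the inductive step writes the error of $S_{2k}$ as a sum of error contributions from its five $S_{2k-2}$ sub-factors, each evaluated at a rescaled time of magnitude $|s_k|$ or $|1-4s_k|$, both of which are $O(1)$ since $s_k = 1/(4 - \sqrt[2k-1]{4})$. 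The would-be leading term at order $t^{2k-1}$ cancels by the choice of $s_k$ in Theorem~\ref{suzuki} together with the symmetry upgrade of Theorem~\ref{symmetric}, leaving an error of order $(L\tau/r)^{2k+1}$ whose constant $C_k$ grows, under the recursion, at most polynomially in $5^k$. Once this is in place, applying Lemma~\ref{linear_error} yields a global error
$$\bigl\|e^{-iHt} - \bigl(S_{2k}(-it/r)\bigr)^r\bigr\| \leq r \cdot C_k \left(\frac{L\tau}{r}\right)^{2k+1} = \frac{C_k (L\tau)^{2k+1}}{r^{2k}}.$$
Forcing this to be $\leq \epsilon$ gives $r \geq C_k^{1/(2k)} \, (L\tau)^{1 + 1/(2k)} / \epsilon^{1/(2k)}$, and multiplying by the per-step count $2L \cdot 5^{k-1}$ produces the stated bound, provided $C_k^{1/(2k)}$ can be absorbed into a factor of at most $5^{k+1}/2$ (which is where the $5^{2k}$ prefactor in the theorem comes from). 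The regime hypothesis $\epsilon \leq 1 \leq 2L 5^{k-1}\tau$ is used to guarantee $r \geq 1$ so that integer Trotter splitting is well-defined and to absorb rounding of $r$ to an integer into the asymptotics.

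The main obstacle, by some margin, is controlling the constant $C_k$ in Step~2. The recursion produces nested error expansions whose coefficients depend on $s_k$, $1-4s_k$, and on operator norms of nested commutators of the $H_j$; verifying that these coefficients grow only exponentially in $k$ (and not doubly exponentially, which would destroy the target $5^{2k}$ factor) requires careful bookkeeping of which terms cancel thanks to the defining equations $\sum_i s_i = 1$ and $\sum_i s_i^{2k-1}=0$ satisfied by the Suzuki coefficients, and which merely combine additively via the triangle inequality. Everything else — the per-step exponential count, the Trotter splitting, and the final algebraic manipulation that produces the $(L\tau)^{1+1/(2k)}/\epsilon^{1/(2k)}$ scaling — is essentially routine once that constant is under control.
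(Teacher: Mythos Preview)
Your overall architecture --- count the exponentials in one application of $S_{2k}$, bound the single-step error, choose the number of Trotter steps $m$, multiply --- is exactly the paper's, and your per-step count $2L\cdot 5^{k-1}$ matches. The difference lies entirely in how the single-step error constant is obtained.

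You propose to track $C_k$ through the recursion $S_{2k} = S_{2k-2}^2 \, S_{2k-2} \, S_{2k-2}^2$, using the Suzuki cancellation at each level. The paper does not do this, and for good reason: your inductive step would need more than the leading error of $S_{2k-2}$. The choice of $s_k$ kills the $t^{2k-1}$ term, so the surviving error at order $t^{2k+1}$ depends on the \emph{next} coefficient in the expansion of $S_{2k-2}$, not on $C_{k-1}$ alone. To make the induction close you would have to carry at least two orders of error at every level, and the bookkeeping you flag as ``the main obstacle'' becomes genuinely unpleasant.

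The paper sidesteps this entirely with a direct Taylor-tail argument. Since $S_{2k}(t)$ and $e^{tH}$ agree through order $2k$, their difference is bounded by the sum of the two tails from order $2k+1$ onward. For $e^{tH}$ the $l$-th term has norm at most $(L\Lambda|t|)^l/l!$. For $S_{2k}(t)$, write it as a product of $N = 2(L-1)5^{k-1}+1$ exponentials $e^{\tilde{s}_i t H_{j_i}}$; the crucial observation is that every coefficient satisfies $|\tilde{s}_i|<1$, so expanding each factor and collecting terms of total degree $l$ gives, via the multinomial theorem, a norm bound of $(N\Lambda|t|)^l/l!$. Summing both tails from $l=2k+1$ yields
\[
\bigl\|e^{tH}-S_{2k}(t)\bigr\|\leq \bigl(2L5^{k-1}\Lambda|t|\bigr)^{2k+1}
\]
whenever $2L5^{k-1}\Lambda|t|\leq 1$. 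No recursion, no tracking of cancellations --- the Suzuki construction is used only to assert agreement through order $2k$, and the constant drops out of a single global estimate. Everything after this (choosing $m$, absorbing the ceiling, massaging constants into $5^{2k}$) is the routine algebra you already sketched.
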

First we need the following lemma:
\begin{lemma}
Let \( m \in \mathbb{N} \) satisfy
\[
\frac{2 L 5^{k-1} \tau}{m} \leq 1.
\]
Then, the following bound holds:
\[
\left\| e^{-i t \sum_{j=1}^L H_j} - [S_{2k} (-it / m)]^m \right\| \leq \frac{ (2 L 5^{k-1} \tau)^{2k+1}}{m^{2k}}.
\]
\end{lemma}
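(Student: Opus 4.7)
The plan is a two-step argument: bound the single-step error $\|e^{-itH/m} - S_{2k}(-it/m)\|$ and then combine $m$ copies using Lemma~\ref{linear_error}. Writing $e^{-itH} = (e^{-itH/m})^m$ and applying the lemma with $U_i = e^{-itH/m}$ and $V_i = S_{2k}(-it/m)$ (both unitary, so the hypotheses of the lemma are met), the desired global bound $(2L5^{k-1}\tau)^{2k+1}/m^{2k}$ is exactly $m$ times the single-step bound $(2L5^{k-1}\tau)^{2k+1}/m^{2k+1}$. Hence everything reduces to establishing the single-step estimate under the hypothesis $2L5^{k-1}\tau/m \leq 1$.

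For the single-step estimate, the qualitative starting point is that by Theorems~\ref{suzuki} and \ref{symmetric}, $S_{2k}$ is an order-$(2k)$ product formula, so the formal Taylor expansion of $S_{2k}(-it/m) - e^{-itH/m}$ in $t/m$ has no terms below order $2k+1$. To upgrade this to an explicit bound I would use two auxiliary facts. First, the number of exponential factors in $S_{2k}$ satisfies $N_{\exp} \leq (2L-1)\cdot 5^{k-1} \leq 2L\cdot 5^{k-1}$, proved by induction from $N_{\exp}(S_2) = 2L-1$ and the fivefold recursion $S_{2k} = (S_{2k-2}(s_k\, \cdot))^2\, S_{2k-2}((1-4s_k)\,\cdot)\,(S_{2k-2}(s_k\,\cdot))^2$. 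Second, every scale factor $p_{i,j}$ appearing in $S_{2k}(t) = \prod_{i,j} e^{p_{i,j} H_j t}$ satisfies $|p_{i,j}| \leq 1$, since all the recursion coefficients lie in $(0,1)$. Expanding each exponential as a power series and taking norms, both $S_{2k}(-it/m)$ and $e^{-itH/m}$ have $n$-th Taylor coefficients of norm at most $(N_{\exp}\max_j\|H_j\|\,t/m)^n/n!$, so
\[
\|S_{2k}(-it/m) - e^{-itH/m}\| \leq 2 \sum_{n \geq 2k+1} \frac{1}{n!}\left(\frac{2L5^{k-1}\tau}{m}\right)^n,
\]
which, under the hypothesis $2L5^{k-1}\tau/m \leq 1$, is dominated by a constant multiple of $(2L5^{k-1}\tau/m)^{2k+1}$.

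The main obstacle is pinning down the clean constant $1$ in front of $(2L5^{k-1}\tau)^{2k+1}/m^{2k+1}$, as opposed to some larger numerical prefactor. The crude Taylor estimate above produces a constant factor (from the $2$ and from $\sum 1/n!$), so one must either absorb it by slightly enlarging the base $2L5^{k-1}$ or, more sharply, exploit the order-$2k$ cancellation through the integral form of Taylor's remainder applied to the one-parameter family $t \mapsto S_{2k}(-it/m)\, e^{itH/m}$, whose derivative at $0$ vanishes to order $2k$. Following the line of argument in \cite{berry2007efficient}, one repeatedly uses that each of the $N_{\exp}$ exponentials contributes a factor bounded by $\max_j\|H_j\|\,t/m$ to each nested commutator of the error, and the combinatorics of choosing $2k+1$ such factors out of $N_{\exp}$ possibilities yields precisely the claimed bound. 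Once this single-step estimate is in place, the multi-step conclusion is immediate from Lemma~\ref{linear_error}.
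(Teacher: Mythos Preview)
Your plan is essentially the same as the paper's: bound the single-step error via the Taylor tail from order $2k+1$ onward (using that $S_{2k}$ has at most $2L\cdot 5^{k-1}$ factors, each with coefficient of modulus at most $1$, and the multinomial theorem), then apply Lemma~\ref{linear_error}. The ``obstacle'' you flag---getting the constant down to $1$---is not real: from your own estimate
\[
2\sum_{n\geq 2k+1}\frac{x^n}{n!}
=\frac{2x^{2k+1}}{(2k+1)!}\sum_{m\geq 0}\frac{(2k+1)!}{(2k+1+m)!}\,x^{m}
\leq \frac{2}{(2k+1)!}\,x^{2k+1}e^{x}
\leq \frac{e}{3}\,x^{2k+1}<x^{2k+1}
\]
whenever $x=2L5^{k-1}\tau/m\leq 1$, which is exactly how the paper closes the argument. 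No integral remainder or commutator combinatorics is needed.
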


\begin{proof}
We use the fact that the symmetric product formula \( S_{2k}(t) \) approximates the exponential \( e^{t \sum_{j=1}^L H_j} \) to order \( 2k+1 \), i.e.,
\[
e^{t \sum_{j=1}^L H_j} = S_{2k}(t) + \mathcal{O}(t^{2k+1}).
\]
This means the Taylor expansions of both sides agree up to order \( 2k \), so the difference is bounded by the tails starting at \( t^{2k+1} \). The exponential has Taylor expansion:
\[
e^{t \sum_{j=1}^L H_j} = \sum_{l=0}^\infty \frac{t^l}{l!} \left( \sum_{j=1}^L H_j \right)^l.
\]
Since the \( H_j \) are not necessarily commuting, \( \left( \sum_j H_j \right)^l \) can have up to \( L^l \) terms. Defining \( \Lambda := \max_j \| H_j \| \), we have:

\[
\left\| \frac{t^l}{l!} \left( \sum_{j=1}^{L} H_j \right)^l \right\| \leq \frac{ |t \Lambda|^l L^l }{l! }.
\]
Next, consider \( S_{2k}(t) \), which consists of
\[
N := 2(L-1)5^{k-1} + 1
\]
exponentials (this is straightforward to prove by induction starting with $S_2(t))$ of the form \( e^{\tilde{s}_i t H_{j_i}} \), where each coefficient \( \tilde{s}_i \in \{s_k, 1 - 4s_k\} \) or is a product of other $\tilde{s}_i$. Therefore, \( |\tilde{s}_i| < 1 \) for all $i$. Each exponential can be expanded as:
\[
e^{\tilde{s}_i t H_{j_i}} = \sum_{n_i=0}^\infty \frac{(\tilde{s}_i t H_{j_i})^{n_i}}{n_i!}.
\]
The full product \( S_{2k}(t) \) is then:
\[
S_{2k}(t) = \prod_{i=1}^N \left( \sum_{n_i=0}^{\infty} \frac{(\tilde{s}_i t H_{j_i})^{n_i}}{n_i!} \right),
\]
which produces terms of the form
\[
t^l \cdot \frac{\tilde{s}_1^{n_1} \cdots \tilde{s}_N^{n_N}}{n_1! \cdots n_N!} \cdot H_{j_1}^{n_1} \cdots H_{j_N}^{n_N}, \quad \text{where } \sum n_i = l.
\]
The norm of these terms is bounded by:
 \[ \left\|\sum_{\sum n_i = l}t^l \cdot \frac{\tilde{s}_1^{n_1} \cdots \tilde{s}_N^{n_N}}{n_1! \cdots n_N!} \cdot H_{j_1}^{n_1} \cdots H_{j_N}^{n_N} \right\|  \leq 
|t|^l \cdot \Lambda^l \sum_{\sum n_i = l} \frac{1}{n_1! \cdots n_N!} = \frac{|t|^l \cdot \Lambda^l N^l}{l!},
\]
by the multinomial theorem.
Combining the bounds for the exponential and for \( S_{2k}(t) \), we get:
\begin{align*}
\left\| e^{t \sum_{j=1}^L H_j} - S_{2k}(t) \right\|
&\leq \sum_{l=2k+1}^{\infty} \left(\frac{ t^l\Lambda^{l} L^{l}}{l!} + \frac{t^l\Lambda^{l}[2(L-1)5^{k-1}+1]^{l}}{l!} \right) \\  & = \sum_{l=2k+1}^{\infty}\frac{|t\Lambda|^{l}}{l!}\big{\{}L^{l}+[2(L-1)5^{k-1}+1]^{l}\big{\}} \\
&\leq 2\sum_{l=2k+1}^{\infty}\frac{|t\Lambda|^{l}}{l!}[2L5^{k-1}]^{l} \\
&= \frac{2}{(2k+1)!}[2L5^{k-1}t\Lambda]^{2k+1}\sum_{l=2k+1}^{\infty}\frac{|2L5^{k-1}t\Lambda|^{l-(2k+1)}(2k+1)!}{l!} \\
&= \frac{2}{(2k+1)!}[2L5^{k-1}t\Lambda]^{2k+1}\sum_{m=0}^{\infty}\frac{|2L5^{k-1}t\Lambda|^{m}(2k+1)!}{(2k+1+m)!} \\
& \leq \frac{2}{(2k+1)!}[2L5^{k-1}t\Lambda]^{2k+1}\sum_{m=0}^{\infty}\frac{|2L5^{k-1}t\Lambda|^{m}}{m!} \\
&\leq (1/3)[2L5^{k-1}t\Lambda]^{2k+1}e^{[2L5^{k-1}t\Lambda]}.
\end{align*}
Therefore we obtain the inequality
\[
\Big{\|}e^{t\sum_{j=1}^{L}H_{j}}-S_{2k}(t)\Big{\|} \leq [2L5^{k-1}\Lambda t]^{2k+1},
\]
provided $|2L5^{k-1}\Lambda t|\leq 1$. Substituting $t$ by $-it/m$  and applying Lemma \ref{linear_error}, gives the error bound
\[
\Big{\|}e^{-it\sum_{j=1}^{L}H_{j}}-[S_{2k}(-it/m)]^{m}\Big{\|} =
\Big{\|}(e^{-it/m\sum_{j=1}^{L}H_{j}})^m-[S_{2k}(-it/m)]^{m}\Big{\|} 
 \leq \frac{(2 L 5^{k-1} \tau)^{2k+1}}{m^{2k}}, 
\] for $2L5^{k-1}\Lambda t/m\leq 1$. 
\end{proof}
\begin{definition}
Let $x \in \mathbb{R}$. The integer part (or floor function) of $x$, denoted by $\lfloor x \rfloor$, is defined as the greatest integer less than or equal to $x$. Formally, it is given by
\[
\lfloor x \rfloor = \max \left\{ n \in \mathbb{Z} \mid n \leq x \right\}.
\]
Equivalently, the integer part of $x$ is the unique integer $n \in \mathbb{Z}$ such that
\[
n \leq x < n + 1.
\]
\end{definition}
\begin{definition}
Let $x \in \mathbb{R}$. The ceiling function, denoted by $\lceil x \rceil$, is defined as the smallest integer greater than or equal to $x$. This can also be expressed as the integer part of $x$ plus one, i.e.,
\[
\lceil x \rceil = \lfloor x \rfloor + 1.
\]
\end{definition}
Now we can prove the theorem
\begin{proof}[Proof of Theorem \ref{nexp}]
Let us take
\[
m = \left\lceil(2L5^{k-1}\tau)^{1 + \frac{1}{2k}}/\epsilon^{1/2k} \right\rceil.
\]
Then $(2L5^{k-1}\Lambda t/m)^{2k}\leq(2L5^{k-1}\Lambda t)^{2k+1}/m^{2k} \leq  \epsilon \leq 1$ and therefore $2L5^{k-1}\Lambda t/m \leq 1$. We can thus apply the previous lemma. \[
\Big{\|}e^{-it\sum_{j=1}^{L}H_{j}}-[S_{2k}(-it/m)]^{m}\Big{\|} \leq \frac{(2 L 5^{k-1} \tau)^{2k+1}}{m^{2k}} \leq \epsilon.
\]
Because the number of exponentials in $S_{2k}(-it/m)$ does not exceed $2L5^{k-1}$, we have
\[
N_{\exp} \leq 2L5^{k-1}m.
\]
Using that $\lceil x \rceil \leq x+1 $ for all $x \in \mathbb{R}$:
\begin{align*}
N_{\exp} &\leq 2L5^{k-1}(2L5^{k-1}\tau)^{1 + \frac{1}{2k}}/\epsilon^{1/2k} + 2L5^{k-1} \\ &\leq 2^{2 + \frac{1}{2k}}L5^{2 k - 1/2 k - 3/2}(L\tau)^{1 +\frac{1}{2k}}/\epsilon^{1/2k}+ 2L5^{k-1} \\ &\leq \frac{2^{2+\frac{1}{2k}}}{5^{\frac{3}{2}+\frac{1}{2k}}}L5^{2 k}(L\tau)^{1 +\frac{1}{2k}}/\epsilon^{1/2k} + 2L5^{k-1} \\  &\leq \frac{4\sqrt{5}}{25}L5^{2 k}(L\tau)^{1 +\frac{1}{2k}}/\epsilon^{1/2k}+ 2L5^{k-1} \\ &\leq L5^{2 k}(L\tau)^{1 +\frac{1}{2k}}/\epsilon^{1/2k}.
\end{align*}
This concludes the proof, provided that:\[
\frac{4\sqrt{5}}{25} \cdot \frac{L5^{2k}(L\tau)^{1 +\frac{1}{2k}}}{\epsilon^{1/2k}} + 2L5^{k-1} 
\leq 
\frac{L5^{2k}(L\tau)^{1 +\frac{1}{2k}}}{\epsilon^{1/2k}}.
\]
To verify this, divide both sides by the positive quantity \( \frac{L5^{2k}(L\tau)^{1 + \frac{1}{2k}}}{\epsilon^{1/2k}} \), preserving the inequality direction:
\[
\frac{4\sqrt{5}}{25} + \frac{2L5^{k-1} \cdot \epsilon^{1/2k}}{L5^{2k}(L\tau)^{1 + \frac{1}{2k}}} \leq 1.
\]
Now simplify the second term:
\[
\frac{2L5^{k-1} \cdot \epsilon^{1/2k}}{L5^{2k}(L\tau)^{1 + \frac{1}{2k}}}
= \frac{2 \cdot \epsilon^{1/2k}}{5^{k+1}(L\tau)^{1 + \frac{1}{2k}}}.
\]
Thus, we must show:
\[
\frac{4\sqrt{5}}{25} + \frac{2 \cdot \epsilon^{1/2k}}{5^{k+1}(L\tau)^{1 + \frac{1}{2k}}} \leq 1.
\]
From the condition \( 1 \leq 2L5^{k-1}\tau \), we can deduce:
\[
L\tau \geq \frac{1}{2 \cdot 5^{k-1}} = \frac{5}{2 \cdot 5^k}.
\]
Raising both sides to the power \( 1 + \frac{1}{2k} \), we get:
\[
(L\tau)^{1 + \frac{1}{2k}} 
\geq 
\left( \frac{5}{2 \cdot 5^k} \right)^{1 + \frac{1}{2k}} 
= \frac{5^{1 + \frac{1}{2k}}}{2^{1 + \frac{1}{2k}} \cdot 5^{k + \frac{1}{2}}}.
\]
Substitute this lower bound into the second term:
\[
\frac{2 \cdot \epsilon^{1/2k}}{5^{k+1}(L\tau)^{1 + \frac{1}{2k}}}
\leq
\frac{2 \cdot \epsilon^{1/2k} \cdot 2^{1 + \frac{1}{2k}} \cdot 5^{k + \frac{1}{2}}}
{5^{k+1} \cdot 5^{1 + \frac{1}{2k}}}
= 2^{2 + \frac{1}{2k}} \cdot 5^{ -\frac{3}{2} - \frac{1}{2k} } \cdot \epsilon^{1/2k}.
\]
Now, observe that \( \epsilon^{1/2k} \leq 1 \) for \( \epsilon \leq 1 \) and the constants satisfy:
\[
2^{2 + \frac{1}{2k}} \cdot 5^{ -\frac{3}{2} - \frac{1}{2k} } \leq \frac{4\sqrt{5}}{25},
\]
for all \( k \geq 1 \). Therefore:
\[
\frac{2 \cdot \epsilon^{1/2k}}{5^{k+1}(L\tau)^{1 + \frac{1}{2k}}}
\leq
\frac{4\sqrt{5}}{25}.
\]
Adding both terms:
\[
\frac{4\sqrt{5}}{25} + \frac{4\sqrt{5}}{25} = \frac{8\sqrt{5}}{25} \leq 1,
\]
\end{proof}
\begin{remark}
    As we will see in the next section, the bound we have obtained for \( N_{\text{exp}} \) is actually quite loose. Therefore, the results and bounds given for the rest of this section should be taken as educational and illustrative of what kinds of analyses one can perform given such a bound, rather than as state-of-the-art results.
\end{remark}
Let us consider the bound we have just obtained:
\[
N_{\text{exp}} \leq \frac{L \cdot 5^{2k} \cdot (L \tau)^{1 + \frac{1}{2k}}}{\epsilon^{\frac{1}{2k}}}.
\]
Notice that by taking \( k \) sufficiently large, it is possible to achieve a scaling that is arbitrarily close to linear in \( \tau \). However, for fixed \( \tau \) and \( \epsilon \), increasing \( k \) eventually increases \( N_{\text{exp}} \). 
We therefore aim to estimate the optimal value of \( k \) by minimizing the function:
\[
f(k) = \frac{L \cdot 5^{2k} \cdot (L \tau)^{1 + \frac{1}{2k}}}{\epsilon^{\frac{1}{2k}}},
\]
where \( L, \tau, \epsilon > 0 \) and \( k >0 \).
Taking the natural logarithm of \( f(k) \), we get:
\[
\ln f(k) = \ln L + 2k \ln 5 + \left(1 + \frac{1}{2k}\right) \ln(L \tau) - \frac{1}{2k} \ln \epsilon.
\]
Letting
\[
C := \ln\left( \frac{L \tau}{\epsilon} \right),
\]
we simplify:
\[
\ln f(k) = \ln (L^2 \tau) + 2k \ln 5 + \frac{C}{2k}.
\]
To find the extremum, we compute the derivative with respect to \( k \):
\[
\frac{d}{dk} \ln f(k) = 2 \ln 5 - \frac{C}{2k^2}.
\]
Setting this to zero yields the critical point:
\[
2 \ln 5 = \frac{C}{2k^2} \quad \Rightarrow \quad k^* = \frac{1}{2} \sqrt{ \frac{C}{\ln 5} }.
\]
Substituting back for \( C \), we get:
\[
k^* = \frac{1}{2} \sqrt{ \log_5 \left( \frac{L \tau}{\epsilon} \right) }.
\]
To verify this is indeed a minimum, observe that the second derivative is
\[
\frac{d^2}{dk^2} \ln f(k) = \frac{C}{k^3} > 0 \quad \text{for all } k > 0,
\]
which confirms that \( k^* \) minimizes \( f(k) \). Thus, \( f(k) \) is strictly decreasing on \( (0, k^*) \) and increasing on \( (k^*, \infty) \).
Since \( k \) must be a positive integer, a natural heuristic choice is:
\[
k = \text{round} \left[ \frac{1}{2} \sqrt{ \log_5 \left( \frac{L \tau}{\epsilon} \right) + 1 } \right].
\]
 We now use this to obtain a bound for \( N_{\text{exp}} \) that is independent of \( k \). Recall that
\[
C = \log_5\left( \frac{L \tau}{\epsilon} \right), \quad \text{so} \quad \epsilon = \frac{L \tau}{5^C}.
\]
Substituting into the original expression gives:
\begin{align*}
N_{\text{exp}} 
&\leq L \cdot 5^{2k} \cdot (L \tau)^{1 + \frac{1}{2k}} \cdot \left( \frac{L \tau}{5^C} \right)^{-\frac{1}{2k}} \\
&= L^2 \tau \cdot 5^{2k + \frac{C}{2k}}.
\end{align*}
Since \( k \) is chosen such that
\[
\left| k - \frac{1}{2} \sqrt{C + 1} \right| \leq \frac{1}{2},
\]
we have the bounds
\[
\frac{1}{2} \sqrt{C + 1} - \frac{1}{2} \leq k \leq \frac{1}{2} \sqrt{C + 1} + \frac{1}{2},
\]
and hence,
\[
2k \leq \sqrt{C + 1} + 1.
\]
Using the lower bound on \( k \),
\[
\frac{C}{2k} \leq \frac{C}{\sqrt{C + 1} - 1}.
\]
Rationalizing the denominator:
\[
\frac{C}{\sqrt{C + 1} - 1} = \frac{C(\sqrt{C + 1} + 1)}{(\sqrt{C + 1} - 1)(\sqrt{C + 1} + 1)} = \sqrt{C + 1} + 1.
\]
Adding both components, we obtain:
\[
2k + \frac{C}{2k} \leq 2\sqrt{C + 1} + 2.
\]
Substituting this into our earlier expression for \( N_{\text{exp}} \), we get:
\[
N_{\text{exp}} \leq L^2 \tau \cdot 5^{2\sqrt{C + 1} + 2} = 25 \cdot L^2 \tau \cdot 5^{2\sqrt{C + 1}}.
\]
Finally, substituting back the expression for \( C \), we arrive at a bound for $N_{\text{exp}}$ which is independent of $k$:
\[
N_{\text{exp}} \leq 25 \cdot L^2 \tau \cdot 5^{2\sqrt{ \log_5\left( \frac{L \tau}{\epsilon} \right) + 1 }}.
\]
\subsection{Application to the Ising Hamiltonian}
Let us apply the new bound to the 1D Ising Hamiltonian for \( n \) particles:
\[
H = \sum_{i=1}^{n-1} Z_i Z_{i+1} + \sum_{i=1}^n X_i.
\]
First, we determine the number of terms \( L \) in the Hamiltonian. There are \( n - 1 \) two-qubit interaction terms \( Z_i Z_{i+1} \) and \( n \) single-qubit field terms \( X_i \), giving:
\[
L = (n - 1) + n = 2n - 1.
\]
Next, we evaluate \( \tau = \max_j \|H_j\| \cdot t \), where \( H_j \) denotes an individual term in the Hamiltonian and \( t \) is the total simulation time. Each \( H_j \) is a tensor product of Pauli and identity matrices, all of which are Hermitian with spectral norm 1 (i.e., their largest eigenvalue in absolute value is 1). Thus, \( \|H_j\| = 1 \) for all \( j \). 
We take \( t = n \), a common choice in the literature, since the system must evolve for a time proportional to \( n \) for information to propagate across the entire chain, in accordance with the Lieb-Robinson bound \cite{lieb1972finite}. Therefore,
\[
\tau = 1 \cdot n = n.
\]
We set the simulation error tolerance to \( \epsilon = 10^{-3} \). Substituting into the expression for the number of exponentials required for the simulation, we obtain:
\[
N_{\text{exp}} \leq 25 \cdot (2n - 1)^2 \cdot n \cdot 5^{2\sqrt{ \log_5\left( \frac{(2n - 1) \cdot n}{10^{-3}} \right) + 1 }}.
\]
This simplifies to:
\[
N_{\text{exp}} \leq 25 \cdot (2n - 1)^2 \cdot n \cdot 5^{2\sqrt{ \log_5\left( 10^3 \cdot n(2n - 1) \right) + 1 }}.
\]
Using logarithmic identities:
\[
\log_5\left( 10^3 \cdot n(2n - 1) \right) = \log_5(10^3) + \log_5(n) + \log_5(2n - 1),
\]
and noting that \( \log_5(10^3) = \frac{3}{\log_{10}(5)} \), we can write the exponent explicitly as:
\[
2\sqrt{ \frac{3}{\log_{10}(5)} + \log_5(n) + \log_5(2n - 1) + 1 }.
\]
Thus, the exact symbolic expression for the number of exponentials required to simulate the time evolution of the 1D Ising model on \( n \) qubits for time \( t = n \) with error at most \( \epsilon = 10^{-3} \) is:
\[
N_{\text{exp}} \leq 25 \cdot (2n - 1)^2 \cdot n \cdot 5^{2\sqrt{ \frac{3}{\log_{10}(5)} + \log_5(n) + \log_5(2n - 1) + 1 }}.
\]
Finally, we plot the bound on the number of exponentials against the system size on Figure \ref{fig:ising}:

\begin{figure}[H]
    \centering
    \includegraphics[width=\textwidth]{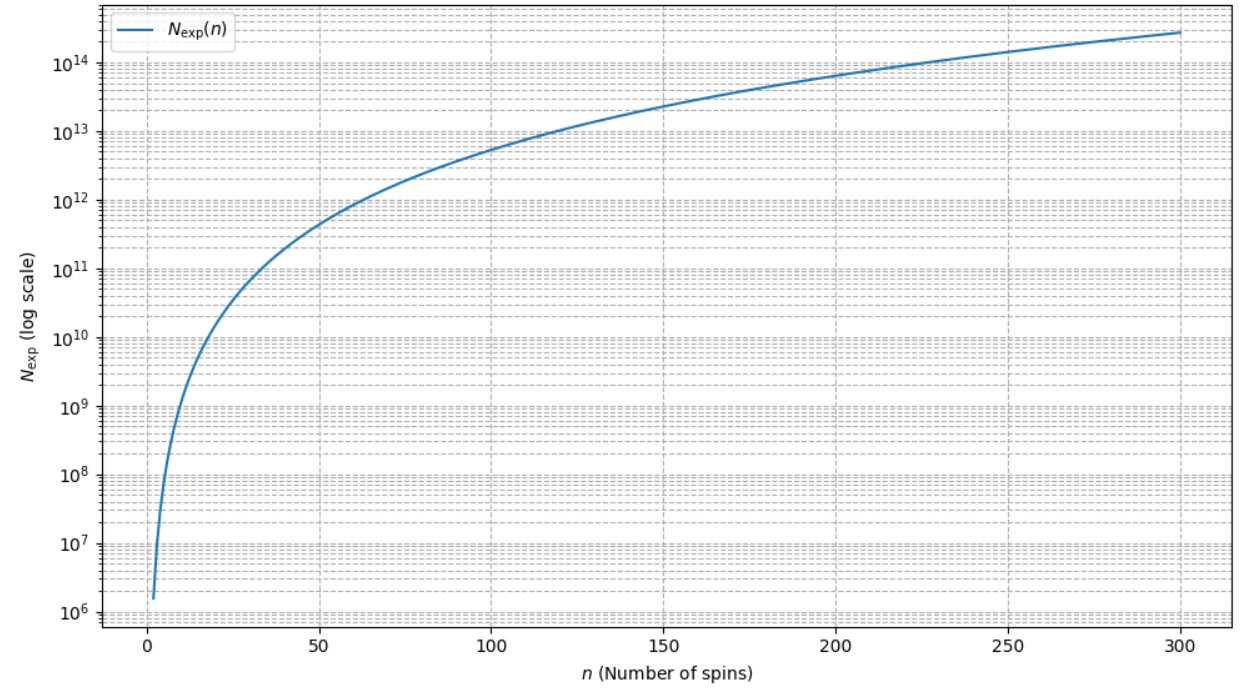}
    \caption{Scaling of \( N_{\text{exp}} \) bound for 1D Ising Model Simulation}
    \label{fig:ising}
\end{figure}

\section{Numerical Validation of the Properties of Suzuki's Product Formulas}
For the purpose of numerical validation, we consider the three matrices
\[
A = \begin{bmatrix}
2.20 & 6.90 \\
4.20 & 6.66
\end{bmatrix}, \quad
B = \begin{bmatrix}
1.10 & 6.90 \\
0 & 3.33
\end{bmatrix}, \quad
C = \begin{bmatrix}
1.10 & 0 \\
4.20 & 3.33
\end{bmatrix},
\]
which satisfy \( A = B + C \), but \( BC \ne CB \).
\\\\
The first experiment confirms the theoretical scaling behavior of the error with respect to the evolution time \( t \). For each product formula order \( k \in \{2, 4, 6\} \), we apply a single-step approximation \( S_{2k}(t) \) to approximate \( e^{At} \). The approximation is evaluated at 10 values of \( t \) logarithmically spaced in the interval \( [10^{-2}, 10^{-1}] \), meaning the logarithms of the values are evenly spaced. The relative spectral norm error is then plotted against \( t \) on a logarithmic scale in Figure \ref{fig:error-vs-time}.
\begin{figure}[H]
    \centering
    \includegraphics[width=\textwidth]{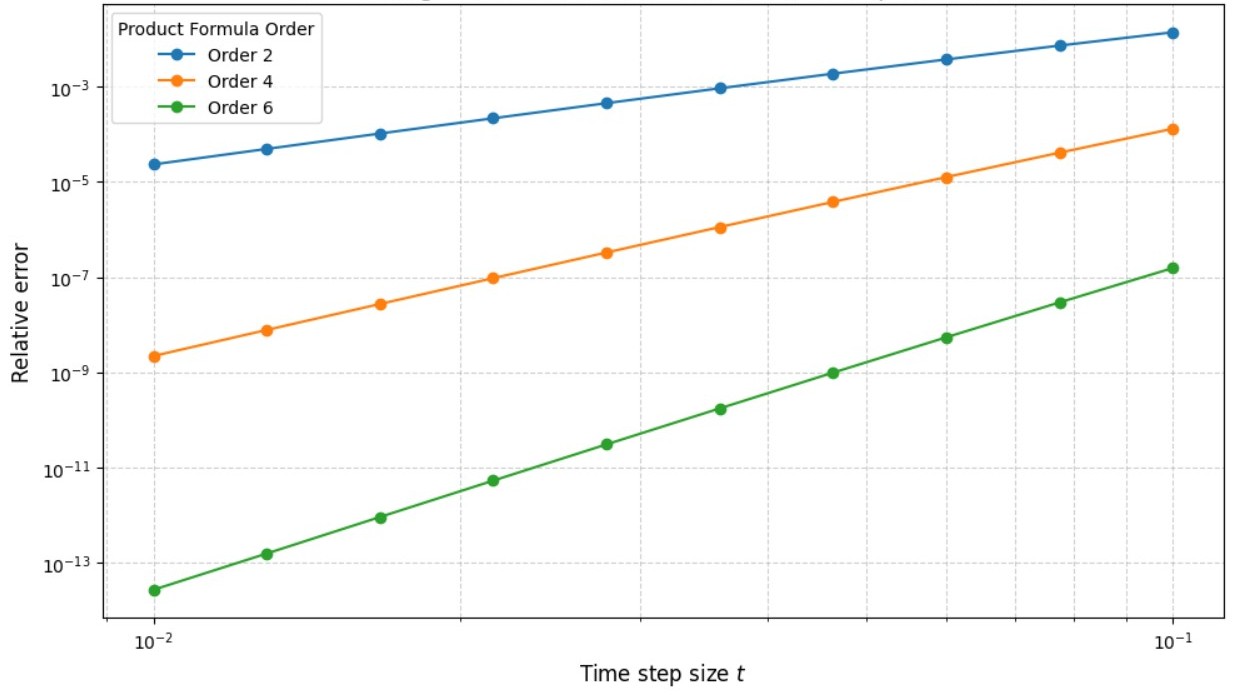}
    \caption{Error scaling of Suzuki product formulas with respect to time}
    \label{fig:error-vs-time}
\end{figure} This plot verifies the expected convergence behavior: since the error scales as \( \mathcal{O}(t^{2k+1}) \), the log-log plot of error versus time exhibits a slope approximately equal to \( 2k+1 \), matching the theoretical prediction.
\\\\
In the second experiment, we assess the accuracy and efficiency of Suzuki product formulas of orders \( 2 \), \( 4 \), \( 6 \) and \( 8 \) in approximating \( e^A \) over a fixed time interval of length 1. The interval is subdivided into \( m \) equal steps, where \( m \) ranges over a set of 20 logarithmically spaced values tailored to each order such that, at the largest \( m \), the relative error is approximately the same across all orders. The computational cost is measured by the total number of matrix exponential evaluations, which equals \( m \cdot (2 \cdot 5^{k-1} + 1) \), since each application of \( S_{2k} \) involves \( 2 \cdot 5^{k-1} + 1 \) exponentials (with \( L = 2 \) in this case). The relative spectral norm error is then plotted against the computational cost on a logarithmic scale in Figure \ref{fig:error-vs-cost}.
\\\\
\begin{figure}[H]
    \centering
     \includegraphics[width=\textwidth]{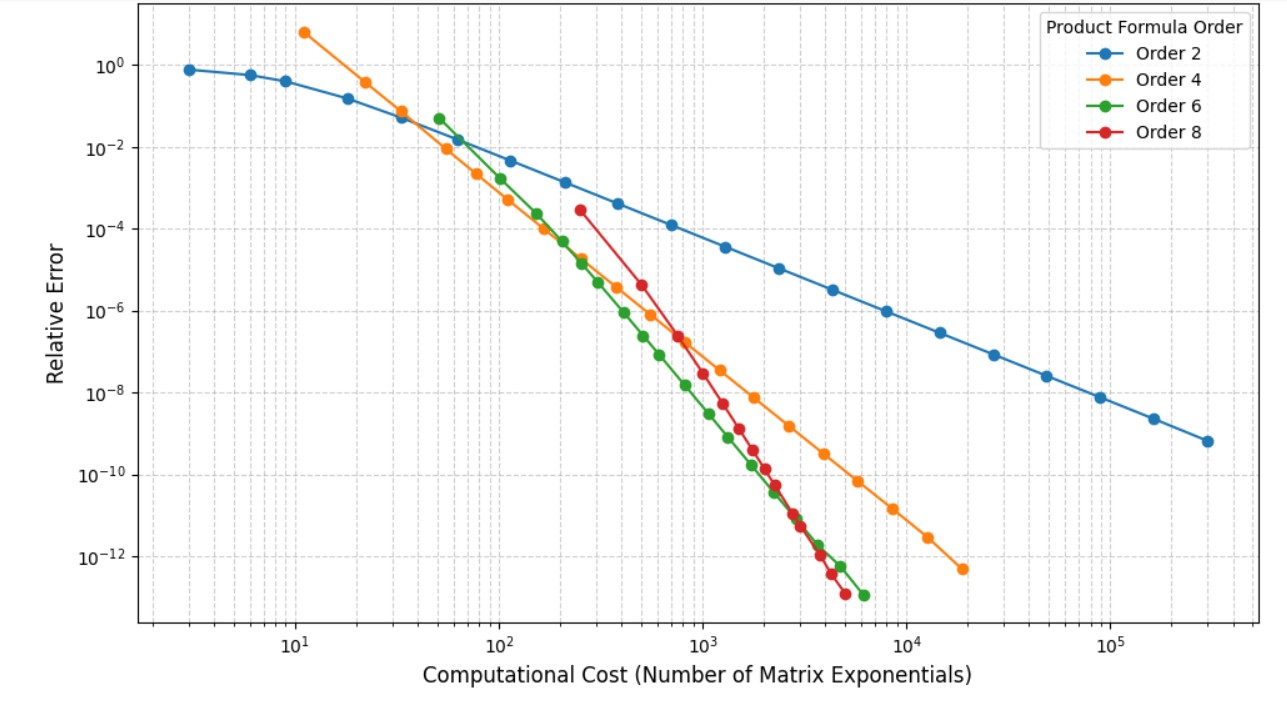}
    \caption{Error vs computational cost for Suzuki product formulas}
    \label{fig:error-vs-cost}
\end{figure} 
The plot demonstrates that, for sufficiently small error tolerance \( \epsilon \), higher-order formulas achieve said precision at reduced computational cost. This confirms the theoretical advantage of higher-order Suzuki product formulas.
\\\\
To evaluate the tightness of the theoretical bound provided by Theorem \ref{nexp}, we compare the predicted and observed number of time steps \( m \) required to achieve a prescribed error tolerance \( \epsilon \in [10^{-6}, 10^{-3}] \). Specifically, we consider ten values of \( \epsilon \), logarithmically spaced between \( 10^{-3} \) and \( 10^{-6} \). For each value of \( \epsilon \), we compute two quantities:
\\\\
First, the theoretical step count \( m_{\text{theory}} \) is derived from Theorem~\ref{nexp} and is given by
\[
m_{\text{theory}} = \left\lceil \frac{(2L5^{k-1}\tau)^{1 + \frac{1}{2k}}}{\epsilon^{1/2k}} \right\rceil,
\]
where \( k = 2 \) for the fourth-order product formula and $L=2$ for this example.
\\\\
Second, the empirical step count \( m_{\text{empirical}} \) is determined by computing the approximation \( S_{2k}(1/m)^m \) for increasing values of \( m \), until the absolute spectral norm error with respect to the exact exponential \( e^A \) falls below the desired tolerance \( \epsilon \).
\\\\
The resulting log-log plot in Figure \ref{fig:theory-vs-empirical} compares \( m_{\text{theory}} \) and \( m_{\text{empirical}} \) as functions of \( \epsilon \).

\begin{figure}[H]
    \centering
    \includegraphics[width=\textwidth]{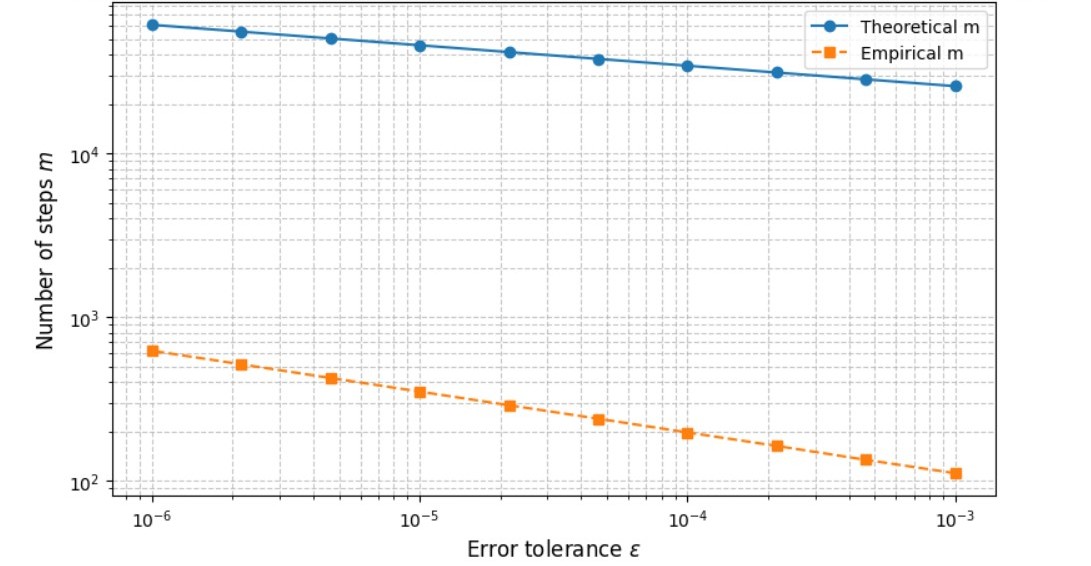}
    \caption{Comparison of theoretical and empirical number of time steps \( m \) needed to achieve error tolerance \( \epsilon \) using the fourth-order Suzuki product formula}
    \label{fig:theory-vs-empirical}
\end{figure}
As shown in the figure, the theoretical estimate from Theorem~\ref{nexp} consistently overestimates the number of steps required by several orders of magnitude. This suggests that the bound is loose in practice. Developing tighter and more realistic complexity bounds for Suzuki's product formulas remains an open and active area of research as we will discuss in the next chapter.
\pagebreak

\chapter{Advances and Open Problems in Hamiltonian Simulation}

In this chapter, we provide a concise overview of the current state of Hamiltonian simulation, focusing primarily on product formula methods. Subsequently, we discuss alternative approaches and highlight some of the active research directions in this field, to the best of the author's knowledge.
\\\\
In the previous section, we saw that our bound for the number of steps \( m \) required to achieve precision \(\epsilon\), discussed in the analysis in \cite{berry2007efficient}, was very loose (off by several orders of magnitude). This looseness also extends to our bounds on the number of exponentials \( N_{\text{exp}} \) and, consequently, the number of quantum gates required to simulate a certain system using product formulas. Childs et al. \cite{childs2018toward} identified this as a major challenge in using product formulas. Although subsequent work has refined these bounds for specific product formulas, such as Suzuki's of a fixed order, or under additional assumptions about the Hamiltonian, it was only recently that a general tight bound was established for systems with nearest-neighbor and power-law interactions, applicable to arbitrary product formulas \cite{childs2021theory}. While the derivation of this Trotter error theory is considerably more involved, the resulting error bounds are concise and computationally efficient. Nevertheless, the tightness of these bounds for general systems remains an open question.
\\\\
Another intriguing aspect is the exploration of product formulas beyond Suzuki's construction. Suzuki's method provides a systematic way to obtain product formulas of arbitrary order; however, the number of exponentials grows exponentially with the order. This naturally raises the question of whether more efficient families of product formulas exist. One notable example is Yoshida's method, which can yield product formulas with fewer exponentials. Similar to Suzuki's formulas, these are constructed as products of \( S_2(t) \) for various time intervals, but unlike the fractal approach, there is no explicit analytic form for the higher-order formulas. Instead, one must solve a complex system of nonlinear polynomial equations. A detailed discussion of Yoshida's product formulas and other alternatives in Hamiltonian simulation is available in \cite{morales2025selection}. Notably, product formulas for Hamiltonian simulation closely resemble classical splitting methods for differential equations. Thus, techniques from classical numerical analysis could potentially be adapted for quantum simulation. An overview of such methods is provided in \cite{blanes2024splitting}.
\\\\
Product formulas also enable simulation of a broader class of Hamiltonians beyond \( k \)-local ones. In particular, it has been shown that sparse Hamiltonians are efficiently simulatable \cite{berry2007efficient}. The key idea is to decompose \( H \) into a sum of polynomially many Hamiltonians \( H_m \), each of which is 1-sparse and then prove that 1-sparse Hamiltonians can be efficiently simulated. A product formula is then used to simulate the full Hamiltonian. Apart from the sparsity assumption, it is necessary to assume the Hamiltonian is row-computable, meaning that one can efficiently determine the positions and values of non-zero entries, often represented using an oracle model. This essentially encodes the structured nature of the Hamiltonian, since its size grows exponentially with the system size \( n \), making direct scanning impractical. The current literature typically works directly with sparse Hamiltonians under the oracle model.
\\\\
With Suzuki's product formulas, we can achieve arbitrarily close to linear scaling in \( t \) and \( \frac{1}{\epsilon} \). It is natural to ask whether this scaling can be further improved. Lower bounds are well known: linear scaling in \( t \) is optimal \cite{berry2007efficient} and polylogarithmic scaling in \( \frac{1}{\epsilon} \) is optimal \cite{berry2014exponential}. However, product formulas have not yet reached these optimal scalings. To achieve this, one must consider methods beyond product formulas. Namely, proper quantum algorithms that typically require significant background to understand. For instance, Berry et al. \cite{berry2015simulating} introduced an approach that achieves polylogarithmic scaling in precision by truncating the Taylor series expansion of the evolution operator and implementing this truncation via a quantum algorithm, despite the truncation not being unitary. Later, Low and Chuang \cite{low2017optimal} developed an algorithm for Hamiltonian simulation that is optimal with respect to all parameters. Despite their theoretical advantages, product formulas remain competitive in practice due to their typically low empirical error and simplicity. Often, the required precision does not necessitate polylogarithmic scaling, diminishing this theoretical advantage. Moreover, product formulas require no ancillary qubits; simulating a system of \( n \) particles straightforwardly requires only \( n \) qubits. In contrast, more advanced algorithms often demand auxiliary qubits, complex state preparation and classical post-processing, which can impact practical performance. A comparative discussion of these algorithms in practice is provided in \cite{childs2018toward}.
\\\\
Finally, it is important to note that this work has focused exclusively on time-independent Hamiltonians. The time-dependent case is generally much more challenging. The analytical solution of the time-dependent Schr\"{o}dinger equation is non-trivial and is typically treated in advanced quantum mechanics texts. While progress has been made, the repertoire of approaches for time-dependent Hamiltonian simulation remains relatively limited compared to the time-independent case. A treatment of product formulas for time-dependent Hamiltonians can be found in \cite{wiebe2011simulating}.

\newpage

\appendix
\chapter{Numerical Validation Code}
\begin{lstlisting}[style=custompython, caption={Common imports and matrix definitions}]
import math
import numpy as np
from numpy.linalg import matrix_power
from scipy.linalg import expm, norm
from scipy.stats import linregress
import matplotlib.pyplot as plt


A = np.array([
    [2.2, 6.9],
    [4.20, 6.66]
])

print("Matrix A:\n", A)

exp_A = expm(A)
print("\nMatrix exponential e^A:\n", exp_A)

half_diag = 0.5 * np.diag(np.diag(A))

B = np.triu(A, k=1) + half_diag

C = np.tril(A, k=-1) + half_diag

print("\nMatrix B (upper triangular):\n", B)
print("\nMatrix C (lower triangular):\n", C)

commutator = B @ C - C @ B

print("\nCommutator [B, C]:\n", commutator)
print("\nNorm of the commutator:\n", norm(commutator, 2))

exp_B = expm(B)
exp_C = expm(C)

print("\nMatrix exponential e^B:\n", exp_B)
print("\nMatrix exponential e^C:\n", exp_C)
\end{lstlisting}
\newpage
\begin{lstlisting}[style=custompython, caption={Functions defining the second- and higher-order Suzuki product formulas}]
def s(k):
    """
    Compute the Suzuki coefficient s_k for the 2k-th order decomposition.

    Parameters:
    - k (int): Order of the decomposition (must be even and >= 2)

    Returns:
    - float: Suzuki coefficient s_k
    """
    if k <= 2 or k % 2 != 0:
        raise ValueError("k must be greater than 2 and even for s_k to be meaningful")

    # Compute s_k = 1 / (4 - 4^{1 / (k - 1)})
    root = 4 ** (1 / (k - 1))
    denominator = 4 - root
    return 1 / denominator

def S2(B, C):
    """
    Compute the second-order Suzuki-Trotter formula.

    Approximates exp(B + C) ~= exp(B/2) · exp(C) · exp(B/2)

    Parameters:
    - B, C (np.ndarray): Matrices to exponentiate

    Returns:
    - np.ndarray: Approximation to exp(B + C)
    """
    return expm(B / 2.0) @ expm(C) @ expm(B / 2.0)

def S2k(B, C, k):
    """
    Compute the 2k-th order Suzuki-Trotter decomposition recursively.

    Parameters:
    - B, C (np.ndarray): Matrices to exponentiate
    - k (int): Desired even order of the decomposition (k must be even, >= 2)

    Returns:
    - np.ndarray: Higher-order approximation to exp(B + C)
    """
    if k % 2 != 0 or k < 2:
        raise ValueError("k must be an even integer >= 2")
    if k == 2:
        return S2(B, C)

    sk = s(k)
    # Recursive structure of the 2k-th order decomposition
    term = S2k(sk * B, sk * C, k - 2)
    middle = S2k((1 - 4 * sk) * B, (1 - 4 * sk) * C, k - 2)
    return term @ term @ middle @ term @ term

def num_exp(k, m):
    """
    Returns the number of matrix exponentials required
    for a 2k-th order Suzuki decomposition over m time steps.

    Parameters:
    - k (int): Order of the decomposition (must be even)
    - m (int): Number of steps in time discretization

    Returns:
    - int: Estimated number of matrix exponentials
    """
    if k % 2 != 0 or k < 2:
        raise ValueError("k must be an even integer >= 2")

    return m * (2 * 5**(k // 2 - 1) + 1)
\end{lstlisting}
\newpage
\begin{lstlisting}[style=custompython, caption={Generates Figure 3.3: Error Scaling with respect to Time}]
orders = [2, 4, 6]

t_values = np.logspace(-2, -1, 10)

errors = {k: [] for k in orders}

for k in orders:
    for t in t_values:
        eAt = expm(A * t)

        approx = S2k(B * t, C * t, k)

        error = norm(approx - eAt, 2) / norm(eAt, 2)
        errors[k].append(error)

# --- Plot: Error vs Time Step Size (log-log) ---

plt.figure(figsize=(10, 6))

for k in orders:
    plt.loglog(t_values, errors[k], 'o-', label=f'Order {k}')

plt.xlabel('Time step size $t$', fontsize=12)
plt.ylabel('Relative error', fontsize=12)
plt.title('Error scaling of Suzuki Product Formulas with respect to time t', fontsize=14)
plt.legend(title='Product Formula Order')
plt.grid(True, which='both', linestyle='--', alpha=0.6)
plt.tight_layout()
plt.show()

# --- Log-log Linear Regression to Estimate Slope ---

log_t = np.log(t_values)

print("Estimated convergence rate (slope) for each order:\n")
for k in orders:
    log_err = np.log(errors[k])
    slope, intercept, r_value, _, _ = linregress(log_t, log_err)

    print(f"  Order {k}: slope ~= {slope:.4f}, R^2 = {r_value**2:.4f}")
\end{lstlisting}
\newpage
\begin{lstlisting}[style=custompython, caption={Generates Figure 3.4: Error vs Computational Cost}]
def get_m_values(max_m, num_points=20):
    # Generate log-spaced points between 1 and max_m
    ms = np.logspace(0, np.log10(max_m), num=num_points)
    ms_rounded = np.unique(np.round(ms).astype(int))  # round and remove duplicates
    return ms_rounded

orders = [2, 4, 6, 8]

# Define max_m for each order
m_settings = {
    2: 100000,  
    4: 1700,
    6: 120,
    8: 20,
}

errors = {k: [] for k in orders}
costs = {k: [] for k in orders}

for k in orders:
    max_m = m_settings[k]
    m_values = get_m_values(max_m)
    for m in m_values:
        step_approx = S2k(B / m, C / m, k)
        approx = matrix_power(step_approx, m)

        error = norm(approx - exp_A, 2) / norm(exp_A, 2)
        errors[k].append(error)

        cost = num_exp(k, m)
        costs[k].append(cost)

# --- Plotting Error vs Computational Cost ---

plt.figure(figsize=(10, 6))

for k in orders:
    plt.loglog(costs[k], errors[k], 'o-', label=f'Order {k}')

plt.xlabel('Computational Cost (Number of Matrix Exponentials)', fontsize=12)
plt.ylabel('Relative Error', fontsize=12)
plt.title('Error vs Computational Cost for Suzuki Product Formulas', fontsize=14)
plt.legend(title="Product Formula Order")
plt.grid(True, which="both", linestyle="--", alpha=0.6)
plt.tight_layout()
plt.show()
\end{lstlisting}
\newpage
\begin{lstlisting}[style=custompython, caption={Generates Figure 3.5: Empirical vs Theoretical Step Count}]
order = 4
epsilons = np.logspace(-3, -6, 10)

m_theory_list = []
m_empirical_list = []

L = 2
tau = max(norm(B, 2), norm(C, 2))
for eps in epsilons:
    # --- Theoretical minimum number of steps m to achieve error <= eps ---
    factor = (2 * L * 5**(order - 1) * tau)**(1 + 1 / (2 * order))
    m_theory = math.ceil(factor / (eps**(1 / (2 * order))))
    m_theory_list.append(m_theory)

    # --- Empirical determination of minimal r ---
    for m in range(1, 1000):
        s2k_step = S2k(B / m, C / m, order)
        approx = matrix_power(s2k_step, m)

        err = norm(approx - exp_A, 2)
        if err <= eps:
            m_empirical_list.append(m)
            break
    else:
        m_empirical_list.append(np.nan)

# --- Plot results ---
plt.figure(figsize=(8, 5))
plt.loglog(epsilons, m_theory_list, 'o-', label='Theoretical m')
plt.loglog(epsilons, m_empirical_list, 's--', label='Empirical m')
plt.xlabel('Error tolerance $\\epsilon$', fontsize=12)
plt.ylabel('Number of steps $m$', fontsize=12)
plt.title('Comparison of theoretical and empirical step counts for 4th-order Suzuki formula', fontsize=14)
plt.legend()
plt.grid(True, which='both', linestyle='--', alpha=0.7)
plt.tight_layout()
plt.show()
\end{lstlisting}
\pagebreak

\addcontentsline{toc}{chapter}{Bibliography}
\bibliographystyle{plain}
\nocite{*}
\bibliography{main}% Crear archivo bib con referencias en bibtex

\newpage

\end{document}